\documentclass[11pt]{article}
\usepackage{authblk}
\usepackage{natbib}
\usepackage[margin=1.1in]{geometry}
\usepackage{amsmath}
\usepackage{amssymb}
\usepackage{amsthm}
\usepackage{enumerate}
\usepackage{enumitem}
\usepackage[usenames,dvipsnames,svgnames,table]{xcolor}
\usepackage{tikz}
\usepackage{tikz-cd}
\usepackage{array}
\usetikzlibrary{arrows,shapes,shapes.misc,positioning,calc}
\usepackage{hyperref}
\usepackage[capitalise]{cleveref}

\newtheorem{proposition}{Proposition}[section]
\newtheorem{theorem}[proposition]{Theorem}

\newtheorem{lemma}[proposition]{Lemma}

\theoremstyle{definition}

\newtheorem{remark}[proposition]{Remark}
\newtheorem{example}[proposition]{Example}

\newcommand*\circled[1]{\tikz[baseline=(char.base)]{\node[shape=circle,draw,inner sep=1.5pt] (char) {$#1$};}}

\newcommand*\mybox[1]{\tikz[baseline=(char.base)]{\node[shape=rectangle,draw,inner sep=1.5pt] (char) {$#1$};}}

\newcommand{\B}{\mathbb{B}}

\newcommand{\cs}{C} % cell set
\newcommand{\compl}{\mathsf{c}}
\newcommand{\tf}{\tilde{f}}

\newcommand{\G}{\mathcal{G}}
\renewcommand{\k}{\kappa}
\renewcommand{\H}{\mathcal{H}}

\renewcommand{\P}{\mathcal{P}}
\renewcommand{\S}{\mathcal{S}}

\newcommand{\NN}[1]{\{1,\dots,{#1}\}}
\newcommand{\0}{\mathbf{0}}
\newcommand{\1}{\mathbf{1}}
\renewcommand\ss[1]{\substack{#1}}

\title{Boolean analysis of lateral inhibition}
\author{Elisa Tonello}
\author{Heike Siebert}
\affil{Discrete Biomathematics Group,\\Department of Mathematics and Computer Science,\\Freie Universit\"{a}t, Berlin, Germany}
\date{}

\begin{document}

\maketitle

\begin{abstract}
We study Boolean networks which are simple spatial models of the highly conserved Delta-Notch system. The models assume the inhibition of Delta in each cell by Notch in the same cell, and the activation of Notch in presence of Delta in surrounding cells. We consider fully asynchronous dynamics over undirected graphs representing the neighbour relation between cells. In this framework, one can show that all attractors are fixed points for the system, independently of the neighbour relation, for instance by using known properties of simplified versions of the models, where only one species per cell is defined. The fixed points correspond to the so-called fine-grained ``patterns'' that emerge in discrete and continuous modelling of lateral inhibition. We study the reachability of fixed points, giving a characterisation of the trap spaces and the basins of attraction for both the full and the simplified models. In addition, we use a characterisation of the trap spaces to investigate the robustness of patterns to perturbations. The results of this qualitative analysis can complement and guide simulation-based approaches, and serve as a basis for the investigation of more complex mechanisms.

\end{abstract}

\section{Introduction}

Lateral inhibition is a signalling mechanism that can induce the differentiation of cells
in developing tissues~\cite{sternberg1993falling,collier1996pattern}.
Transmembrane receptors of the \emph{Notch} family,
and the product of the \emph{Delta} gene acting as ligand,
have been identified as possible actors in this spatial differentiation phenomenon.
In its simplest form, lateral signalling causes cells to experience two different types of fate,
a \emph{primary} and a \emph{secondary} fate, corresponding to low and high levels of Notch.
The stimulation of Notch by the ligand Delta from adjacent cells induces the cell to
assume the secondary fate; high Notch activity, on its part, causes inhibition of Delta,
which promotes the lateral differentiation to the primary fate.
The result of this feedback is the emergence of spatial patterns of cells
of primary and secondary type.

Several mathematical models have been proposed for the investigation of the Delta-Notch pattern-generating mechanism
(e.g.,~\cite{collier1996pattern,webb2004oscillations,goessler2011component}).
In~\cite{collier1996pattern}, the authors choose a spatially-discretised model,
with dynamics described by systems of differential equations.
Their analysis highlights in particular that, when the feedback between cells is strong enough,
patterns of alternating high and low levels of Notch emerge, that do not depend on specific forms
for the regulations of species production, and on the parameters.
It is therefore natural to investigate whether the basic principles underlying the Delta-Notch system can
be identified also in a purely qualitative, Boolean framework.
Discrete models can often capture ``rules'' that govern properties of larger classes of systems
(see for instance~\cite{thomas1990biological,thomas2001multistationarity,albert2003topology}).
In this work we consider simple Boolean models, where only two variables, representing Notch and Delta,
are defined in each cell. The level of Delta in a cell is uniquely determined by the level of Notch in the same cell,
whereas multiple formulations for the dependence of Notch on the levels of Delta in neighbour cells
can be considered. In this work we focus on the assumption that the presence of one neighbour cell
with high level of Delta is sufficient for the activation of Notch.
In addition, we consider a simplified version of these models, where only one variable per cell
is defined, which inhibits variables in neighbouring cells.
The models we consider have already been analysed with computational approaches
for some specific network geometries~\cite{mendes2013composition,varela2018stable}.
Here we investigate properties that hold independently of the neighbour structure of the cells.

By considering the reduced, Boolean lateral inhibition models with one variable per cell,
one can use properties of threshold networks (\cite{goles1985decreasing}) to show that
all attractors for the asynchronous dynamics are fixed points.
These stable configurations or \emph{patterns} that emerge from the simple spatial interaction structure
we consider exhibit the same alternation of cells with low and high Notch level
observed in the ODE models of~\cite{collier1996pattern}.
The alternation requires each cell with low Notch to be surrounded by cells with high Notch,
and all cells with high Notch to have at least one neighbour with high Delta.
In other words, the Delta-Notch patterns are defined by the minimal vertex covers,
or maximal independent vertex sets, of the graph describing the neighbour relations (\cite{veliz2012computation}).
We ask which patterns can be reached under fully asynchronous dynamics from homogeneous initial conditions,
and show that all of them can be obtained (\cref{thm:homogeneous}).
We then provide a characterisation of the trap spaces
of the systems, that is, subspaces that the dynamics can not leave,
for both the two-variable and one-variable dynamics (\cref{thm:trapspaces,thm:trap-spaces-N}).
We give in addition a characterisation of the fixed points that are reachable from
a given initial condition, identifying some differences between the full and reduced models (\cref{thm:all-fixed,thm:reach-N}).
Determining the trap spaces allows us to study how patterns respond to perturbations.
In particular, we show that, for the models we consider, changes can not propagate beyond cells at distance two (\cref{sec:robustness}).
The spatial interaction structure consisting of internal inhibition and neighbour activation
can be thought of as a core model for lateral inhibition,
and it is not straightforward to determine which of the properties
we present here are preserved in larger or more complex models.
We discuss a generalisation of the models and additional open questions in~\cref{sec:generalisation,sec:conclusion}.

\section{Background}

In this section we set some notations and give some basic definitions.
We write $\B$ for the set $\{0,1\}$.
For $a\in\B$, we write $\bar{a}$ for $1-a$,
and given $n\in\mathbb{N}$, $I\subseteq\NN{n}$ and $x\in\B^n$, we denote by $\bar{x}^I$
the element with $\bar{x}^I_i=1-x_i$ for $i\in I$, and $\bar{x}^I_i=x_i$ otherwise.
If $I$ consists of only one element $i$, then we write $\bar{x}^i$ for $\bar{x}^I$,
and if $I=\NN{n}$, we write $\bar{x}$ for $\bar{x}^I$.
In the examples, we will simplify the notation and denote elements of $\B^n$
as sequences of $0$s and $1$s (e.g, we will write $100011$ for $(1,0,0,0,1,1)$).
We will also write $\0$ and $\1$ for the elements of $\B^n$ with all components
equal to $0$ or $1$ respectively.

A Boolean network on $n$ variables, with $n\in\mathbb{N}$, is defined by a function $f\colon\B^n\to\B^n$.
The set $\B^n$ is also called the state space of the Boolean network.
The dynamical system given by the iteration of $f$ is called \emph{synchronous dynamics}.
In biological contexts, the \emph{asynchronous dynamics} or \emph{asynchronous state transition graph}
of a Boolean network is often the object of interest.
The asynchronous dynamics $AD_f$ of $f$ is defined as the graph with vertex set $\B^n$,
and edge set $\{(x,\bar{x}^i) | f_i(x)\neq x_i, i=1,\dots,n\}$.

The interaction graph $G_f$ of a Boolean network $f$ is the labelled multi-digraph with vertex set $\NN{n}$
and admitting an edge $(j,i)$ with sign $s\in\{-1,1\}$ if $s=(f_i(\bar{x}^j)-f_i(x))(\bar{x}^j_j-x_j)\neq 0$
for some $x\in\B^n$.

Given $x\in\B^n$ and $I\subseteq\NN{n}$, we write $x[I]=\{y\in\B^n\ | \ y_i=x_i \ \forall i\notin I\}$.
We call $x[I]$ a \emph{subspace} of $\B^n$.
In the examples, we denote a subspace $x[I]$ using $x$
and replacing the elements $x_i$ with $i\in I$ with the symbol ``$\star$''.
For instance, $001\star\star1$ will denote the subspace of $\B^6$ with $I=\{4,5\}$ and $x_1=x_2=0$, $x_3=x_6=1$.

A set $A\subseteq \B^n$ is called a \emph{trap set} for a Boolean network $f$
if, for all $x\in A$, if $y$ is a successor for $x$ in the asynchronous dynamics,
then $y\in A$.
A trap set that is also a subspace is called a \emph{trap space}.
For each state $x\in\B^n$ there exists a unique minimal (with respect to set inclusion) trap space containing $x$,
which we denote by $\k(x)$.
Minimal trap sets are called \emph{attractors} for the asynchronous dynamics.
If an attractor consists of a single state, it is called \emph{fixed point} or \emph{steady state},
otherwise it is called a \emph{cyclic attractor}.

Given an attractor $A$, the \emph{(weak) basin of attraction} of $A$
is the set of states $x\in\B^n$ such that there exists a path from $x$ to $A$ in the asynchronous dynamics.
The \emph{strong basin of attraction} of $A$ is the set of states in the basin of attraction of $A$
that do not belong to the basin of attraction of any other attractor $A'\neq A$.

The following result, which can be found in~\cite{naldi2009reduction,pauleve2012static}, relates
properties of Boolean maps to properties of maps with a smaller number of variables.
For simplicity it is stated for the elimination of the $n^{th}$ variable,
but generalises to the elimination of any variable.
\begin{theorem}\label{thm:reduction}
  Consider a map $f\colon\B^n\to\B^n$ and define $\tf\colon\B^{n-1}\to\B^{n-1}$ as
  $\tf_i(x)=f_i(x, f_n(x,0))$ for each $x\in\B^{n-1}$, $i=1,\dots,n-1$.
  If $G_f$ does not admit an edge from $n$ to itself, then:
  \begin{enumerate}[label=(\roman*)]
  \item $x\in\B^{n-1}$ is a fixed point for $\tf$ if and only if
    $(x, f_n(x,0))$ is a fixed point for $f$.
  \item If $AD_{\tf}$ has a path from $x$ to $y$, then
    $AD_f$ has a path from $(x, f_n(x,0))$ to $(y, f_n(y,0))$.
  \end{enumerate}
\end{theorem}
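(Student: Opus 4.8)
The plan is to first extract the concrete meaning of the hypothesis that $G_f$ has no edge from $n$ to itself. By the definition of the interaction graph, the absence of this edge means $(f_n(\bar{x}^n)-f_n(x))(\bar{x}^n_n-x_n)=0$ for every $x\in\B^n$; since the second factor equals $1-2x_n\in\{-1,1\}$ and is therefore never zero, this forces $f_n(\bar{x}^n)=f_n(x)$, i.e.\ $f_n$ does not depend on its $n$th argument. I would record this as the single reusable fact $f_n(x,x_n)=f_n(x,0)$ for all $x\in\B^{n-1}$ and $x_n\in\B$, and abbreviate $g(x):=f_n(x,0)$, so that $\tf_i(x)=f_i(x,g(x))$ for $i<n$.

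For part (i), I would verify both directions by reading off the fixed-point equations. If $x$ is a fixed point of $\tf$, then $f_i(x,g(x))=x_i$ for $i<n$, while $f_n(x,g(x))=f_n(x,0)=g(x)$ by the independence fact; hence all $n$ coordinates of $(x,g(x))$ are fixed by $f$. Conversely, if $(x,g(x))$ is a fixed point of $f$, then its first $n-1$ coordinates give $f_i(x,g(x))=x_i$, which is exactly $\tf_i(x)=x_i$.

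For part (ii), the key idea is to simulate each single asynchronous transition of $\tf$ by a short path of length one or two in $AD_f$. Consider an edge $u\to\bar{u}^i$ of $AD_{\tf}$, so $\tf_i(u)=f_i(u,g(u))\neq u_i$ with $i<n$. Starting from $(u,g(u))$, this same inequality licenses the flip of coordinate $i$ in $AD_f$, reaching $(\bar{u}^i,g(u))$. It then remains to correct the $n$th coordinate from $g(u)$ to $g(\bar{u}^i)$: if these already coincide the simulation is complete, and otherwise the independence fact gives $f_n(\bar{u}^i,g(u))=g(\bar{u}^i)\neq g(u)$, so the $n$th coordinate may indeed be flipped in $AD_f$, landing at $(\bar{u}^i,g(\bar{u}^i))$. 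Concatenating these simulations along a path from $x$ to $y$ in $AD_{\tf}$ then produces the desired path from $(x,g(x))$ to $(y,g(y))$ in $AD_f$.

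I expect the only real obstacle to be the bookkeeping in part (ii): one must check that the intermediate ``$n$th-coordinate'' step is a genuine edge of $AD_f$, and this is precisely where the no-self-loop hypothesis does the work, guaranteeing that evaluating $f_n$ at the intermediate state $(\bar{u}^i,g(u))$ returns $g(\bar{u}^i)$ regardless of the current value of the $n$th coordinate. Everything else reduces to the direct verifications above.
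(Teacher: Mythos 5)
Your proof is correct. Note, however, that the paper itself does not prove \cref{thm:reduction}: it is quoted as a known result from the literature on Boolean network reduction (Naldi et al.\ and Paulev\'e--Richard), so there is no in-paper argument to compare against. Your proposal is essentially the standard proof of that cited result. The extraction of the hypothesis is right: since $\bar{x}^n_n-x_n=1-2x_n$ never vanishes, the absence of a self-loop at $n$ forces $f_n(\bar{x}^n)=f_n(x)$ for all $x$, i.e.\ $f_n$ is independent of its last argument, which is the single fact both directions of $(i)$ and the simulation in $(ii)$ rest on. Part $(i)$ is a direct verification exactly as you wrote it. For part $(ii)$, your simulation of one reduced transition $u\to\bar{u}^i$ by the path $(u,g(u))\to(\bar{u}^i,g(u))\to(\bar{u}^i,g(\bar{u}^i))$ (the second step omitted when $g(\bar{u}^i)=g(u)$) is valid: the first step is licensed by $\tf_i(u)=f_i(u,g(u))\neq u_i$, and the second by $f_n(\bar{u}^i,g(u))=g(\bar{u}^i)\neq g(u)$, which is precisely where the independence fact is needed. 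Concatenation over the edges of the path, including the trivial case $x=y$, completes the argument with no gaps.
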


It will be useful to relate the trap spaces of the full and reduced systems.

\begin{proposition}\label{prop:reduction}
  In the setting of~\cref{thm:reduction}, denote by $\pi_{n-1}$ the projection on the first $n-1$ components.
  \begin{enumerate}[label=(\roman*)]
    \item If $A$ is a trap space for $f$, then $\pi_{n-1}(A)$ is a trap space for $\tf$.
    \item If $A$ is a trap space for $\tf$, then $A\times\{a\}$ is a trap space for $f$
      if and only if $f_n(x,0)=f_n(x,1)=a\in\B$ for all $x\in A$.
    \item If $x[I]$ is a trap space for $\tf$, then $A=x[I]\times\{0,1\}$ is a trap space for $f$
      if and only if $f_i(y,0)=f_i(y,1)$ for all $y\in x[I]$ and $i\in I^{\mathsf{c}}$.
  \end{enumerate}
\end{proposition}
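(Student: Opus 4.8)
The plan is to first record a single reformulation of the trap space condition for subspaces and then apply it, to both $f$ and $\tf$, in each of the three cases. For a Boolean map $g\colon\B^m\to\B^m$ and a subspace $S$ with set of free coordinates $J$, note that for $w\in S$ the flip $\bar{w}^i$ lies in $S$ exactly when $i\in J$, since flipping a free coordinate stays inside $S$ while flipping a fixed one leaves it. Unwinding the trap set condition therefore gives the characterisation I will use throughout: $S$ is a trap space for $g$ if and only if $g_i(w)=w_i$ for every $w\in S$ and every $i\in J^{\compl}$, that is, no update ever flips a fixed coordinate. The standing hypothesis that $G_f$ has no loop at $n$ means precisely that $f_n$ is independent of its last argument, so $f_n(x,0)=f_n(x,1)$ for all $x\in\B^{n-1}$; I write $f_n(x,0)$ for this common value and observe that $\tf_i(x)=f_i(x,f_n(x,0))$ evaluates $f_i$ at a point whose last coordinate is exactly this value, which is the only way the reduction sees coordinate $n$.

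For (i), let $J$ be the free set of $A$ and split on whether $n\in J$. If $n\in J$, then every $x\in\pi_{n-1}(A)$ lifts to $(x,f_n(x,0))\in A$, and for $i\in J^{\compl}$ the characterisation for $f$ gives $\tf_i(x)=f_i(x,f_n(x,0))=x_i$; since the free set of $\pi_{n-1}(A)$ is $J\setminus\{n\}$ with the same complement in $\NN{n-1}$, this is the trap condition for $\tf$. If $n\notin J$, then $n\in J^{\compl}$ forces $f_n$ to be constant on $A$, equal to the fixed value $c$ of the last coordinate; hence $f_n(x,0)=c$, the lift $(x,c)$ lies in $A$, and the same computation yields $\tf_i(x)=f_i(x,c)=x_i$ for $i\in J^{\compl}\setminus\{n\}$. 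For (ii), apply the characterisation to $A\times\{a\}$, whose fixed set is $J^{\compl}\cup\{n\}$. The coordinate $n$ contributes the requirement $f_n(x,a)=a$ for all $x\in A$, which by loop-freeness is exactly $f_n(x,0)=f_n(x,1)=a$; conversely this identity also makes $\tf_i(x)=f_i(x,a)$, so once it holds the remaining requirements $f_i(x,a)=x_i$ for $i\in J^{\compl}$ are precisely the trap space condition for $A$ under $\tf$, which is assumed. Reading these two observations in both directions gives the stated equivalence.

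For (iii), the subspace $A=x[I]\times\{0,1\}$ has free set $I\cup\{n\}$ and fixed set $I^{\compl}$, so the characterisation for $f$ says $A$ is a trap space if and only if $f_i(y,a)=x_i$ for all $y\in x[I]$, all $a\in\B$, and all $i\in I^{\compl}$. One direction is immediate, since this in particular forces $f_i(y,0)=f_i(y,1)=x_i$. The converse is the one point that needs a little care, and is the main (if mild) obstacle: from the bare independence assumption $f_i(y,0)=f_i(y,1)$ one still has to recover the value $x_i$. Here I will invoke that $x[I]$ is a trap space for $\tf$, which by the characterisation applied to $\tf$ gives $f_i(y,f_n(y,0))=\tf_i(y)=x_i$; since $f_i(y,\cdot)$ is constant in its last argument by assumption, its value at any $a\in\B$ equals its value at $f_n(y,0)$, namely $x_i$. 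This upgrades the independence condition to the full trap space condition for $A$ and closes the equivalence.
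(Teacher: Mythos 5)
Your proof is correct. For parts (ii) and (iii) it is essentially the paper's own argument in different clothing: the paper phrases everything through successors in $AD_f$, while you work with the equivalent reformulation that a subspace is a trap space exactly when no fixed coordinate is ever updated; the substance --- using loop-freeness at $n$ to identify $f_n(x,a)$ with $f_n(x,0)$, and in (iii) invoking the hypothesis that $x[I]$ is a trap space for $\tf$ to upgrade the bare independence $f_i(y,0)=f_i(y,1)$ to the value $x_i$ --- is the same in both. Part (i) is where you genuinely diverge. The paper proves (i) dynamically: it lifts a transition $x\to y$ of $AD_{\tf}$ to a path in $AD_f$ via \cref{thm:reduction}~$(ii)$ and then applies the trap-set property of $A$ twice; that argument in fact shows that the projection of any trap \emph{set} is a trap set, the subspace structure entering only to see that $\pi_{n-1}(A)$ is a subspace. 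You instead verify the fixed-coordinate characterisation pointwise, splitting on whether $n$ is free or fixed in $A$: if $n$ is free, both lifts $(x,0)$ and $(x,1)$ lie in $A$, so in particular $(x,f_n(x,0))$ does; if $n$ is fixed at $c$, the trap condition at coordinate $n$ together with loop-freeness forces $f_n(x,0)=f_n(x,1)=c$ on $\pi_{n-1}(A)$, so the canonical lift again lies in $A$, and in either case the trap condition for $f$ at $(x,f_n(x,0))$ transfers verbatim to $\tf$. Your route is more elementary, since it never uses the path-lifting statement of \cref{thm:reduction}~$(ii)$, at the cost of being tied to subspaces; the paper's argument is shorter given that \cref{thm:reduction} is already available and extends beyond subspaces. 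Both are complete proofs of the statement as given.
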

\begin{proof}
  $(i)$ Take $x\in\pi_{n-1}(A)$ and $y$ successor for $x$ in $AD_{\tf}$.
  Since $f_n(x,0)=f_n(x,1)$, either $(x, f_n(x, 0))$ is in $A$ or
  there exists an $a\in\{0,1\}$ such that $(x, a)$ is in $A$, and $(x, f_n(x, 0))$ is
  a successor for $(x, a)$ in $AD_f$.
  By~\cref{thm:reduction} $(ii)$ there is a path from $(x,f_n(x,0))$ to $(y,f_n(y,0))$ in $AD_f$,
  and, since $A$ is a trap space, $y$ is in $\pi_{n-1}(A)$, and we are done.

  $(ii)$ Suppose that $f_n(x,0)=f_n(x,1)=a\in\B$ for all $x\in A$, and
  take $(x,a)\in A\times\{a\}$, and $(y,b)$ successor for $(x,a)$ in $AD_f$.
  Then since $f_n(x,a)=a$, we have $b=a$, and $f_i(x,a)\neq x_i$ for some $i<n$.
  Hence $\tf_i(x)=f_i(x,f_n(x,a))=f_i(x,a)\neq x_i$ and $y$ is a successor for $x$ in $AD_{\tf}$,
  and therefore is in $A$.
  The other direction is trivial.

  $(iii)$ Suppose that $f_i(y,0)=f_i(y,1)$ for all $y\in x[I]$ and $i\in I^{\mathsf{c}}$, and
  take $(y,v)\in x[I]\times\{0,1\}$, and $(z,w)$ successor for $(y,v)$ in $AD_f$.
  If $z=y$, or $z=\bar{y}^i$ with $i\in I$, then clearly the successor is in $A$.
  If $z=\bar{y}^i$ with $i\in I^{\mathsf{c}}$, then $\tf_i(y)=f_i(y,f_n(y,0))=f_i(y,v)\neq y_i$,
  hence $z=\bar{y}^i$ is in $x[I]$, which concludes.
  The other direction is trivial.
\end{proof}

\subsection{A Boolean Delta-Notch model}\label{def:dn}

In this work we are interested in some Boolean networks that can be interpreted as
arising from the combination of multiple instances of a given Boolean function.
This approach is formalised for instance in~\cite{mendes2013composition,varela2018stable} and called composition of logical modules.
Here we use a different definition that can be recast in terms of compositions of modules.

We fix $L\in\mathbb{N}$ and consider an undirected connected graph $\G$ with vertex set $C=\{1,\dots,L\}$ and without loops.
We call the vertices \emph{cells} and $\G$ the \emph{cell graph} underlying the system,
as it represents a network of $L$ cells with some neighbouring relation.
For each $i\in\cs$, we write $\S(i)=\{j\in\cs\ | \ (i,j)\text{ edge in }\G\}$.
If $(i,j)$ is an edge in $\G$, we say that $i$ and $j$ are \emph{neighbours}.
In the examples we will consider for instance the \emph{path graph} or \emph{linear graph} $\P_L$,
the graph with vertices $\{1,\dots,L\}$ and edge set $\{(i,i+1)\ |\ i=1,\dots,L-1\}$,
representing a linear array of cells, where each internal cell has two neighbour cells ($\S(i)=\{i-1,i+1\}$),
and the first and last cell admit only one neighbour ($\S(1)=\{2\}$ and $\S(L)=\{L-1\}$).

The system in each cell is described by some Boolean variables, whose behaviour can depend on
the variables in the same cell or in neighbouring cells.
\cite{mendes2013composition,varela2018stable} also distinguish between \emph{input components}
and \emph{internal components}, the former being variables that can only depend on variables
in neighbouring cells, and the latter being variables that can only depend on other
variables from the same cell.
For the system studied in this work, we consider only two Boolean variables in each cell,
or one Boolean variable in each cell for the reduced models (see~\cref{sec:dn_reduced}).
We therefore do not introduce a general notation, but rather focus on special systems
with $2L$ or $L$ variables.

Given a cell graph $\G$, for each cell $i$ we consider a variable \emph{Notch} and a variable \emph{Delta},
that we denote $n_i$ and $d_i$, respectively, with $i=1,\dots,L$.
The space we consider is therefore $\B^{2L}$, and the network we study is a function $F\colon\B^{2L}\to\B^{2L}$.
Sometimes it will be convenient to denote an element $x\in\B^{2L}$ as
$x=(n,d)=(n_1,\dots,n_L,d_1,\dots,d_L)$, so that $x_i=n_i$ and $x_{i+L}=d_i$ for $i=1,\dots,L$.
Given $J\subseteq\cs$, we will write $J+L$ for the set $\{i+L\ | i\in J\}$,
and $J^{\compl}$ for $\cs\setminus J$.
For $I\subseteq\NN{2L}$ we define $I_N=I\cap\cs$, $I_D=\{i-L\ | \ i \in I\cap(\cs+L)\}$
and $\S(I)=\bigcup_{i\in I_N\cup I_D}\S(i)$.

In the simple model we consider, in each cell, Notch inhibits the production of Delta, with no other interaction taking place.
The logical function that encodes the regulation of Delta in cell $i$ is therefore
defined by $(n,d)\mapsto \bar{n}_i$.
Notch instead is activated by the presence of Delta in neighbouring cells.
Here we consider the following two possibilities:
either the presence of Delta in any of the neighbouring cells is sufficient for the activation of Notch,
or the presence of Delta in all of the neighbouring cells is required.
This leads to the definition of two possible Boolean functions for component $i$,
that we denote $F^{\wedge}$ and $F^{\vee}$ respectively:
\begin{equation*}
  F^{\wedge}_{i}(n,d)=\bigwedge_{j\in \S(i)} d_j, \hspace{20pt} F^{\vee}_{i}(n,d)=\bigvee_{j\in \S(i)} d_j.
\end{equation*}
Note however that $F^{\wedge}$ and $F^{\vee}$ verify
\begin{equation*}
  \overline{F^{\wedge}(\bar{n},\bar{d})} = \left(\overline{\bigwedge_{j\in \S(1)} \bar{d}_j}, \dots, \overline{\bigwedge_{j\in \S(L)} \bar{d}_j}, \bar{n}_1,\dots,\bar{n}_L\right)= F^{\vee}(n,d),
\end{equation*}
i.e., $F^{\wedge}$ and $F^{\vee}$ are conjugated under the function $x\mapsto\bar{x}$, and hence admit isomorphic
asynchronous state transition graphs.
It is therefore sufficient to limit our analysis to the function $F=F^{\vee}$.
We call $F$ a \emph{Boolean Delta-Notch system} over the graph $\G$.

\begin{example}\label{ex:L1}
  For $L=1$, we have $F(n_1,d_1)=(0,\bar{n}_1)$, and the system has only one attractor,
  the fixed point $01$, i.e., the dynamics converges to the state with low Notch and high Delta.
  The trap spaces for the system are $\star\star$, $0\star$ and $01$, and $\star\star$
  is the basin of attraction of $01$.
\end{example}

\begin{example}\label{ex:L2}
  For $L=2$, we find $F(n_1,n_2,d_1,d_2)=(d_2,d_1,\bar{n}_1,\bar{n}_2)$.
  The asynchronous dynamics, represented in~\cref{fig:L2}, admits two fixed points, $0110$ and $1001$,
  and two source states, $0101$ and $1010$. The remaining states are part of the same
  strongly connected component.
  Hence the trap spaces are given by the full state space and the two fixed points.
  The sets $\B^{4}\setminus\{1001\}$ and $\B^{4}\setminus\{0110\}$ are the basins of attraction
  of $0110$ and $1001$ respectively. There are no elements in the strong basin
  of attraction of $0110$ and $1001$, other than the fixed point itself.
  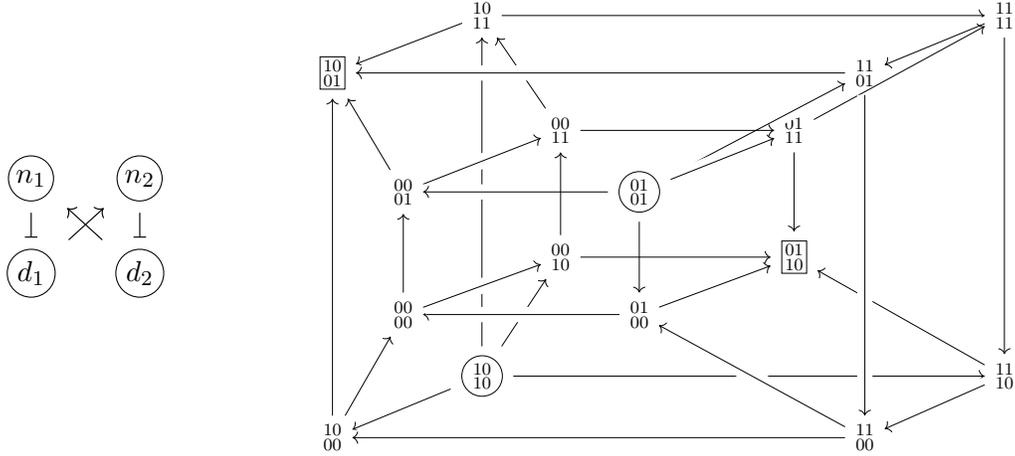
\begin{figure}
    \centering
    \begin{tikzcd}[row sep=small,column sep=small]
        \circled{n_1} \arrow[d,-|] & \circled{n_2} \arrow[d,-|] \\
        \circled{d_1} \arrow[ur] & \circled{d_2} \arrow[ul]
    \end{tikzcd}
      \hspace{40pt}
      \resizebox{0.6\linewidth}{!}{
    \begin{tikzcd}[ampersand replacement=\&,row sep=tiny,column sep=small]
           \&      \& \ss{10\\11} \&      \&      \& {\color{white}00} \&      \&      \& {\color{white}00} \& \ss{11\\11} \\
      \mybox{\ss{10\\01}}  \&      \&      \&      \&      \& {\color{white}00} \&      \& \ss{11\\01} \& {\color{white}00} \& \\
           \&      \&      \& \ss{00\\11} \&      \& {\color{white}00} \& \ss{01\\11} \&      \& {\color{white}00} \& \\
           \& \ss{00\\01} \&      \&      \& \circled{\ss{01\\01}} \& {\color{white}00} \&      \&      \& {\color{white}00} \& \\
           \&      \&      \& \ss{00\\10} \&      \& {\color{white}00} \& \mybox{\ss{01\\10}} \&      \& {\color{white}00} \& \\
           \& \ss{00\\00} \&      \&      \& \ss{01\\00} \& {\color{white}00} \&      \&      \& {\color{white}00} \& \\
           \&      \& \circled{\ss{10\\10}} \&      \&      \& {\color{white}00} \&      \&      \& {\color{white}00} \& \ss{11\\10} \\
      \ss{10\\00} \&      \&      \&      \&      \& {\color{white}00} \&      \& \ss{11\\00} \& {\color{white}00} \&
      \arrow[from=llllllluuuuuuu,to=uuuuuuu] % 1011 -> 1111
      \arrow[from=llllllluuuuuuu,to=llllllllluuuuuu,crossing over] % 1011 -> 1001
      \arrow[from=lllllllll,to=lllllllluu,crossing over] % 1000 -> 0000
      \arrow[from=lllllllll,to=llllllllluuuuuu,crossing over] % 1000 -> 1001
      \arrow[from=lllllllluu,to=lllllllluuuu,crossing over] % 0000 -> 0001
      \arrow[from=lllllllu,to=llllllluuuuuuu] % 1010 -> 1011
      \arrow[from=lllllllu,to=lllllllll,crossing over] % 1010 -> 1000
      \arrow[from=lllllllu,to=u] % 1010 -> 1110
      \arrow[from=lllllllu,to=lllllluuu] % 1010 -> 0010
      \arrow[from=lllllllluuuu,to=llllllllluuuuuu,crossing over] % 0001 -> 1001
      \arrow[from=lllllllluuuu,to=lllllluuuuu,crossing over] % 0001 -> 0011
      \arrow[from=llllluuuu,to=lluuuuuu,crossing over] % 0101 -> 1101
      \arrow[from=llllluuuu,to=lllllllluuuu,crossing over] % 0101 -> 0001
      \arrow[from=llllluuuu,to=llluuuuu,crossing over] % 0101 -> 0111
      \arrow[from=llllluuuu,to=llllluu,crossing over] % 0101 -> 0100
      \arrow[from=llllluu,to=lllllllluu,crossing over] % 0100 -> 0000
      \arrow[from=llllluu,to=llluuu,crossing over] % 0100 -> 0110
      \arrow[from=uuuuuuu,to=lluuuuuu,crossing over] % 1111 -> 1101
      \arrow[from=uuuuuuu,to=u] % 1111 -> 1110
      \arrow[from=lllllluuu,to=llluuu] % 0010 -> 0110
      \arrow[from=lllllluuu,to=lllllluuuuu] % 0010 -> 0011
      \arrow[from=lllllluuuuu,to=llllllluuuuuuu] % 0011 -> 1011
      \arrow[from=lllllluuuuu,to=llluuuuu] % 0011 -> 0111
      \arrow[from=u,to=ll,crossing over] % 1110 -> 1100
      \arrow[from=u,to=llluuu] % 1110 -> 0110
      \arrow[from=llluuuuu,to=llluuu] % 0111 -> 0110
      \arrow[from=llluuuuu,to=uuuuuuu] % 0111 -> 1111
      \arrow[from=ll,to=lllllllll,crossing over] % 1100 -> 1000
      \arrow[from=ll,to=llllluu,crossing over] % 1100 -> 0100
      \arrow[from=lluuuuuu,to=ll,crossing over] % 1101 -> 1100
      \arrow[from=lllllllluu,to=lllllluuu,crossing over] % 0000 -> 0010
      \arrow[from=lluuuuuu,to=llllllllluuuuuu,crossing over] % 1101 -> 1001
    \end{tikzcd}}
  \caption{Interaction graph and asynchronous state transition graph for a
      Boolean Delta-Notch model with $L=2$ (the levels of Delta are written below the corresponding levels of Notch).
      The fixed points are in rectangles. The circled states are source states.}\label{fig:L2}
  \end{figure}
\end{example}

\subsubsection{Model reduction}\label{sec:dn_reduced}
The model we described has $2L$ variables, none of which is autoregulated.
It will be convenient to work with the reduced network $N\colon\B^L\to\B^L$ obtained from $F$ by
elimination of the variables $d_1,\dots,d_L$ as delineated in~\cref{thm:reduction}.
For each $i=1,\dots,L$ we have
\begin{equation*}
  N_i(n) = \bigvee_{j\in\S(i)}\bar{n}_j = \overline{\bigwedge_{j\in\S(i)} n_j}.
\end{equation*}
By application of~\cref{thm:reduction}~$(i)$, the functions $F$ and $N$
have the same number of fixed points.
To a fixed point $n^*$ corresponds the fixed point $(n^*,\overline{n^*})$ for $F$.
In addition, from~\cref{thm:reduction}~$(ii)$, given $n, n'\in\B^L$, if there exists a path from
from $n$ to $n'$ in $AD_N$ then there exists a path from $(n,\bar{n})$ to $(n',\bar{n'})$ in $AD_F$.

\section{Asymptotic behaviour}\label{sec:asymptotic}

The asymptotic behaviour of Boolean Delta-Notch systems can be fully characterised.
By~\cref{thm:reduction}~$(i)$, the Boolean Delta-Notch system $F$
over a graph $\G$ has the same fixed points as the reduced network $N$.
The network $N$ is a normal OR-NOT network for its associated interaction graph, that is,
each component of $N$ is a disjunction, and its associated interaction graph has only negative edges.
The problem of finding fixed points of AND-OR networks and its relationship
to the problem of determining maximal independent sets or minimal vertex covers of a graph have been extensively investigated
(e.g.,~\cite{aracena2004fixed,veliz2012computation,aracena2014maximum,aracena2017fixed}).
As a corollary of~\cite[Proposition 3.5]{veliz2012computation},
the fixed points of $N$ are in one-to-one correspondence with the minimal (with respect to inclusion)
vertex covers of the graph $\G$.
A vertex cover of a graph is a subset $Q$ of the vertices of the graph such that
every edge of the graph has an endpoint in $Q$ (see for instance~\cite{west2001introduction}).

\begin{theorem}\label{thm:ss_vc}
  The fixed points of the Boolean Delta-Notch system over the graph $\G$
  are in one-to-one correspondence with the minimal vertex covers of the graph $\G$.
\end{theorem}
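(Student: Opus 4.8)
The plan is to establish the claimed bijection directly by translating the fixed-point condition for $N$ into the defining condition of a minimal vertex cover, without relying on the cited Proposition~3.5 as a black box.

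First I would unwind the fixed-point equation. A state $n^*\in\B^L$ is a fixed point of $N$ precisely when $n^*_i = N_i(n^*) = \overline{\bigwedge_{j\in\S(i)} n^*_j}$ for every $i\in\cs$. Associate to $n^*$ the set $Q = \{i\in\cs \mid n^*_i = 0\}$ of cells with low Notch; equivalently $Q^\compl = \{i \mid n^*_i = 1\}$. I would rewrite the fixed-point condition in terms of $Q$: the equation $n^*_i = \overline{\bigwedge_{j\in\S(i)} n^*_j}$ says that $i\notin Q$ (i.e. $n^*_i=1$) iff $\bigwedge_{j\in\S(i)} n^*_j = 0$, i.e. iff some neighbour $j\in\S(i)$ has $n^*_j=0$, i.e. iff $\S(i)\cap Q \neq \emptyset$. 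Dually, $i\in Q$ iff every neighbour of $i$ lies outside $Q$, i.e. $\S(i)\subseteq Q^\compl$.

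Next I would show this condition is equivalent to $Q^\compl$ being an inclusion-maximal independent set, equivalently $Q$ being an inclusion-minimal vertex cover; recall that the complement of an independent set is exactly a vertex cover, and maximality of the one matches minimality of the other. The condition ``$i\in Q \Rightarrow \S(i)\subseteq Q^\compl$'' says no edge has both endpoints in $Q$, so every edge has an endpoint in $Q^\compl$ — wait, more directly it says $Q^\compl$ meets no edge within $Q$, which is exactly that $Q$ is an independent set; hence $Q^\compl$ is a vertex cover. The condition ``$i\notin Q \Rightarrow \S(i)\cap Q\neq\emptyset$'' says every cell outside $Q$ has a neighbour in $Q$, which is precisely the statement that $Q$ is a \emph{maximal} independent set (no vertex can be added to $Q$ while keeping it independent), equivalently that $Q^\compl$ is a \emph{minimal} vertex cover. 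I would verify both implications of the biconditional carefully, checking that the two halves of the fixed-point equation correspond exactly to ``independent'' and ``maximal.''

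Finally I would assemble the correspondence: the map $n^*\mapsto Q^\compl = \{i \mid n^*_i=1\}$ is a well-defined bijection from fixed points of $N$ to minimal vertex covers of $\G$, with inverse sending a minimal vertex cover $R$ to the state $n^*$ with $n^*_i=1$ iff $i\in R$. Since~\cref{thm:reduction}~$(i)$ gives a bijection between fixed points of $N$ and fixed points of $F$ (via $n^*\mapsto(n^*,\overline{n^*})$), composing yields the stated one-to-one correspondence for the full Delta-Notch system $F$. The only real subtlety — the step I expect to require the most care — is matching \emph{minimality/maximality} correctly: it is easy to get the right independent-set/vertex-cover dictionary but attach the wrong extremality, so I would double-check that the ``activation'' half of the fixed-point equation, which forces every high-Notch cell to have a low-Notch neighbour, is precisely the maximality condition and nothing weaker.
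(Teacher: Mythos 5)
Your proposal is correct. It differs from the paper in one structural respect: the paper never proves the core correspondence itself, but obtains it as a corollary of a cited result (Proposition~3.5 of \cite{veliz2012computation} on fixed points of AND-OR/OR-NOT networks), combined with \cref{thm:reduction}~$(i)$ to transfer fixed points between $F$ and the reduced network $N$. You use the identical reduction step, but replace the black-box citation with a direct, self-contained verification: unwinding $n^*_i=\overline{\bigwedge_{j\in\S(i)}n^*_j}$ shows that the low-Notch set $Q$ is a fixed point's zero-set if and only if $Q$ is independent (the inhibition half) and dominating (the activation half), i.e.\ a maximal independent set, and complementation turns maximal independent sets into inclusion-minimal vertex covers. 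Your matching of the extremality conditions is the right one --- independence of $Q$ gives the covering property of $Q^\compl$, while the requirement that every high-Notch cell have a low-Notch neighbour gives exactly minimality of the cover --- and the dictionary (high-Notch cells $=$ cover) agrees with the paper's usage, e.g.\ in \cref{rmk:atleast2fpts}. The mid-argument phrasing wobble (``$Q^\compl$ meets no edge within $Q$'') is only verbal; the surrounding claims are the correct ones. What each approach buys: the paper's version is shorter and situates the result in the literature relating Boolean AND-OR networks to independent sets and vertex covers, while yours makes the theorem self-contained and elementary --- it is, in effect, the proof of the cited proposition specialized to this particular network. One small point worth making explicit in a polished write-up: the reduction step requires that the eliminated Delta variables are not autoregulated, which the paper notes when defining $N$; your argument implicitly inherits this.
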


We refer to the fixed points also as stable spatial patterns, or simply patterns, for the system.
They are characterised by an alternating structure of primary fate and secondary fate cells,
which is determined by the structure of the cell graph $\G$.

\begin{remark}\label{rmk:atleast2fpts}
  It follows from~\cref{thm:ss_vc} that for any $i\in\cs$ there exists a fixed point $x$ for $N$
  that satisfies $x_i=0$, $x_j=1$ for all $j\in\S(i)$, and a fixed point $y$ for
  the Boolean Delta-Notch system over $\G$ that satisfies $y_i=\bar{y}_{i+L}=0$, $y_j=\bar{y}_{j+L}=1$ for all $j\in\S(i)$.
  In particular, if $L\geq 2$, then $N$ and $F$ admit at least two fixed points.
\end{remark}

A result on threshold networks can be used to show that $F$ and $N$ do not admit cyclic attractors.
A Boolean network $f\colon\B^n\to\B^n$ is called a (strict) \emph{threshold network} (\cite{goles1985decreasing})
if there exist a matrix $A\in\mathbb{R}^{n\times n}$ and a vector $b\in\mathbb{R}^n$ such that,
for all $i\in\NN{n}$, $f_i(x)=1$ if and only if $(Ax)_i>b_i$ and $f_i(x)=0$ if and only if $(Ax)_i<b_i$.

The network $N$ is a threshold network, with $A\in\{0,-1\}^{L\times L}$ and $b\in\mathbb{R}^L$ defined as follows:
\begin{equation*}
  \text{for all } i,j\in\{1,\dots,n\}, \hspace{15pt} A_{ij}=\begin{cases}-1 & \text{ if }j\in\S(i),\\0&\text{ otherwise,}\end{cases}
  \hspace{20pt}
  b_i=-|\S(i)|+\frac{1}{2}.
\end{equation*}
The \emph{energy} function $E\colon\{0,1\}^n\to\mathbb{R}$ associated to $A$ and $b$ is defined as
\begin{equation*}\label{eq:energy}
  E(x) = -\frac{1}{2}x^\mathsf{T}A x + b^{\mathsf{T}}x.
\end{equation*}
The matrix $A$ is symmetric and its diagonal elements are non-negative.
Under these conditions, the energy is strictly decreasing along asynchronous trajectories:
if $\bar{x}^i$ is a successor for $x$ in $AD_f$, then
\begin{equation*}\label{eq:monotone-energy}
  \begin{aligned}
    E(\bar{x}^i)-E(x) & = -\sum_{j\neq i}A_{ij}x_j(\bar{x}_i-x_i)-\frac{1}{2}A_{ii}(\bar{x}_i^2-x_i^2)+b_i(\bar{x}_i-x_i) \\
    & = -(\bar{x}_i-x_i)\left(\sum_{j=1}^nA_{ij}x_j-b_i\right)-\frac{1}{2}A_{ii}(\bar{x}_i-x_i)^2
    \leq -(\bar{x}_i-x_i)\left(\sum_{j=1}^nA_{ij}x_j-b_i\right)<0.
  \end{aligned}
\end{equation*}
As a consequence, the graph $AD_N$ does not admit any cyclic path.
This is a particular case of Proposition 1 in~\cite{goles1985decreasing},
which gives the following corollaries.

\begin{theorem}\label{thm:fpts-only-N}
  For each non-fixed point $x\in\B^{L}$ for a reduced Boolean Delta-Notch system $N$,
  there is a path in $AD_N$ from $x$ to a fixed point.
\end{theorem}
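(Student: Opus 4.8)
The plan is to exploit the acyclicity of $AD_N$ that was just established via the strictly decreasing energy function, combined with the finiteness of the state space $\B^L$. The conceptual key is the observation that the fixed points of $N$ are precisely the \emph{sinks} of $AD_N$, i.e.\ the states with no outgoing edge: a state $y$ has no successor in $AD_N$ exactly when there is no $i$ with $N_i(y)\neq y_i$, that is, when $N_i(y)=y_i$ for all $i$, which is the definition of a fixed point. So the theorem reduces to showing that from every non-sink one can reach a sink.

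Concretely, I would argue as follows. Let $x\in\B^L$ be a non-fixed point. Since $x$ is not a fixed point, there is some $i$ with $N_i(x)\neq x_i$, so $\bar{x}^i$ is a successor of $x$ in $AD_N$ and in particular $x$ has at least one outgoing edge. Starting from $x$, extend this edge greedily into a maximal directed path $x=y^0\to y^1\to\cdots$ in $AD_N$, at each state following some outgoing edge whenever one exists. The central claim is that such a path must terminate after finitely many steps: along every edge $y^k\to y^{k+1}$ the energy inequality $E(y^{k+1})<E(y^k)$ established above holds, so $E$ is strictly decreasing along the path; hence no state can occur twice, and since $\B^L$ has only $2^L$ elements the path cannot be extended indefinitely. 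Its final state $y$ then has, by maximality, no outgoing edge in $AD_N$, so $y$ is a sink and therefore a fixed point of $N$. The concatenation $x\to\cdots\to y$ is the desired path.

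I do not expect a genuine obstacle here, since the energy computation preceding the statement already does all the analytic work: once strict monotonicity of $E$ is available, the remainder is a routine finite-directed-acyclic-graph argument. The only point that needs to be stated carefully is the identification of sinks of $AD_N$ with fixed points of $N$, which is immediate from the definition of the asynchronous dynamics, and the observation that finiteness plus strict energy decrease forbids infinite paths so that a sink is actually reached.
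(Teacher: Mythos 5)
Your proof is correct and takes essentially the same route as the paper: the paper derives this theorem as a direct corollary of the strictly decreasing energy function (citing Proposition 1 of Goles--Olivos), and your argument simply spells out the routine step the paper leaves implicit, namely that strict energy decrease plus finiteness of $\B^L$ forces every maximal path to terminate at a sink, and sinks of $AD_N$ are exactly the fixed points of $N$.
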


\begin{theorem}\label{thm:fpts_only}
  For each non-fixed point $x\in\B^{2L}$ for a Boolean Delta-Notch system $F$,
  there is a path in $AD_F$ from $x$ to a fixed point.
\end{theorem}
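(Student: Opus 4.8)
The plan is to reduce the problem to the one-variable system $N$, for which the analogous statement is already established in \cref{thm:fpts-only-N}, and then lift the resulting path back to $F$ using the reduction correspondences recalled in \cref{sec:dn_reduced}. The only gap in a naive lift is that an arbitrary state $(n,d)\in\B^{2L}$ need not have its Delta component equal to $\bar n$, which is the form required to apply the lifting statement; so I would first drive the dynamics into such a configuration and only then invoke \cref{thm:fpts-only-N}.

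For the first phase, I would show that from any $(n,d)$ there is a path in $AD_F$ to $(n,\bar n)$. Since Delta in cell $i$ is regulated by $F_{i+L}(n,d)=\bar n_i$, a value depending only on $n_i$, whenever $d_i\neq\bar n_i$ the transition flipping $d_i$ is an edge of $AD_F$. Updating only Delta variables leaves the Notch component $n$ unchanged, so each $d_i$ needs to be flipped at most once before it settles at $\bar n_i$ and stays there. Hence after at most $L$ such asynchronous steps the path reaches $(n,\bar n)$, and this phase terminates.

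For the second phase I would distinguish two cases at the state $(n,\bar n)$. If $n$ is a fixed point of $N$, then by the fixed-point correspondence from \cref{thm:reduction}~$(i)$ the state $(n,\bar n)$ is already a fixed point of $F$, and the path ends there. If $n$ is not a fixed point of $N$, then \cref{thm:fpts-only-N} provides a path in $AD_N$ from $n$ to some fixed point $n^*$; by \cref{thm:reduction}~$(ii)$ this lifts to a path in $AD_F$ from $(n,\bar n)$ to $(n^*,\overline{n^*})$, and $(n^*,\overline{n^*})$ is a fixed point of $F$. Concatenating with the first phase yields a path in $AD_F$ from $(n,d)$ to a fixed point. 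Since the starting state is assumed not to be a fixed point, it differs from the fixed point reached, so the concatenated path is nonempty, as required.

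I expect no serious obstacle: the substantive content lies entirely in \cref{thm:fpts-only-N} and the reduction machinery. The one point requiring care is the verification that the Delta-updating phase is a legitimate asynchronous path that terminates, which rests on the Delta variables being non-autoregulated and on the fact that $n$ is held fixed throughout that phase.
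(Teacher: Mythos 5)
Your proposal is correct and follows essentially the same route as the paper: the paper's proof also first moves from $(n,d)$ to $(n,\bar{n})$ by updating the Delta variables, and then invokes \cref{thm:reduction}~$(ii)$ together with \cref{thm:fpts-only-N} to lift a path in $AD_N$ to a path in $AD_F$ ending at a fixed point. Your write-up merely makes explicit the details (legality and termination of the Delta-updating phase, and the trivial case where $n$ is already fixed for $N$) that the paper leaves implicit.
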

\begin{proof}
  Consider $(n,d)\in\B^{2L}$. Since there exists a path from $(n,d)$ to $(n,\bar{n})$,
  the conclusion follows from~\cref{thm:reduction}~$(ii)$ and~\cref{thm:fpts-only-N}.
\end{proof}

As a consequence, the asynchronous state transition graph of a Boolean Delta-Notch system does not admit cyclic attractors.
However, we will see that, unlike $AD_N$, the graph $AD_F$ contains cyclic paths (\cref{prop:path01tox_pathyto01_cycle}).

Observe that not every fixed point is reachable from every non-fixed point:
for instance, for the Boolean Delta Notch system over the path graph with $3$ nodes $\P_3$ there is no path
from $011100$ to the fixed point $101010$.
In the next section, we study the basins of attraction for both the one-variable and the two-variable models.

\section{Reachability of fixed points}\label{sec:reachability}

In the following, we consider the problem of determining which patterns can be obtained from some initial states.
The reachability of fixed points for Boolean Delta-Notch systems over hexagonal grids
from given initial conditions has been previously studied in~\cite{mendes2013composition}.
We start the section by showing that all the fixed points can be reached from homogeneous states,
that is, states where the levels are the same in every cell,
and identify other classes of states for which this property holds.

\subsection{Homogeneous initial conditions}

We first look at the reachability from homogeneous initial conditions for $N$.

\begin{theorem}\label{thm:homogeneous-N}
  Each fixed point $x\in\B^{L}$ is reachable in $AD_N$ from $\1$ and $\0$.
\end{theorem}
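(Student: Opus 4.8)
The plan is to build explicit asynchronous paths from each homogeneous state to $x$, flipping exactly the coordinates where $x$ differs from the starting point, one at a time, in an arbitrary order. The whole argument rests on the two local conditions satisfied by any fixed point $x=N(x)$: writing out $x_i=\overline{\bigwedge_{j\in\S(i)}x_j}$, one reads off that $x_i=0$ forces $x_j=1$ for \emph{every} neighbour $j\in\S(i)$, whereas $x_i=1$ forces $x_j=0$ for \emph{at least one} neighbour $j\in\S(i)$. Equivalently, by \cref{thm:ss_vc}, the set $S=\{i\mid x_i=0\}$ is a maximal independent set and its complement a minimal vertex cover; I will only use the two local statements.

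For reachability from $\1$, I would flip to $0$, one coordinate at a time, exactly the vertices in $S$, leaving every vertex outside $S$ at value $1$ throughout. Each such flip is enabled: fix $i\in S$ and any intermediate state $y$ obtained by flipping some vertices of $S$; since $S$ is independent, every neighbour $j\in\S(i)$ lies outside $S$ and is therefore never flipped, so $y_j=1$. Hence $N_i(y)=\overline{\bigwedge_{j\in\S(i)}y_j}=0\neq y_i=1$, so $(y,\bar{y}^i)$ is an edge of $AD_N$. Performing these flips in any order yields a path from $\1$ to the state that equals $0$ exactly on $S$, which is $x$.

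For reachability from $\0$, I would symmetrically flip to $1$, one coordinate at a time, exactly the vertices $i$ with $x_i=1$, leaving the vertices of $S$ at value $0$ throughout. Here I invoke the second local condition: each such $i$ has a neighbour $j\in\S(i)$ with $x_j=0$, i.e.\ $j\in S$, and since the vertices of $S$ are never flipped, this $j$ keeps value $0$ in every intermediate state $y$. Thus $N_i(y)=1\neq y_i=0$, the flip is again enabled, and the resulting path runs from $\0$ to $x$.

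There is no substantial obstacle once the two local conditions are in hand; the only point requiring care is the \emph{invariance} of the enabling condition along the path, and in both directions this holds because the neighbour controlling a given flip is one I have committed never to change: for the path from $\1$ the controlling neighbours all lie outside $S$, and for the path from $\0$ a controlling neighbour lies in $S$. The degenerate case $L=1$ (where $\S(1)=\emptyset$ and $x=0$) is consistent with this description, the path from $\0$ being empty.
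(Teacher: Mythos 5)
Your proposal is correct and follows essentially the same route as the paper: starting from $\1$, flip to $0$ exactly the zero-coordinates of $x$ one at a time, each flip being enabled because the neighbours of a zero vertex of a fixed point are all $1$ and are never touched along the path. Your explicit treatment of the $\0$ case (flipping the ones of $x$ while a controlling neighbour in $S$ stays fixed at $0$) is precisely the ``similar'' argument the paper leaves to the reader.
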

\begin{proof}
  We show that, for each fixed point $x\in\B^L$ for $N$, there is a path from $\1\in\B^L$ to $x$ in $AD_N$.
  The proof for $\0$ is similar.
  Consider a fixed point $x$ for $N$, and define $I(x)=\{i\in\cs\ | \ x_i=0\}$, $k=|I(x)|$.
  Set $x^0=\1$, choose an order $i_1,\dots,i_k$ for the indices in $I(x)$, and, for each $h=1,\dots,k$,
  define the state $x^{h}=\bar{\1}^{\{i_1,\dots,i_h\}}$.
  Then, for each $h=0,\dots,k-1$, $x^h_{i_{h+1}}=1$, $x_{i_{h+1}}=0$, and, since $x$ is fixed,
  for all $j\in\S(i_{h+1})$ we have $x_j=1$, so that $x^h_j=1$ and $N_{i_{h+1}}(x^h)=0$.
  Hence the asynchronous dynamics $AD_N$ admits an edge from $x^{h}$ to $x^{h+1}$, for $h=0,\dots,k-1$.
  In other words, there is a path in $AD_N$ from $x^{0}=\1$ to $x^{k}=x$.
\end{proof}

\begin{remark}\label{rmk:subspaces_transitions}
From each state $(n, d)$, there is a path to $(\bigvee_{j\in\S(1)}d_j,\dots,\bigvee_{j\in\S(L)}d_j,d)$
and a path to $(n,\bar{n})$ in $AD_F$.
Hence
\begin{itemize}
  \item if a state is reachable from $(\0,\0)$, it is reachable from $(n,\0)$ for all $n\in\B^L$;
  \item if a state is reachable from $(\1,\0)$, it is reachable from $(\1,d)$ for all $d\in\B^L$;
  \item for $L\geq 2$, if a state is reachable from $(\1,\1)$, it is reachable from $(n,\1)$ for all $n\in\B^L$;
  \item if a state is reachable from $(\0,\1)$, it is reachable from $(\0,d)$ for all $d\in\B^L$.
\end{itemize}
The asynchronous dynamics of every Boolean Delta-Notch system with $L\geq 2$ admits therefore
a cycle that includes all homogeneous states (see~\cref{fig:subspaces_transitions}, left).
In addition, the following result shows that all fixed points are reachable
from homogeneous states (see~\cref{fig:subspaces_transitions}, right, for an example).
\begin{figure}
  \centering
  \begin{minipage}{4.2cm}
  \begin{tikzpicture}[rounded corners]
    \pgfsetfillopacity{0.6}
    \fill[gray!35] (-0.6,-0.6) rectangle (0.6,2.6) [draw=gray!55];
    \fill[gray!35] (-0.5,-0.5) rectangle (2.5,0.5) [draw=gray!55];
    \fill[gray!35] (+1.4,-0.6) rectangle (2.6,2.6) [draw=gray!55];
    \fill[gray!35] (-0.5,+1.5) rectangle (2.5,2.5) [draw=gray!55];
    \node at (0,0) [shape=circle,fill=gray!85,inner sep=7] {};
    \node at (0,2) [shape=circle,fill=gray!85,inner sep=7] {};
    \node at (2,0) [shape=circle,fill=gray!85,inner sep=7] {};
    \node at (2,2) [shape=circle,fill=gray!85,inner sep=7] {};
    \node at (0,0) {$\0\0$};
    \node at (0,1) {$\0\star$};
    \node at (0,2) {$\0\1$};
    \node at (1,0) {$\star\0$};
    \node at (2,0) {$\1\0$};
    \node at (2,1) {$\1\star$};
    \node at (2,2) {$\1\1$};
    \node at (1,2) {$\star\1$};
    \draw [->] (-0.6,1) .. controls (-1.2,1.6) and (-0.6,2) .. (-0.3,2);
    \draw [->] (1,2.5) .. controls (1.5,3.1) and (2,2.6) .. (2,2.35);
    \draw [->] (2.6,1) .. controls (3.2,0.6) and (2.6,0.0) .. (2.35,0);
    \draw [->] (1,-0.5) .. controls (0.4,-1.2) and (0,-0.6) .. (0,-0.35);
  \end{tikzpicture}
  \end{minipage}
  \begin{minipage}{11.0cm}
  \resizebox{11.0cm}{!}{
  \newcommand*{\bs}{0.9}
  \newcommand*{\hg}{0.6}
  \newcommand*{\rd}{2.0}
  \begin{tikzpicture}[rounded corners]
    \node at (\bs*0.4,-\hg*0.5) {\huge $n$:};
    \node at (\bs*0.4,-\hg*1.5) {\huge $d$:};
    \foreach \i in {1,...,4} {
      \draw[rounded corners] (\i*\bs,-\hg) rectangle (\i*\bs+\bs,0);
      \draw[rounded corners,fill=black!100] (\i*\bs,-2*\hg) rectangle (\i*\bs+\bs,-\hg);
    }

    \foreach \i in {1,...,4} {
      \draw[rounded corners] (\i*\bs+3*\bs+\rd,-\hg) rectangle (\i*\bs+\bs+3*\bs+\rd,0);
      \draw[rounded corners,fill=black!100] (\i*\bs+3*\bs+\rd,-2*\hg) rectangle (\i*\bs+\bs+3*\bs+\rd,-\hg);
    }
    \draw[rounded corners,fill=black!100] (\bs+3*\bs+\rd,-\hg) rectangle (\bs+\bs+3*\bs+\rd,0);

    \foreach \i in {1,...,4} {
      \draw[rounded corners] (\i*\bs+3*\bs+\rd,-4*\hg) rectangle (\i*\bs+\bs+3*\bs+\rd,-3*\hg);
      \draw[rounded corners,fill=black!100] (\i*\bs+3*\bs+\rd,-5*\hg) rectangle (\i*\bs+\bs+3*\bs+\rd,-4*\hg);
    }
    \draw[rounded corners,fill=black!100] (\bs+4*\bs+\rd,-4*\hg) rectangle (\bs+\bs+4*\bs+\rd,-3*\hg);

    \foreach \i in {1,...,4} {
      \draw[rounded corners] (\i*\bs+6*\bs+2*\rd,-\hg) rectangle (\i*\bs+\bs+6*\bs+2*\rd,0);
      \draw[rounded corners,fill=black!100] (\i*\bs+6*\bs+2*\rd,-2*\hg) rectangle (\i*\bs+\bs+6*\bs+2*\rd,-\hg);
    }
    \draw[rounded corners,fill=black!100] (\bs+6*\bs+2*\rd,-\hg) rectangle (\bs+\bs+6*\bs+2*\rd,0);
    \draw[rounded corners,fill=black!100] (\bs+9*\bs+2*\rd,-\hg) rectangle (\bs+\bs+9*\bs+2*\rd,0);

    \foreach \i in {1,...,4} {
      \draw[rounded corners] (\i*\bs+6*\bs+2*\rd,2*\hg) rectangle (\i*\bs+\bs+6*\bs+2*\rd,3*\hg);
      \draw[rounded corners,fill=black!100] (\i*\bs+6*\bs+2*\rd,1*\hg) rectangle (\i*\bs+\bs+6*\bs+2*\rd,2*\hg);
    }
    \draw[rounded corners,fill=black!100] (\bs+6*\bs+2*\rd,2*\hg) rectangle (\bs+\bs+6*\bs+2*\rd,3*\hg);
    \draw[rounded corners,fill=black!100] (\bs+8*\bs+2*\rd,2*\hg) rectangle (\bs+\bs+8*\bs+2*\rd,3*\hg);

    \foreach \i in {1,...,4} {
      \draw[rounded corners] (\i*\bs+6*\bs+2*\rd,-4*\hg) rectangle (\i*\bs+\bs+6*\bs+2*\rd,-3*\hg);
      \draw[rounded corners,fill=black!100] (\i*\bs+6*\bs+2*\rd,-5*\hg) rectangle (\i*\bs+\bs+6*\bs+2*\rd,-4*\hg);
    }
    \draw[rounded corners,fill=black!100] (\bs+7*\bs+2*\rd,-4*\hg) rectangle (\bs+\bs+7*\bs+2*\rd,-3*\hg);
    \draw[rounded corners,fill=black!100] (\bs+9*\bs+2*\rd,-4*\hg) rectangle (\bs+\bs+9*\bs+2*\rd,-3*\hg);

    \foreach \i in {1,...,4} {
      \draw[rounded corners] (\i *\bs+9*\bs+3*\rd,-\hg) rectangle (\i *\bs+\bs+9*\bs+3*\rd,0);
      \draw[rounded corners,fill=black!100] (\i *\bs+9*\bs+3*\rd,-2*\hg) rectangle (\i *\bs+\bs+9*\bs+3*\rd,-\hg);
    }
    \draw[rounded corners,fill=black!100] (\bs+9*\bs+3*\rd,-\hg) rectangle (\bs+\bs+9*\bs+3*\rd,0);
    \draw[rounded corners,fill=black!100] (\bs+12*\bs+3*\rd,-\hg) rectangle (\bs+\bs+12*\bs+3*\rd,0);
    \draw[rounded corners,fill=white] (\bs+9*\bs+3*\rd,-2*\hg) rectangle (\bs+\bs+9*\bs+3*\rd,-\hg);

    \foreach \i in {1,...,4} {
      \draw[rounded corners] (\i *\bs+9*\bs+3*\rd,2*\hg) rectangle (\i *\bs+\bs+9*\bs+3*\rd,3*\hg);
      \draw[rounded corners,fill=black!100] (\i *\bs+9*\bs+3*\rd,\hg) rectangle (\i *\bs+\bs+9*\bs+3*\rd,2*\hg);
    }
    \draw[rounded corners,fill=black!100] (\bs+9*\bs+3*\rd,2*\hg) rectangle (\bs+\bs+9*\bs+3*\rd,3*\hg);
    \draw[rounded corners,fill=black!100] (\bs+11*\bs+3*\rd,2*\hg) rectangle (\bs+\bs+11*\bs+3*\rd,3*\hg);
    \draw[rounded corners,fill=white] (\bs+9*\bs+3*\rd,\hg) rectangle (\bs+\bs+9*\bs+3*\rd,2*\hg);

    \foreach \i in {1,...,4} {
      \draw[rounded corners] (\i *\bs+9*\bs+3*\rd,-4*\hg) rectangle (\i *\bs+\bs+9*\bs+3*\rd,-3*\hg);
      \draw[rounded corners,fill=black!100] (\i *\bs+9*\bs+3*\rd,-5*\hg) rectangle (\i *\bs+\bs+9*\bs+3*\rd,-4*\hg);
    }
    \draw[rounded corners,fill=black!100] (\bs+10*\bs+3*\rd,-4*\hg) rectangle (\bs+\bs+10*\bs+3*\rd,-3*\hg);
    \draw[rounded corners,fill=black!100] (\bs+12*\bs+3*\rd,-4*\hg) rectangle (\bs+\bs+12*\bs+3*\rd,-3*\hg);
    \draw[rounded corners,fill=white] (\bs+10*\bs+3*\rd,-5*\hg) rectangle (\bs+\bs+10*\bs+3*\rd,-4*\hg);

    \foreach \i in {1,...,4} {
      \draw[rounded corners] (\i *\bs+12*\bs+4*\rd,-\hg) rectangle (\i *\bs+\bs+12*\bs+4*\rd,0);
      \draw[rounded corners,fill=black!100] (\i*\bs+12*\bs+4*\rd,-2*\hg) rectangle (\i *\bs+\bs+12*\bs+4*\rd,-\hg);
    }
    \draw[rounded corners,fill=black!100] (\bs+12*\bs+4*\rd,-\hg) rectangle (\bs+\bs+12*\bs+4*\rd,0);
    \draw[rounded corners,fill=black!100] (\bs+15*\bs+4*\rd,-\hg) rectangle (\bs+\bs+15*\bs+4*\rd,0);
    \draw[rounded corners,fill=white] (\bs+12*\bs+4*\rd,-2*\hg) rectangle (\bs+\bs+12*\bs+4*\rd,-\hg);
    \draw[rounded corners,fill=white] (\bs+15*\bs+4*\rd,-2*\hg) rectangle (\bs+\bs+15*\bs+4*\rd,-\hg);

    \foreach \i in {1,...,4} {
      \draw[rounded corners] (\i *\bs+12*\bs+4*\rd,2*\hg) rectangle (\i *\bs+\bs+12*\bs+4*\rd,3*\hg);
      \draw[rounded corners,fill=black!100] (\i*\bs+12*\bs+4*\rd,\hg) rectangle (\i *\bs+\bs+12*\bs+4*\rd,2*\hg);
    }
    \draw[rounded corners,fill=black!100] (\bs+12*\bs+4*\rd,2*\hg) rectangle (\bs+\bs+12*\bs+4*\rd,3*\hg);
    \draw[rounded corners,fill=black!100] (\bs+14*\bs+4*\rd,2*\hg) rectangle (\bs+\bs+14*\bs+4*\rd,3*\hg);
    \draw[rounded corners,fill=white] (\bs+12*\bs+4*\rd,\hg) rectangle (\bs+\bs+12*\bs+4*\rd,2*\hg);
    \draw[rounded corners,fill=white] (\bs+14*\bs+4*\rd,\hg) rectangle (\bs+\bs+14*\bs+4*\rd,2*\hg);

    \foreach \i in {1,...,4} {
      \draw[rounded corners] (\i *\bs+12*\bs+4*\rd,-4*\hg) rectangle (\i *\bs+\bs+12*\bs+4*\rd,-3*\hg);
      \draw[rounded corners,fill=black!100] (\i*\bs+12*\bs+4*\rd,-5*\hg) rectangle (\i *\bs+\bs+12*\bs+4*\rd,-4*\hg);
    }
    \draw[rounded corners,fill=black!100] (\bs+13*\bs+4*\rd,-4*\hg) rectangle (\bs+\bs+13*\bs+4*\rd,-3*\hg);
    \draw[rounded corners,fill=black!100] (\bs+15*\bs+4*\rd,-4*\hg) rectangle (\bs+\bs+15*\bs+4*\rd,-3*\hg);
    \draw[rounded corners,fill=white] (\bs+13*\bs+4*\rd,-5*\hg) rectangle (\bs+\bs+13*\bs+4*\rd,-4*\hg);
    \draw[rounded corners,fill=white] (\bs+15*\bs+4*\rd,-5*\hg) rectangle (\bs+\bs+15*\bs+4*\rd,-4*\hg);

    \foreach \j in {0,...,3} {
      \node at (5*\bs+3*\bs *\j+\rd *\j+0.3*\rd,-\hg) {\huge $\rightarrow$};}
    \foreach \j in {2,3} {
      \node at (5*\bs+3*\bs *\j+\rd *\j+0.3*\rd,2*\hg) {\huge $\rightarrow$};}
    \foreach \j in {1,2,3} {
      \node at (5*\bs+3*\bs *\j+\rd *\j+0.3*\rd,-4*\hg) {\huge $\rightarrow$};}
    \node at (5*\bs+3*\bs+\rd+0.3*\rd,\hg) {\huge $\nearrow$};
    \node at (5*\bs+0.3*\rd,-3*\hg) {\huge $\searrow$};
  \end{tikzpicture}}
  \end{minipage}
  \caption{On the left, schematics of some transitions in the asynchronous state transition graph of a Delta-Notch system with $L\geq 2$.
    Homogeneous states are part of the same strongly connected component (\cref{rmk:subspaces_transitions}).
    On the right, some paths in the asynchronous dynamics associated to the graph $\P_4$, from
    the homogeneous state $(\1,\0)$ to the three fixed points (see~\cref{thm:homogeneous}).
    White represents high levels.}\label{fig:subspaces_transitions}
\end{figure}
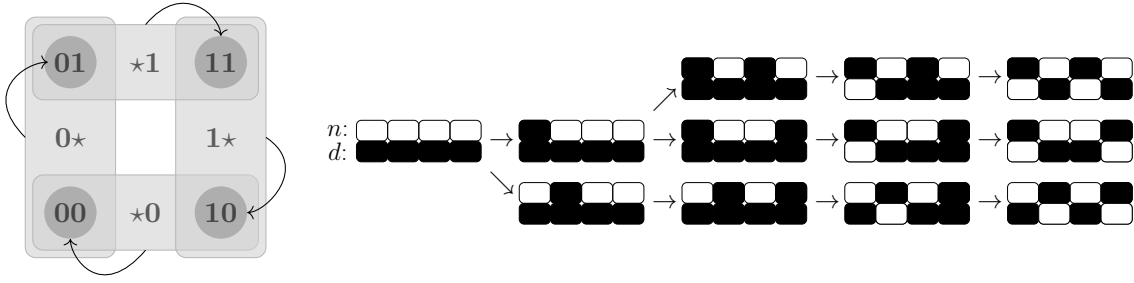
\end{remark}

\begin{theorem}\label{thm:homogeneous}
  Each fixed point $x\in\B^{2L}$ is reachable in $AD_F$ from any state in $\0\star\cup\star\0\cup\1\star\cup\star\1$.
\end{theorem}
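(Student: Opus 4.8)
The plan is to deduce the statement from three ingredients already available: the correspondence between fixed points of $F$ and $N$, the reachability of every fixed point of $N$ from the homogeneous states $\1$ and $\0$ (\cref{thm:homogeneous-N}), and the transition structure of the homogeneous states of $F$ collected in \cref{rmk:subspaces_transitions}. Throughout I would assume $L\ge 2$; the case $L=1$ is settled by \cref{ex:L1}, where the unique fixed point $01$ is reachable from the entire state space.

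First I would reduce everything to reachability from the single homogeneous corner $(\1,\0)$. Every fixed point $x\in\B^{2L}$ of $F$ has the form $x=(n^*,\overline{n^*})$ for a fixed point $n^*$ of $N$ (\cref{sec:dn_reduced}, via \cref{thm:reduction}~$(i)$). By \cref{thm:homogeneous-N} there is a path from $\1$ to $n^*$ in $AD_N$, and the lift of \cref{thm:reduction}~$(ii)$ recorded in \cref{sec:dn_reduced} turns it into a path from $(\1,\overline{\1})=(\1,\0)$ to $(n^*,\overline{n^*})=x$ in $AD_F$. Hence $x$ is reachable from $(\1,\0)$.

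Next I would propagate this to the remaining three corners. By \cref{rmk:subspaces_transitions} the four homogeneous states $(\0,\0)$, $(\0,\1)$, $(\1,\1)$, $(\1,\0)$ lie on a common directed cycle of $AD_F$, so each of them has a path to $(\1,\0)$ and therefore, by concatenation, to $x$. (Reachability of $x$ from $(\0,\1)$ could alternatively be obtained directly, by lifting the $AD_N$-path from $\0=\overline{\1}$ to $n^*$; routing everything through the cycle is simply more uniform.) Finally I would expand reachability from each corner to its whole subspace using the four implications of \cref{rmk:subspaces_transitions}: reachability from $(\0,\0)$ yields reachability from all of $\star\0$; from $(\1,\0)$ all of $\1\star$; from $(\1,\1)$ all of $\star\1$; and from $(\0,\1)$ all of $\0\star$. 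These four subspaces are exactly the union appearing in the statement, which concludes the argument.

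There is no single hard computation, since the substantive content is carried by \cref{thm:homogeneous-N} and the transition analysis of \cref{rmk:subspaces_transitions}. The points requiring care are purely organisational: matching each corner to the correct subspace through the right implication of \cref{rmk:subspaces_transitions}, checking that $\overline{\1}=\0$ so that the lifted path indeed begins at $(\1,\0)$, and remembering that the cycle of homogeneous states — and hence the third implication covering $\star\1$ — requires $L\ge 2$, which is precisely why the case $L=1$ is disposed of separately via \cref{ex:L1}.
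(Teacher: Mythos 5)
Your proof is correct and takes essentially the same route as the paper's: the paper likewise lifts the $AD_N$-path from $\1$ to each fixed point via \cref{thm:homogeneous-N} and \cref{thm:reduction}~$(ii)$ to get reachability from $(\1,\0)$, and then invokes \cref{rmk:subspaces_transitions} (the cycle through the homogeneous states plus the four subspace implications) to conclude. The only difference is organisational: you make explicit the matching of corners to subspaces and dispose of the $L=1$ case separately via \cref{ex:L1}, details the paper's terse proof leaves implicit.
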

\begin{proof}
  By~\cref{thm:homogeneous-N} and~\cref{thm:reduction}~$(ii)$, for each fixed point $(x,\bar{x})$ of $F$
  there is a path from $(\1,\0)$ to $(x,\bar{x})$. \cref{rmk:subspaces_transitions} then allows to conclude.
\end{proof}

\subsection{Trap spaces}

In this section, we give a characterisation of the trap spaces of Boolean Delta-Notch systems
and their reduced versions.

\begin{theorem}\label{thm:trap-spaces-N}
The trap spaces for $N$ are of the form $x[I]$, with $x$ fixed point, and for all
$i\in\S(I)\cap I^{\mathsf{c}}$ there exist $j\in\S(i)\cap I^{\mathsf{c}}$ such that $x_j=0$.
\end{theorem}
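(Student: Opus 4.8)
The plan is to first reduce the trap-space property of a subspace to a local, pointwise condition on $N$, and then read off the stated characterisation by analysing $N$ coordinatewise. Since the only transitions of $AD_N$ that leave a subspace $x[I]$ are flips of coordinates outside $I$, the subspace $x[I]$ is a trap space if and only if no such flip is ever enabled, i.e.
\begin{equation*}
  N_i(y)=x_i \quad\text{for all } y\in x[I] \text{ and all } i\in I^{\compl}.
\end{equation*}
Recalling that $N_i(y)=\overline{\bigwedge_{j\in\S(i)}y_j}$, I would split $I^{\compl}$ according to whether $i\in\S(I)$; by symmetry of $\G$, the condition $i\in\S(I)$ is equivalent to $\S(i)\cap I\neq\emptyset$, that is, to $N_i$ depending on at least one free coordinate of $x[I]$.

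For the direction asserting that the listed subspaces are trap spaces, I would assume $x$ is a fixed point satisfying the neighbour condition and verify the displayed identity for each $i\in I^{\compl}$. If $i\notin\S(I)$, then $\S(i)\subseteq I^{\compl}$, so $N_i$ is constant on $x[I]$ and equals $N_i(x)=x_i$ because $x$ is a fixed point. If $i\in\S(I)$, the hypothesis supplies some $j\in\S(i)\cap I^{\compl}$ with $x_j=0$; then $y_j=x_j=0$ for every $y\in x[I]$ forces $\bigwedge_{k\in\S(i)}y_k=0$, hence $N_i(y)=1$, and evaluating the same expression at $x$ gives $x_i=N_i(x)=1$, so $N_i(y)=x_i$ throughout. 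This settles the easy direction.

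For the converse, the key observation is that a trap space necessarily contains a fixed point: picking any $y\in x[I]$ and applying \cref{thm:fpts-only-N}, the resulting path to a fixed point stays inside the trap set $x[I]$, so its endpoint is a fixed point lying in $x[I]$. Replacing $x$ by this endpoint, I may assume the representative $x$ is a fixed point without changing the subspace, since the two representatives agree on $I^{\compl}$. It then remains to extract the neighbour condition: for $i\in\S(I)\cap I^{\compl}$ I would argue by contradiction, assuming $x_j=1$ for every $j\in\S(i)\cap I^{\compl}$. Choosing $y\in x[I]$ with all free neighbours in $\S(i)\cap I$ set to $1$ yields $N_i(y)=0$, while choosing $y'$ with one such free neighbour (which exists since $i\in\S(I)$) set to $0$ yields $N_i(y')=1$; this contradicts the displayed identity that $N_i$ is constantly $x_i$ on $x[I]$, so some $j\in\S(i)\cap I^{\compl}$ must satisfy $x_j=0$.

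I expect the only genuinely non-routine step to be the reduction of the representative to a fixed point in the converse: everything else is a direct coordinatewise unfolding of $N_i=\overline{\bigwedge_{j\in\S(i)}(\cdot)}$, but the claim that a trap space can always be written as $x[I]$ with $x$ a fixed point rests essentially on the absence of cyclic attractors, i.e. on \cref{thm:fpts-only-N}. The constraints on the coordinates $i\in I^{\compl}\setminus\S(I)$, which one would otherwise have to impose by hand, are then absorbed automatically into the requirement that $x$ be a fixed point, leaving only the neighbour condition on $\S(I)\cap I^{\compl}$ to be stated explicitly.
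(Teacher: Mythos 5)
Your proof is correct and takes essentially the same route as the paper's: the same case split on whether $i\in\S(I)$ for the easy direction, the same appeal to \cref{thm:fpts-only-N} to replace the representative of the subspace by a fixed point, and the same two test states (all free neighbours of $i$ set to $1$, respectively one free neighbour set to $0$) to extract the neighbour condition, which you merely package as a proof by contradiction where the paper argues directly. The only presentational difference is that you state the pointwise criterion $N_i(y)=x_i$ for $y\in x[I]$, $i\in I^{\compl}$ explicitly, which the paper uses implicitly.
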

\begin{proof}
  Consider a subspace $x[I]$ as in the statement, and take $y\in x[I]$.
  We need to show that all successors of $y$ in the asynchronous state transition graph are in $x[I]$,
  or, in other words, $N_i(y)=y_i$ for all $i\notin I$.

  If $i\notin I$ and $j\notin I$ for all $j\in\S(i)$, then
  $N_i(y)=\bigvee_{j\in\S(i)}\bar{y}_{j}=\bigvee_{j\in\S(i)}\bar{x}_{j}=x_i=y_i$.
  Consider now the case of $i\notin I$ and $I\cap\S(i)\neq\emptyset$.
  Then there exists $k\in\S(i)\cap I^{\mathsf{c}}$ such that $x_k=0$, therefore
  $N_i(y)=\bigvee_{j\in\S(i)}\bar{y}_{j}=1=N_i(x)=x_i=y_i$.

  Vice versa, consider a trap space $x[I]$.
  Since we must have $N_i(x)=x_i$ for all $i\notin I$, and all attractors of $N$ are fixed points (see~\cref{thm:fpts-only-N}),
  we can assume that $x$ is a fixed point.
  Consider $i\in\S(I)\cap I^{\mathsf{c}}$ and take $j\in I\cap\S(i)$.
  Then there exists a state $y\in x[I]$ with $y_{j}=0$, and therefore $x_i=N_i(x)=\bigvee_{k\in\S(i)}\bar{y}_{k}=1$.
  Now take a state $z\in x[I]$ with $z_{k}=1$ for all $k\in \S(i)\cap I$. Then $x_i=1=\bigvee_{k\in\S(i)}\bar{z}_{k}=\bigvee_{k\in\S(i)\cap I^{\mathsf{c}}} \bar{z}_{k}$.
  This means that there exists $k\in\S(i)\cap I^{\mathsf{c}}$ such that $x_{k}=0$, which concludes.
\end{proof}

The trap spaces for $N$ correspond therefore to areas of fixed Notch, with borders of high Notch
sustained by cells with fixed, low levels of Notch.

The following proposition allows to identify the minimal trap space containing a pattern and some
of its adjacent states in $\B^L$.

\begin{proposition}\label{prop:robustnessNany}
  Consider $x\in\B^L$ fixed point for $N$ and a set of indices $H\subseteq\cs$.
  Define
  \[H_0=\{i\in H\ | \ x_i=0\}, \qquad H_1=\{i\in H\ | \ x_i=1\},\]
  \[K=\{j\in\S(H_1)\cap H^\compl\ | \ x_j=0\},\]
  \[J=\{j\in\S(K\cup H_0)\cap H^\compl\ | \ x_h=1 \ \forall h\in\S(j), h\notin K\cup H_0\},\]
  \[I=H\cup K\cup J.\]
  Then $x[I]$ is the minimal trap space for $N$ containing $x[H]$.
\end{proposition}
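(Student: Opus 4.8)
The plan is to establish two facts: that $x[I]$ is a trap space containing $x[H]$, and that $x[I]$ is contained in every trap space containing $x[H]$. Together these show that $x[I]$ is the least, hence the unique minimal, such trap space. The containment $x[H]\subseteq x[I]$ is immediate from $H\subseteq I$. Throughout I would use that, since $x$ is a fixed point, $x_i=0$ forces $x_j=1$ for every $j\in\S(i)$, while $x_i=1$ forces $x_j=0$ for some $j\in\S(i)$; and I would use that the value-$0$ cells of $I$ are precisely those in $K\cup H_0$, since every cell of $J$ is a neighbour of a value-$0$ cell (a cell in $K\cup H_0$) and hence has value $1$.

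First I would check that $x[I]$ is a trap space by verifying the condition of \cref{thm:trap-spaces-N} with $x$ as the fixed point: every $i\in\S(I)\cap I^{\compl}$ should satisfy $x_i=1$ and admit a neighbour $j\in\S(i)\cap I^{\compl}$ with $x_j=0$. For the first part, I would take such an $i$ adjacent to some $k\in I$ and rule out $x_i=0$ by cases on where $k$ lies: if $k\in H_0\cup K$ then $x_k=0$ forces $x_i=1$; if $k\in H_1$ then $i\in\S(H_1)\cap H^{\compl}$ with $x_i=0$ would already place $i\in K\subseteq I$; and if $k\in J$ then the defining property of $J$ forces every value-$0$ neighbour of $k$ into $K\cup H_0\subseteq I$, so $x_i=0$ would give $i\in I$. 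Each case contradicts $i\in I^{\compl}$ or $x_i=0$, so $x_i=1$. For the second part, $x_i=1$ yields a neighbour of value $0$; if every value-$0$ neighbour of $i$ lay in $I$ (hence in $K\cup H_0$), then $i$ would meet the definition of $J$ and so belong to $I$, a contradiction. Thus $i$ has a neighbour $j$ with $x_j=0$ and $j\notin K\cup H_0$; since $x_j=0$ also excludes $j\in H_1\cup J$, this $j$ lies in $I^{\compl}$, as needed.

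For minimality, I would take an arbitrary trap space $T$ containing $x[H]$. Since $x[H]$ pins down all coordinates outside $H$ and leaves those in $H$ free, $T$ must have the form $x[I']$ with $H\subseteq I'$, and by \cref{thm:trap-spaces-N} it satisfies the border condition relative to $I'$. It then remains to prove $K\subseteq I'$ and $J\subseteq I'$. For $K$: a cell $j\in K$ has $x_j=0$ and therefore no value-$0$ neighbour; were $j\notin I'$, then $j\in\S(H_1)\subseteq\S(I')$ would put $j\in\S(I')\cap(I')^{\compl}$, and the border condition would demand a value-$0$ neighbour of $j$ outside $I'$, which cannot exist, so $K\subseteq I'$. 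For $J$: with $K\cup H_0\subseteq I'$ now in hand, any $j\in J$ lies in $\S(K\cup H_0)\subseteq\S(I')$; if $j\notin I'$, the border condition gives a value-$0$ neighbour $k$ of $j$ outside $I'$, yet the defining property of $J$ forces every value-$0$ neighbour of $j$ into $K\cup H_0\subseteq I'$, a contradiction. Hence $I=H\cup K\cup J\subseteq I'$, so $x[I]\subseteq x[I']=T$.

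The main obstacle is the closure argument in the first step: the sets $K$ and $J$ must be designed so that no further cell is forced to be free, i.e. so that $\S(I)\cap I^{\compl}$ consists only of ``safe'' value-$1$ cells that retain a value-$0$ neighbour outside $I$. The delicate case is $J$, whose defining condition (all neighbours outside $K\cup H_0$ have value $1$) is exactly what halts the propagation, and the same condition is what drives the minimality argument for $J$. Confirming that this single, non-iterative definition already produces a trap space—rather than needing to iterate the construction—is the crux of the proof.
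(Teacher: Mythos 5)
Your proof is correct. The first half---verifying through \cref{thm:trap-spaces-N} that $x[I]$ is a trap space---matches the paper's argument in all essentials, just organized by cases on the type of neighbour $k\in I$ (your preliminary observations that $x_i=1$ on the border and that the level-$0$ cells of $I$ are exactly $K\cup H_0$ are the paper's equations for $H_0\cup K$ and $H_1\cup J$). The minimality half, however, takes a genuinely different route. The paper argues dynamically: starting from the state $y\in x[H]$ with $y_i=1-x_i$ on $H$, it exhibits explicit paths in $AD_N$ that first flip every cell of $K$ (each now has a neighbour in $H_1$ at level $0$), and then, from a state with all of $H\cup K$ at level $1$, flip each $j\in J$; so any trap space containing $x[H]$ must leave every coordinate of $I$ free. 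You instead invoke the ``only if'' direction of \cref{thm:trap-spaces-N}: any trap space containing $x[H]$ has the form $x[I']$ with $H\subseteq I'$, and the border condition forces $K\subseteq I'$ (a cell of $K$ has no level-$0$ neighbour at all, so it cannot lie on the border $\S(I')\cap (I')^{\compl}$) and then $J\subseteq I'$ (every level-$0$ neighbour of a cell of $J$ lies in $K\cup H_0\subseteq I'$). Both are sound; yours is purely combinatorial, avoids constructing paths, and directly establishes that $x[I]$ is the least trap space containing $x[H]$, while the paper's argument yields extra dynamical information---the states witnessing the freeness of each coordinate in $I\setminus H$ are actually reachable from $x[H]$ in the asynchronous dynamics, which fits the reachability theme used elsewhere in the paper (e.g.\ in \cref{prop:strong-basins-N}).
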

\begin{proof}
  Start by observing that
  \begin{align}
    x_i & = 0 \text{ for all } i\in H_0\cup K, \label{eq:0}\\
    x_i & = 1 \text{ for all } i\in H_1\cup J. \label{eq:1}
  \end{align}

  To show that $x[I]$ is a trap space, taking $h\in\S(I)\cap I^{\mathsf{c}}$, we show that $\S(h)\cap I^{\mathsf{c}}$
  is non-empty and $x_k=0$ for some $k\in\S(h)\cap I^{\mathsf{c}}$ (see~\cref{thm:trap-spaces-N}).
  \begin{enumerate}
    \item $h\in\S(H_0)$: we have $x_h=1$ from~\cref{eq:0}.
          Since $h\notin J$, there exists $k\in\S(h)$ such that $x_k=0$, $k\notin K\cup H_0$.
          From~\cref{eq:1} we have $k\notin H_1\cup J$, and we are done.
    \item $h\in\S(H_1)$, $h\notin\S(H_0)$: since $h\notin K$, by definition of $K$ we have $x_h=1$.
          Since $h$ is not in $J$, there are two cases:
          \begin{itemize}
            \item $h$ is in $\S(K)$ and has a neighbour $k\notin K\cup H_0$ with $x_k=0$, and using~\cref{eq:1} we are done, or
            \item $h$ is not in $\S(K)$. In this case $h$ has a neighbour $k$ such that $x_k=0$,
                  and this neighbour can not be in $H_0$ or $K$, and using~\cref{eq:1} we conclude.
          \end{itemize}
    \item $h\in\S(K)$, $h\notin\S(H)$: we have $x_h=1$ from~\cref{eq:0}.
          Since $h\notin J$, there exists $k\notin K\cup H_0$ with $x_k=0$, and using~\cref{eq:1} we are done.
    \item $h\in\S(J)$, $h\notin \S(K)\cup\S(H)$: there exists $k\in J$ such that $h\in\S(k)$.
          By definition of $J$, since $h$ is a neighbour of $J$ that is not in $K$ or $H_0$, we have $x_h=1$.
          Then $x_j=0$ for some neighbour $j$ of $h$. Since $h\notin \S(K)\cup\S(H)$, we have $j\notin K\cup H$ as required,
          and we conclude again using~\cref{eq:1}.
  \end{enumerate}

  To prove that $x[I]$ is minimal, for each $i\in I\setminus H$, we show that there exists
  a path in $AD_N$ from a state $y\in x[H]$ to a state $z$ with $z_i\neq x_i$.
  Take $y\in x[H]$ such that $y_i=1-x_i$ for all $i\in H$.
  By definition of $K$, there is a path from $y$ to $\bar{y}^K$,
  hence the minimal trap space containing $x[H]$ contains $x[H\cup K]$.
  Take $z\in x[H\cup K]$ with $z_i=1$ for all $i\in H\cup K$.
  Then for each $j\in J$ the state $\bar{z}^j$ is a successor for $z$,
  which concludes the proof.
\end{proof}

We now consider the trap spaces for $F$.
We first show how a trap space for $F$ can be obtained from a trap space for $N$.

\begin{proposition}\label{prop:trap-spaces-reduced-full}
  The subspace $x[I]$ is a trap space for $N$ if and only if
  the subspace $(x,\bar{x})[I\cup(I+L)]$ is a trap space for $F$.
\end{proposition}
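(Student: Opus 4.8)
The plan is to reduce the trap-space property on both sides to a pointwise condition on the \emph{fixed} coordinates of the relevant subspace, and then to match these two conditions through the correspondence between the free Delta variables of $F$ and the free Notch variables of $N$.

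First I would record the elementary reformulation of the trap-space property for a subspace. For any Boolean map on $\B^m$ and any subspace $y[J]$, flipping a free coordinate $k\in J$ keeps a state inside $y[J]$, whereas flipping a fixed coordinate $k\notin J$ leaves it; hence $y[J]$ is a trap space if and only if $f_k(z)=y_k$ for every $z\in y[J]$ and every $k\in J^{\compl}$. Applied to $x[I]$ with the map $N$, this gives the condition
\[
\bigvee_{j\in\S(i)}\bar{z}_j = x_i \qquad \text{for all } z\in x[I] \text{ and all } i\in I^{\compl}.
\]
Applied to $A:=(x,\bar{x})[I\cup(I+L)]$ with the map $F$, whose fixed coordinates are the Notch indices $i\in I^{\compl}$ together with the Delta indices $i+L$ for $i\in I^{\compl}$, it gives a Notch part and a Delta part.

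Second I would dispose of the Delta part. In $A$ the coordinate $n_i$ is fixed to $x_i$ for every $i\in I^{\compl}$, and since $F_{i+L}(n,d)=\bar{n}_i$, the requirement $F_{i+L}=\bar{x}_i$ holds automatically on all of $A$. Thus the trap-space property of $A$ for $F$ is equivalent to the Notch conditions alone, namely $\bigvee_{j\in\S(i)} d_j = x_i$ for all $(n,d)\in A$ and all $i\in I^{\compl}$.

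The heart of the argument, and the only place requiring care, is to match these Notch conditions with the $N$-conditions above. Fix $i\in I^{\compl}$. In $A$ one has $d_j=\bar{x}_j$ for $j\notin I$ and $d_j$ free for $j\in I$; in $x[I]$ one has $\bar{z}_j=\bar{x}_j$ for $j\notin I$ and $\bar{z}_j$ free for $j\in I$. Hence, as the state ranges over the respective subspace, the tuples $(d_j)_{j\in\S(i)}$ and $(\bar{z}_j)_{j\in\S(i)}$ range over exactly the same subset of $\B^{\S(i)}$, so the disjunction $\bigvee_{j\in\S(i)}(\cdot)$ attains the same set of values in both cases. Consequently ``this disjunction equals $x_i$ at every point of the subspace'' is literally the same statement for $F$ as for $N$. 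Conjoining over $i\in I^{\compl}$ yields the claimed equivalence in both directions at once; notably this needs no separate assumption that $x$ be a fixed point, though that follows a posteriori from \cref{thm:trap-spaces-N}. The only genuine obstacle is the bookkeeping of free versus fixed coordinates; an alternative route would eliminate the Delta variables one at a time via \cref{prop:reduction}, but the direct comparison above is cleaner and establishes both implications simultaneously.
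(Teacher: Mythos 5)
Your proof is correct, and it takes a genuinely different route from the paper's. The paper derives the equivalence from its variable-elimination machinery: the direction from $F$ to $N$ is obtained by projecting away the Delta variables via \cref{prop:reduction}~$(i)$, while the converse is a two-stage elimination --- first the Delta variables indexed by $I^{\compl}$ (via \cref{prop:reduction}~$(ii)$, yielding an intermediate network $F'$), then those indexed by $I$ (via \cref{prop:reduction}~$(iii)$) --- where checking the hypothesis of $(iii)$, namely that no retained variable depends on the eliminated Delta variables, requires invoking the characterisation of trap spaces of $N$ from \cref{thm:trap-spaces-N}. Your argument instead verifies the defining condition of a trap space directly on both sides: a subspace is a trap space exactly when the update of every \emph{fixed} coordinate agrees with its fixed value throughout the subspace; the Delta conditions $F_{i+L}=\bar{x}_i$ hold automatically because $n_i$ is fixed on the lifted subspace, and the Notch conditions match because, for each $i\in I^{\compl}$, the tuples $(d_j)_{j\in\S(i)}$ attained on $(x,\bar{x})[I\cup(I+L)]$ and the tuples $(\bar{z}_j)_{j\in\S(i)}$ attained on $x[I]$ range over exactly the same subset of $\B^{\S(i)}$. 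This buys you two things: the proof is self-contained (in particular independent of \cref{thm:trap-spaces-N} and of any fixed-point assumption on $x$, so the proposition could even be placed before that theorem), and both implications fall out of a single computation. What the paper's route buys is reusability: it runs on the general-purpose reduction lemmas of \cref{prop:reduction}, which describe how trap spaces transform under elimination of variables in arbitrary Boolean networks, whereas your coordinate-matching exploits the specific structure $F_i=\bigvee_{j\in\S(i)}d_j$, $F_{i+L}=\bar{n}_i$ and would have to be redone for other couplings.
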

\begin{proof}
  If the subspace $(x,\bar{x})[I\cup(I+L)]$ is a trap space for $F$,
  then by~\cref{prop:reduction}~$(i)$ the projection $x[I]$ onto the first $L$ variables is a trap space for $N$.

  Vice versa, consider $x[I]$ trap space for $N$.
  Recall that $N$ is obtained from $F$ by elimination of the variables $i+L$, with $i\in\cs$,
  in the sense of~\cref{thm:reduction}.
  Call $F'$ the function obtained from $F$ by eliminating the variables $i+L$ with $i\in I$,
  so that $N$ can be obtained from $F'$ by eliminating the variables $i+L$ with $i\in I^{\compl}$.
  Denote by $\pi_{I^\compl}$ the projection on the variables in $I^\compl$.

  For each $i\in I^{\mathsf{c}}$, $y\in x[I]$ and $z\in\B^L$, we have $F_{i+L}(y,z)=\bar{x}_i$.
  Hence by applying~\cref{prop:reduction}~$(ii)$ to each variable in $I^\compl+L$
  we find that the subspace $(x,\pi_{I^\compl}(\bar{x}))[I]$ is a trap space for $F'$.

  Take $i\in I^{\compl}$ and $(y,z)\in\B^{2L}$ such that $(y,\pi_{I^\compl}(z))\in (x,\pi_{I^\compl}(\bar{x}))[I]$.
  If $\S(i)\cap I=\emptyset$, we have $F_i(y,z)=\bigvee_{j\in\S(i)}z_j=\bigvee_{j\in\S(i)\cap I^{\mathsf{c}}}z_j$,
  and if $\S(i)\cap I\neq\emptyset$ we have, using~\cref{thm:trap-spaces-N},
  $F_i(y,z)=\bigvee_{j\in\S(i)}z_j\geq \bigvee_{j\in\S(i)\cap I^{\mathsf{c}}}z_j=1$.
  That is, none of the variables in $I^\compl$ and $I^\compl+L$ depend on variables in $I+L$.
  Hence~\cref{prop:reduction}~$(iii)$ applies to each variable in $I+L$ and we conclude.
\end{proof}

\begin{theorem}\label{thm:trapspaces}
  Given $I=I_N\cup(I_D+L)$ with $I_N,I_D\subseteq\cs$,
  the subspace $x[I]$ is a trap space for $F$ if and only if
  the subspace $x[I_N\cup(I_N+L)]$ is a trap space for $F$, $I_N\subseteq I_D$ and
  \begin{enumerate}[label=(\roman*)]
     \item $\S(I_D\setminus I_N)\cap I_D=\emptyset$
       and $x_j=0$ for all $j\in I_D\setminus I_N$;
     \item for all $i\in\S(I_D\setminus I_N)$ there exists
           $j\in\S(i)\cap I^{\mathsf{c}}_D$ such that $x_j=0$.
  \end{enumerate}
\end{theorem}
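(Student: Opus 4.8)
The plan is to work directly from the definition of trap space, translating ``$x[I]$ is a trap space'' into the requirement that $F_k(y)=y_k$ for every $y\in x[I]$ and every fixed coordinate $k\notin I$, and then to sort these conditions into the two families of components. Since every trap space contains an attractor and all attractors of $F$ are fixed points (\cref{thm:fpts_only}), I may choose the representative $x$ to be a fixed point, so that $x=(x',\overline{x'})$ with $x'$ a fixed point of $N$; this makes the identity $x_{i+L}=\bar x_i$ available for free and will let me recognise the ``symmetric'' part of the conditions as \cref{thm:trap-spaces-N}.

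First I would treat the Delta components. For $i\notin I_D$ the value $F_{i+L}(y)=\bar y_i$ must be constant on $x[I]$, which forces $y_i$ to be constant, i.e.\ $i\notin I_N$; this yields $I_N\subseteq I_D$. Next I would treat the Notch components: for $i\notin I_N$ I split $\S(i)$ into the free-Delta neighbours $\S(i)\cap I_D$ and the fixed-Delta neighbours, on which $y_{j+L}=\bar x_j$. Constancy of the disjunction $F_i(y)=\bigvee_{j\in\S(i)}y_{j+L}$ then gives a dichotomy: if $\S(i)\cap I_D=\emptyset$ the equation $x_i=\bigvee_{j\in\S(i)}\bar x_j$ is automatic because $x$ is a fixed point, whereas if $\S(i)\cap I_D\neq\emptyset$ the free neighbours can be toggled, so the disjunction must be robustly equal to $1$: this forces $x_i=1$ together with a \emph{backup}, a fixed-Delta neighbour $j\in\S(i)\cap I_D^{\compl}$ with $x_j=0$.

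The heart of the argument is the analysis of $M=I_D\setminus I_N$, the cells with free Delta but fixed Notch, and I expect the main obstacle to be showing $\S(M)\cap I_D=\emptyset$. This is where the fixed-Notch constraint becomes self-referential: if $j\in M$ had a free-Delta neighbour $k\in\S(j)\cap I_D$, then constancy of $F_j$ would require a fixed-Delta backup $l\in\S(j)$ with $x_l=0$; but $l\notin I_D$ forces $l\notin I_N$ (using $I_N\subseteq I_D$), so $l$ is itself a fixed-Notch cell whose neighbour $j$ has free Delta, and the same backup argument applied to $l$ forces $x_l=1$, a contradiction. Once $\S(M)\cap I_D=\emptyset$ is established, every neighbour $l$ of an $M$-cell is fixed-Delta and fixed-Notch and falls in the second case above, so $x_j=\bigvee_{l\in\S(j)}\bar x_l=0$ for $j\in M$; together these give condition~(i), and the backup statement for $i\in\S(M)$ gives condition~(ii).

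Finally I would repackage the remaining constraints. Writing $I_D=I_N\sqcup M$, the fixed-Notch cells $i$ with $\S(i)\cap I_D\neq\emptyset$ split according to whether the witnessing free-Delta neighbour lies in $I_N$ or in $M$: the $\S(M)$-part is exactly condition~(ii), while the $\S(I_N)$-part reproduces, via \cref{prop:trap-spaces-reduced-full} and \cref{thm:trap-spaces-N}, the statement that $x[I_N\cup(I_N+L)]$ is a trap space for $F$. For the converse I would run the same case distinction in reverse, checking that these three families of conditions reconstruct the Delta- and Notch-constancy requirements; the only point needing care is that a backup supplied by \cref{thm:trap-spaces-N} lies in $\S(i)\cap I_N^{\compl}$, which I upgrade to $\S(i)\cap I_D^{\compl}$ by noting that a backup in $M$ would place $i$ in $\S(M)$, a case already covered by~(ii).
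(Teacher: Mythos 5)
Your proposal is correct, and its forward direction follows a genuinely different route from the paper's. The paper proves condition (i) dynamically: starting from $y\in x[I]$ with $y_{j+L}=1$ it builds an explicit path in $AD_F$ to a state $z$ with $z_k=1$, $z_{k+L}=0$ for all $k\in\S(j)$, and closure of the trap space under reachability then forces $x_j=F_j(z)=0$, whence $\S(j)\cap I_D=\emptyset$; the statement about $x[I_N\cup(I_N+L)]$ is obtained by projecting the trap space to $N$ (\cref{prop:reduction}~$(i)$) and lifting back via \cref{prop:trap-spaces-reduced-full}. You never use paths: you work from the static characterisation that each $F_k$ with $k\notin I$ must be constant on $x[I]$, and your key step, $\S(I_D\setminus I_N)\cap I_D=\emptyset$, comes from the two-level backup contradiction (a free-Delta neighbour of $j\in I_D\setminus I_N$ would force a fixed-Delta backup $l$ with $x_l=0$, yet $l$ is itself a fixed-Notch cell with the free-Delta neighbour $j$, forcing $x_l=1$); from this you recover $x_j=0$, condition (ii), and the $I_N$-part of the statement all within the same case analysis. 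The paper's dynamic argument is shorter once the reduction machinery is in place, while your static one is more self-contained and, in the converse direction, makes explicit a point the paper glosses over: \cref{thm:trap-spaces-N} only supplies a backup in $\S(i)\cap I_N^{\compl}$, and upgrading it to $\S(i)\cap I_D^{\compl}$ requires precisely your observation that a backup in $I_D\setminus I_N$ would place $i$ in $\S(I_D\setminus I_N)$, a case excluded by the hypothesis of that branch and already covered by (ii).
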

\begin{proof}
  If $x[I]$ is a trap space for $F$, since all attractors of $F$ are fixed points (see~\cref{thm:fpts_only}),
  we can assume that $x$ is a fixed point and write $x=(n,\bar{n})$.
  Then by~\cref{prop:reduction}~$(i)$ the subspace $n[I_N]$ is a
  trap space for $N$, and by~\cref{prop:trap-spaces-reduced-full} $x[I_N\cup(I_N+L)]$ is a trap space for $F$.
  In addition, $I_N\subseteq I_D$ follows from the definition of $F$.

  To prove $(i)$, consider $j\in I_D\setminus I_N$, and take an element $y\in x[I]$ with $y_{j+L}=1$.
  Then there exists a path from $y$ to a state $z$ with
  $z_k=1$ and $z_{k+L}=0$ for all $k\in\S(j)$, and since $x[I]$ is a trap space, we have $z\in x[I]$.
  Since $j\notin I_N$, we must have $x_j=\bigvee_{k\in\S(j)}z_{k+L}=0$.
  This is possible only if $I_D\cap\S(j)=\emptyset$ and $x_{k+L}=0$ for all $k\in\S(j)$.

  To show that $(ii)$ holds, take $k\in I_D\setminus I_N$. By point $(i)$, $x_k=0$ and therefore $x_i=1$ for all $i\in\S(k)$.
  Since, again by point $(i)$, any $i\in\S(k)$ is in $I^\compl_D$, there must exist a neighbour $j$
  of $i$ in $I^\compl_D$ such that $x_{j+L}=1$, which proves $(ii)$.

  Consider a subspace $x[I]$ such that $x[I_N\cup(I_N+L)]$ is a trap space for $F$,
  $I_N\subseteq I_D$ and $(i)$ and $(ii)$ hold, and take $y\in x[I]$.
  We need to show that $F_i(y)=y_i$ for all $i\notin I$.
  If $i\notin I$ and $i>L$, then $F_i(y)=\bar{y}_{i-L}=\bar{x}_{i-L}=x_i=y_i$.
  Similarly, if $i\notin I$, $i\leq L$ and $j\notin I$ for all $j\in\S(i)$, then
  $F_i(y)=\bigvee_{j\in\S(i)}y_{j+L}=\bigvee_{j\in\S(i)}x_{j+L}=x_i=y_i$.

  Consider now the case of $i\notin I$, $i\leq L$ and $I\cap\S(i)\neq\emptyset$.
  If $i\in\S(I_D\setminus I_N)$, then~$(i)$ implies $i\notin I_D$,
  and~$(ii)$ gives the existence of $k\in\S(i)\cap I_D^{\mathsf{c}}$ such that $x_k=0$.
  If $i\in\S(I_N)$ and $i\notin\S(I_D\setminus I_N)$, then since $x[I_N\cup(I_N+L)]$ is a trap space for $F$,
  by~\cref{prop:trap-spaces-reduced-full} and~\cref{thm:trap-spaces-N} there exists $k\in\S(i)$, $k\in I^\compl_D$ such that $x_k=0$.
  In both cases $y_{k+L}=x_{k+L}=1$ and $F_i(y)=\bigvee_{j\in\S(i)}y_{j+L}=1=F_i(x)=x_i=y_i$.
\end{proof}

The theorem states that the trap spaces for $F$ are found by lifting the trap spaces for $N$,
and optionally removing some constraints on Delta in isolated cells with low Notch, if the neighbouring
cells with high Notch are still sustained by other cells with high Delta.
Examples of trap spaces for a hexagonal grid and for a linear graph are given in~\cref{fig:trapspaces}.

The smallest trap spaces that are not fixed points are therefore of the form $x[\{i+L\}]$ for some steady state $x$
and some $i\in\cs$ such that $x_i=0$ and, for all $j\in\S(i)$, there is an index $k\in\S(j)$, $k\neq i$ such that $x_{k+L}=1$.
The trap space $x[\{i+L\}]$ consists of the fixed point $x$ and the state $\bar{x}^{i+L}$.
Under the same hypothesis, the subspace $x[\{i,i+L\}]$ is also a trap space.

\begin{figure}
  \begin{minipage}{10.5cm}
  \centering
  \begin{tikzpicture}[x=7.5mm,y=4.34mm]
    \tikzset{
      grayhex/.style={
        regular polygon,
        regular polygon sides=6,
        minimum size=10mm,
        inner sep=0mm,
        outer sep=0mm,
        rotate=0,
        fill=gray!80!white,
      draw}}
    \tikzset{
      whitehex/.style={
        regular polygon,
        regular polygon sides=6,
        minimum size=10mm,
        inner sep=0mm,
        outer sep=0mm,
        rotate=0,
        fill=white,
      draw}}
    \tikzset{
      blackhex/.style={
        regular polygon,
        regular polygon sides=6,
        minimum size=10mm,
        inner sep=0mm,
        outer sep=0mm,
        rotate=0,
        fill=black!90,
      draw}}

    \foreach \i in {0,...,5} 
      \foreach \j in {0,...,5} {
        \node[grayhex] at (2*\i,2*\j) {};
        \node[grayhex] at (2*\i+1,2*\j+1) {};}

      \node[whitehex] at (0,2) {};
      \node[whitehex] at (0,4) {};
      \node[whitehex] at (1,1) {};
      \node[whitehex] at (1,5) {};
      \node[whitehex] at (3,1) {};
      \node[whitehex] at (4,0) {};
      \node[whitehex] at (5,1) {};
      \node[whitehex] at (5,3) {};
      \node[whitehex] at (6,4) {};
      \node[whitehex] at (6,6) {};
      \node[whitehex] at (5,7) {};
      \node[whitehex] at (4,4) {};
      \node[whitehex] at (4,6) {};
      \node[whitehex] at (3,3) {};
      \node[whitehex] at (3,7) {};
      \node[whitehex] at (2,4) {};
      \node[whitehex] at (2,6) {};
      \node[whitehex] at (2,2) {};

      \node[blackhex] at (1,3) {};
      \node[blackhex] at (3,5) {};
      \node[blackhex] at (4,2) {};
      \node[blackhex] at (5,5) {};

      \node[whitehex] at (8,10) {};
      \node[whitehex] at (8,8) {};
      \node[whitehex] at (9,11) {};
      \node[whitehex] at (9,7) {};
      \node[whitehex] at (9,5) {};
      \node[whitehex] at (9,3) {};
      \node[whitehex] at (10,10) {};
      \node[whitehex] at (10,8) {};
      \node[whitehex] at (10,6) {};
      \node[whitehex] at (10,2) {};
      \node[whitehex] at (11,5) {};
      \node[whitehex] at (11,3) {};

      \node[blackhex] at (9,9) {};
      \node[blackhex] at (10,4) {};
  \end{tikzpicture}
  \end{minipage}
  \begin{minipage}{4.5cm}
  \resizebox{\linewidth}{!}{
  \arraycolsep=0.5pt\def\arraystretch{0.5}
  \begin{tikzcd}[ampersand replacement=\&,column sep=0.2em,row sep=small]
    \& \& \begin{array}{ccc}\star&\star&\star\\\star&\star&\star\end{array} \arrow[dll,-] \arrow[d,-] \arrow[dddr,xshift=5pt,-] \& \\
    \begin{array}{ccc}\star&1&0\\\star&0&1\end{array} \arrow[d,-] \& \& \begin{array}{ccc}0&1&\star\\1&0&\star\end{array} \arrow[d,-] \\
    \begin{array}{ccc}0&1&0\\\star&0&1\end{array} \arrow[dr,-] \& \& \begin{array}{ccc}0&1&0\\1&0&\star\end{array} \arrow[dl,-] \\
    \& \begin{array}{ccc}0&1&0\\1&0&1\end{array} \&        \& \begin{array}{ccc}1&0&1\\0&1&0\end{array}
  \end{tikzcd}}
  \end{minipage}
  \caption{On the left, example of levels of Notch characterising a trap space in a hexagonal grid.
    Areas of fixed Notch have a border with high Notch (in white) and an inner border with at least one
    neighbouring cell with low Notch (in black) for each cell at the outer border.
    Cells in grey have an undefined level of Notch.
    On the right, Hasse diagram for the subset relation of the trap spaces for the
    Boolean Delta-Notch system associated to the graph $\P_3$
    (the levels of Delta are written below the corresponding levels of Notch).}\label{fig:trapspaces}
\end{figure}

\begin{remark}\label{rmk:maximalts}
  For $L\geq 2$, the maximal non-trivial trap spaces for $N$ and $F$ are of the form
  $x[I]$ and $(x,\bar{x})[I\cup(I+L)]$ respectively, with $I=\cs\setminus(\{i\} \cup\S(i))$,
  $x$ fixed point for $N$ and $x_i=0$.
\end{remark}

Consider a trap space for $N$. The variables that are not fixed in the trap space identify connected subgraphs of $\G$,
and the dynamics corresponding to each connected component is a separate Boolean Delta-Notch system.

\begin{remark}\label{rmk:trapspace2fixedpoints}
  Consider a trap space $x[I]$ for $N$, and the subgraph $\G_I$ obtained by removing all vertices outside $I$
  and all the incident edges.
  Call $\G_1,\dots,\G_k$ the connected components of this subgraph, with vertices $C_1,\dots,C_k$ respectively.
  Write $N^1,\dots,N^k$ for the reduced Boolean-Delta Notch models associated to $\G_1,\dots,\G_k$,
  and $\pi^1,\dots,\pi^k$ for the projections on the variables in $C_1,\dots,C_k$ respectively.

  Then if $y^1,\dots,y^k$ are fixed points for $N^1,\dots,N^k$ respectively, and $y\in x[I]$
  satisfies $\pi^h(y)=y^h$ for $h=1,\dots,k$, then $y$ is a fixed point for $N$.

  In particular, by~\cref{rmk:atleast2fpts} any trap space $x[I]$ for $N$ with $\S(I)\cap I\neq\emptyset$ contains at least two fixed points,
  and any trap space $(x,\bar{x})[J]$ for $F$ with $\S(J)\cap J\neq\emptyset$ contains at least two fixed points.
\end{remark}

We have the following corollary of~\cref{prop:robustnessNany} and~\cref{thm:trapspaces}.

\begin{proposition}\label{prop:robustnessFany}
  Consider $x\in\B^{2L}$ fixed point for $F$ and a set of indices $H\subseteq\cs$.
  Then $x[I\cup(I+L)]$ is the minimal trap space for $F$ containing $x[H\cup(H+L)]$,
  where $I$ is defined as in~\cref{prop:robustnessNany}.
\end{proposition}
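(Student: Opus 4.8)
The plan is to combine the explicit description of the minimal trap space for $N$ from \cref{prop:robustnessNany} with the lifting correspondence between trap spaces of $N$ and $F$ established in \cref{prop:trap-spaces-reduced-full} and \cref{thm:trapspaces}. Let me write $x=(n,\bar{n})$, so that $n\in\B^L$ is a fixed point for $N$. The starting set $x[H\cup(H+L)]$ is the subspace of $\B^{2L}$ in which the Notch and Delta variables indexed by $H$ are both freed. My aim is to show that the minimal trap space for $F$ containing this set is exactly $(n,\bar{n})[I\cup(I+L)]$, where $I=H\cup K\cup J$ is the index set built in \cref{prop:robustnessNany}.

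First I would establish that $(n,\bar{n})[I\cup(I+L)]$ is indeed a trap space for $F$ containing $x[H\cup(H+L)]$. Containment is immediate since $H\subseteq I$. That it is a trap space follows from \cref{prop:robustnessNany}, which guarantees $n[I]$ is a trap space for $N$, together with \cref{prop:trap-spaces-reduced-full}, which lifts any trap space $n[I]$ for $N$ to the trap space $(n,\bar{n})[I\cup(I+L)]$ for $F$. This gives the ``upper bound'': the minimal trap space for $F$ containing $x[H\cup(H+L)]$ is contained in $(n,\bar{n})[I\cup(I+L)]$.

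For the reverse inclusion (minimality), I would argue that every freed index in $I\cup(I+L)$ must actually be freed in any trap space for $F$ containing $x[H\cup(H+L)]$. The key observation is that in $F$ the Delta variable $d_i$ is governed by $\bar{n}_i$, so freeing $n_i$ forces $d_i$ to vary as well: any trap space for $F$ containing a state where $n_i$ is flipped must also contain the state where $d_i$ is correspondingly flipped. Thus, freeing the Notch coordinates on $H$ in the $F$-dynamics immediately propagates to freeing the Delta coordinates on $H$, and more generally the minimal trap space must be ``balanced'' on $I$, freeing both $n_i$ and $d_i$ for each $i$ that gets freed. Concretely, I would note that the minimal trap space for $F$ containing $x[H\cup(H+L)]$ projects, via \cref{prop:reduction}~$(i)$, to a trap space for $N$ containing $n[H]$; by the minimality clause of \cref{prop:robustnessNany} this projection must contain $n[I]$, so all Notch coordinates in $I$ are freed. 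The reachability paths constructed in the minimality part of \cref{prop:robustnessNany}---from a state in $n[H]$ first to $\bar{y}^K$ and then flipping each $j\in J$---lift via \cref{thm:reduction}~$(ii)$ to paths in $AD_F$ that free the corresponding Delta coordinates, forcing $I+L$ to be freed as well.

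The main obstacle I anticipate is the careful bookkeeping in the minimality argument: I must verify that the lifted reachability paths in $AD_F$ genuinely force \emph{both} the Notch and Delta coordinates on all of $I$ to be freed, and that no \emph{additional} coordinates outside $I\cup(I+L)$ are forced---this latter point is exactly what the trap-space property (the upper bound) rules out. The delicate case is the asymmetry flagged in \cref{thm:trapspaces}, where a Delta coordinate can be freed without its Notch partner; here I would confirm that the symmetric freeing described by $I\cup(I+L)$ is the correct minimal object precisely because the starting set $x[H\cup(H+L)]$ is itself symmetric in Notch and Delta over $H$, which rules out the one-sided trap spaces of \cref{thm:trapspaces}~$(i)$--$(ii)$ as candidates smaller than $(n,\bar{n})[I\cup(I+L)]$.
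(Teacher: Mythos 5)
Your proposal is correct and takes essentially the same approach as the paper, which states this result as a direct corollary of \cref{prop:robustnessNany} and \cref{thm:trapspaces}: you lift the minimal $N$-trap space $n[I]$ to the $F$-trap space $(n,\bar{n})[I\cup(I+L)]$ via \cref{prop:trap-spaces-reduced-full}, and obtain minimality by projecting any $F$-trap space containing $x[H\cup(H+L)]$ down to $N$ (\cref{prop:reduction}~$(i)$) and invoking the minimality clause of \cref{prop:robustnessNany} together with the forced freeing of Delta coordinates. Your extra path-lifting step via \cref{thm:reduction}~$(ii)$ is harmless but redundant, since the containment $I_N\subseteq I_D$ in \cref{thm:trapspaces} (equivalently, your observation that $d_i$ is driven by $\bar{n}_i$) already forces $I+L$ to be freed once $I$ is.
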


\subsection{Basins of attraction}

We now want to characterise the fixed points that are reachable from a given state,
for the reduced and the full models.

It is easy to see that the reduction in the number of variables has consequences on the reachability properties,
and some configurations for Notch that are reachable from a given state $(n,d)$ in a full two-variable model
might not be reachable from the state $n$ in the corresponding reduced model.
For instance, for the graph $\P_4$, there is no path in $AD_N$ from $1001$ to the fixed point $0110$,
but there is a path in $AD_F$ from $10010110$ to the fixed point $01101001$.

The following results characterise the states that are reachable in $AD_N$
from a given initial condition.
Given $I\subseteq\cs$, we use the notation $\G_I$ for the subgraph of $\G$
with set of vertices $I$ and set of edges consisting of all edges of $\G$
with both endpoints in $I$.

\begin{proposition}\label{prop:not-reachable}
  Given $x\in\B^L$, consider a subset $I\subseteq\cs$ such that $\G_I$ is connected,
  $x_i=0$ for all $i\in I$ and $x_h=1$ for all $h\in\S(I)\cap I^\compl$.
  If $y\in\B^L$ is such that $y_i=1$ for all $i\in I$, then $y$ is not reachable from $x$ in $AD_N$.
\end{proposition}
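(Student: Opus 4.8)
We have a configuration $x$ with a connected "block" $I$ where all Notch levels are $0$, surrounded by a boundary $\S(I)\cap I^\compl$ of cells with Notch level $1$. We want to show that no state $y$ with $y_i=1$ for every $i\in I$ can be reached from $x$ in $AD_N$. Intuitively, the cells in $I$ all have low Notch, and they are "protected": a cell $i\in I$ can only have its Notch flipped from $0$ to $1$ when $N_i$ evaluates to $1$, i.e. when $\bigvee_{j\in\S(i)}\bar n_j = 1$, meaning some neighbour of $i$ currently has low Notch. So to raise the first cell of $I$ to $1$, one would need a neighbour with Notch $0$; but the only neighbours of $I$ with low Notch are inside $I$ itself (the external boundary is all $1$), and those are still at $0$. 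This suggests a monotonicity/invariant argument.

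**The plan.** The plan is to argue that the set of states reachable from $x$ never contains a state in which \emph{all} cells of $I$ simultaneously have Notch level $1$; more precisely, I would show by induction along any path from $x$ that at least one cell of $I$ retains Notch level $0$. First I would set up the key local observation: for $i\in I$, the value $N_i(z)=\overline{\bigwedge_{j\in\S(i)} z_j}$ is $1$ precisely when some neighbour $j\in\S(i)$ has $z_j=0$. Thus an edge of $AD_N$ that flips cell $i\in I$ from $0$ to $1$ requires a neighbour of $i$ (inside $I$, since $\S(i)\cap I^\compl$ consists of cells that started at $1$) to currently sit at $0$. The obstacle to a clean induction is that the boundary cells, which start at $1$, could in principle themselves drop to $0$ and then "unlock" an interior cell; so the invariant "some cell of $I$ is $0$" is not obviously preserved without tracking the boundary too.

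**Refining the invariant.** To handle this, I would consider, along a path from $x$, the first time (if any) that every cell of $I$ has Notch $1$, and examine the transition $z\to\bar z^{\,i}$ that achieves this, where necessarily $i\in I$, $z_i=0$, $\bar z^{\,i}_i=1$, and $z_k=1$ for all $k\in I\setminus\{i\}$. This flip is legal only if $N_i(z)=1$, i.e. $z_j=0$ for some $j\in\S(i)$. By connectivity and the hypothesis on the external boundary, I would argue that all neighbours of $i$ are either in $I$ (hence already at $1$ by the choice of $i$ as the last to flip) or in $\S(I)\cap I^\compl$. The heart of the argument is therefore to control the boundary: I would show that boundary cells $h\in\S(I)\cap I^\compl$ cannot have dropped to $0$ along the path as long as the interior has not been fully raised — or, more robustly, I would strengthen the invariant to track both the interior and a buffer of boundary cells. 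Concretely, a clean formulation is: along any path from $x$, if a cell $h\in\S(I)$ has $z_h=0$ then some cell of $I$ also has low Notch.

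**Where the difficulty lies, and a possible shortcut.** The main obstacle is precisely this coupling between the interior and its boundary: one must rule out a "chain reaction" in which the boundary temporarily collapses to $0$ and thereby permits the interior to rise. The cleanest route may be to invoke the energy/threshold structure or, more directly, to use \cref{thm:trap-spaces-N}: I would try to exhibit a trap space $x'[I']$ containing $x$ but excluding every state with all of $I$ equal to $1$, so that reachability is blocked because trajectories cannot leave the trap space. A natural candidate takes $I'=\cs\setminus(I\cup(\S(I)\cap I^\compl))$, fixing the block $I$ at $0$ and its boundary at $1$; one checks via \cref{thm:trap-spaces-N} that this is a trap space, after first replacing $x$ by a fixed point agreeing with it on $I\cup\S(I)$. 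Any $y$ with $y_i=1$ on $I$ lies outside this trap space, so it is unreachable from $x$. This reduces the whole claim to verifying the trap-space condition of \cref{thm:trap-spaces-N} for the explicit subspace, which I expect to be a short computation using connectivity of $\G_I$ and the boundary hypothesis; that verification is the step I would expect to require the most care.
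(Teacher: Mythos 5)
Your final plan --- reducing the claim to exhibiting a single trap space via \cref{thm:trap-spaces-N} --- does not work, and the failure is exactly the difficulty you flagged earlier. The subspace fixing all of $I$ at $0$ and its external boundary at $1$ is \emph{not} a trap space whenever $|I|\geq 2$: since $\G_I$ is connected, any $i\in I$ then has a neighbour $j\in I$ with $x_j=0$, so $N_i(x)=\bigvee_{k\in\S(i)}\bar{x}_k=1\neq x_i$, and the dynamics can leave the subspace by raising $i$ (in the path graph with values $1\,0\,0\,0\,1$, the zero block genuinely starts to rise). The repair step you suggest, ``replacing $x$ by a fixed point agreeing with it on $I\cup\S(I)$'', is impossible for the same reason: a fixed point of $N$ can never have two adjacent coordinates equal to $0$, so for $|I|\geq 2$ no fixed point agrees with $x$ on $I$; and since by \cref{thm:trap-spaces-N} every trap space has the form $z[J]$ with $z$ a fixed point, no trap space at all can pin a connected block of size at least $2$ at $0$. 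The non-reachability asserted in \cref{prop:not-reachable} is genuinely not witnessed by a single trap space, so the shortcut cannot be made to work; your fallback invariant (``if some boundary cell is $0$ then some cell of $I$ is $0$'') is also too weak, since it is satisfied at the state just before the last interior cell flips and thus yields no contradiction.

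What is missing is a mechanism for handling the fact that part of $I$ \emph{can} rise. The paper proceeds by induction on $|I|$: the trap-space argument is used only in the base case $|I|=1$ (where your candidate subspace is indeed a trap space); for the inductive step one takes the first state $z$ on a hypothetical path at which some $i\in I$ equals $1$, notes that up to that point every boundary cell $j\in\S(I)\cap I^\compl$ has a $0$ neighbour in $I$ and hence $N_j=1$, so the boundary is still all $1$ at $z$, and then applies the induction hypothesis to each connected component of $\G_{I\setminus\{i\}}$, whose external boundary (now including $i$) is all $1$ at $z$. Equivalently, one can close your invariant idea by strengthening it existentially: the property ``there exists a nonempty $J\subseteq I$ with $\G_J$ connected, $z_j=0$ for all $j\in J$, and $z_h=1$ for all $h\in\S(J)\cap J^\compl$'' is preserved by every asynchronous transition (a rise inside $J$ shrinks the witness to a connected component of $J$ minus the flipped cell; a boundary cell of $J$ cannot fall because it has a $0$ neighbour in $J$), and it fails at any $y$ with $y_i=1$ on all of $I$. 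Either of these completes the argument; your proposal as written does not.
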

\begin{proof}
  We proceed by induction on the size of $I$.

  If $I=\{i\}$ for some $i\in\cs$, then by~\cref{thm:trap-spaces-N}
  the subspace $x[(I\cup\S(I))^{\mathsf{c}}]$ is a trap space for $AD_N$ and $y$ can not be reached from $x$.

  Assume that the conclusion holds for all sets of size smaller or equal to $k$ and suppose that $|I|=k+1$.
  By definition, $N_i(x)=1$ for all $i\in I$, and $N_j(x)=1$ for all $j\in\S(I)\cap I^\compl$.
  Take a path starting from $x$ and $z$ the first state in the path such that $z_i=1$ for some $i\in I$.
  By definition of $z$, we must have $z_j=1$ for all $j\in\S(I)\cap I^\compl$.
  Then any subset $J$ of $I\setminus\{i\}$ defining a connected component of $\G$ satisfies
  $|J|\leq k$, $z_j=0$ for all $j\in J$ and $z_h=1$ for all $h\in\S(J)\cap J^\compl$,
  and we conclude, using the induction hypothesis, that $y$ can not be reached from $z$,
  and therefore from $x$.
\end{proof}

To give the full characterisation of the fixed points reachable from a given state we will use
the following lemma. It formalises the idea that, given a state $x$ and some indices $I$
connected by edges in $\G$ and such that $x_i=0$ for all $i\in I$, it is possible, in the
asynchronous dynamics of $N$, to keep an arbitrary component $i$ in $I$ fixed to zero
while changing all other levels in $I$ from zero to one.

\begin{lemma}\label{lemma:tree-to-1}
  Given $x\in\B^L$, consider a subset $I\subseteq\cs$ such that $\G_I$ is connected
  and $x_i=0$ for all $i\in I$.
  Then for any $i\in I$ and $J\subseteq I\setminus\{i\}$ there is a path in $AD_N$ from $x$ to $\bar{x}^J$.
\end{lemma}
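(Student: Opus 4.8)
The plan is to build the path explicitly, flipping the cells of $J$ from $0$ to $1$ one at a time in an order dictated by a spanning tree. First I would fix a spanning tree $T$ of the connected graph $\G_I$ rooted at the distinguished cell $i$, and for each $k\in I$ write $\operatorname{dist}(k)$ for the depth of $k$ in $T$ and, when $k\neq i$, let $p(k)$ be its parent. The observation driving everything is that, by definition of $N$, we have $N_k(n)=1$ exactly when $k$ has at least one neighbour at level $0$ in $n$; so to raise a cell from $0$ to $1$ it suffices to keep one of its neighbours low at the moment of the flip, and since tree edges are edges of $\G_I$ and hence of $\G$, the parent $p(k)$ is always available as such a neighbour while it remains low.

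Next I would order the elements of $J$ as $k_1,\dots,k_m$ with $\operatorname{dist}(k_1)\geq\cdots\geq\operatorname{dist}(k_m)$ (deepest first), and set $y^0=x$ and $y^t=\bar{x}^{\{k_1,\dots,k_t\}}$, so that $y^m=\bar{x}^J$. Each $y^t$ agrees with $x$ except that $k_1,\dots,k_t$ have been raised to $1$; in particular $i$, which lies outside $J$, stays at $0$ throughout, as do all cells outside $I$, so no cell is ever flipped twice and the final state is exactly $\bar{x}^J$. It then remains to exhibit, for each $t$, an edge $y^t\to y^{t+1}$ in $AD_N$.

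The heart of the argument is this single-step claim. Since $k_{t+1}\notin\{k_1,\dots,k_t\}$ we have $(y^t)_{k_{t+1}}=x_{k_{t+1}}=0$, so $y^{t+1}=\overline{(y^t)}^{k_{t+1}}$ is a successor of $y^t$ precisely when $N_{k_{t+1}}(y^t)=1$, i.e.\ when $k_{t+1}$ has a low neighbour in $y^t$. Here I would invoke the parent: $p(k_{t+1})$ is a neighbour of $k_{t+1}$, it lies in $I$ so $x_{p(k_{t+1})}=0$, and its depth $\operatorname{dist}(k_{t+1})-1$ is strictly smaller than $\operatorname{dist}(k_s)$ for every $s\leq t$ by the chosen ordering, whence $p(k_{t+1})\notin\{k_1,\dots,k_t\}$ and $(y^t)_{p(k_{t+1})}=0$. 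Therefore $N_{k_{t+1}}(y^t)=1$, the edge exists, and chaining $y^0\to y^1\to\cdots\to y^m$ gives the required path from $x$ to $\bar{x}^J$.

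The main obstacle — and the reason a tree and a specific order are needed rather than an arbitrary enumeration — is guaranteeing a persistent low neighbour at the instant each cell is flipped. Raising cells closest to $i$ first could strand a deeper cell all of whose neighbours have already been lifted to $1$; processing deepest-first instead ensures every cell's parent is shallower, hence flipped strictly later (or never), and so still low, with the root $i$ acting as the permanent anchor at level $0$. Once the depth ordering is fixed, verifying that it is well defined and that each flip is a genuine $AD_N$ edge is routine, as sketched above.
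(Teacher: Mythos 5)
Your proof is correct and follows essentially the same strategy as the paper's: both fix a spanning tree of $\G_I$ rooted at $i$ and flip the cells of $J$ in order of decreasing depth, using the fact that each cell's parent is shallower (hence still at level $0$, with the root $i$ as permanent anchor) to guarantee the flip is a valid transition in $AD_N$. The only cosmetic difference is that the paper performs the flips layer by layer (whole distance classes at a time) while you linearize them into a single depth-sorted sequence; the underlying argument is identical.
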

\begin{proof}
  Fix $i\in I$ and $J\subseteq I\setminus\{i\}$. Since $\G_{I}$ is connected, there exists a spanning tree $T$ for $\G_{I}$ with $i$ as root vertex.
  Denote by $m$ the maximum distance of the vertices in $I$ from $i$ along the paths in $T$.
  For $k=0,\dots,m$, denote by $I_k$ the vertices in $I$ at distance $k$ from $i$ in $T$,
  define $J_k=J\cap(\bigcup_{j=m-k+1}^{m}I_j)$ and set $y^k=\bar{x}^{J_k}$.
  We thus have $y^0=x$, $y^m=\bar{x}^J$ and $y^k=\overline{y^{k-1}}^{J\cap I_{m-k+1}}$ for $k=1,\dots,m$.
  Then for each $k=1,\dots,m$ we have $y^{k-1}_j=0$ for $j\in I_{m-k}$ and $j\in I_{m-k+1}$, hence
  $N_j(y^{k-1})=\bigvee_{h\in\S(j)}y^{k-1}_h\geq\bigvee_{h\in\S(j)\cap I_{m-k}}y^{k-1}_h=1$ for all $j\in I_{m-k+1}$,
  and therefore $AD_N$ has a path from $y^{k-1}$ to $y^k$, which concludes.
\end{proof}

\begin{theorem}\label{thm:reach-N}
  Given $x\in\B^L$, consider the partition of $\{i\in\cs|x_i=0\}$ into
  maximal disjoint sets $(I_\nu)_\nu$ such that $\G_{I_\nu}$ is connected.
  A fixed point $y\in\B^L$ for $N$ is reachable from $x$ in $AD_N$ if and only if
  for each $I_\nu$ there exists $i\in I_\nu$ such that $y_i=0$.
\end{theorem}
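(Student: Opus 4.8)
The plan is to prove the two implications separately, relying on \cref{prop:not-reachable} for necessity and on \cref{lemma:tree-to-1} for sufficiency. For necessity I argue by contraposition, one component at a time. Fix a component $I_\nu$ of the partition: by construction $\G_{I_\nu}$ is connected and $x_i=0$ for all $i\in I_\nu$, and the maximality of $I_\nu$ forces $x_h=1$ for every $h\in\S(I_\nu)\cap I_\nu^{\compl}$ (any such $h$ with $x_h=0$ would enlarge the component, contradicting maximality). Thus $I_\nu$ satisfies the hypotheses of \cref{prop:not-reachable}, so if $y_i=1$ for all $i\in I_\nu$ then $y$ is not reachable from $x$. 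Reading this over all $\nu$, reachability of $y$ forces each $I_\nu$ to contain some index with $y$-value $0$.

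For sufficiency I assume each $I_\nu$ contains an index where $y$ vanishes and build an explicit path $x\to y$ in $AD_N$ in two phases. Write $Z=\{i\in\cs\ |\ y_i=0\}$. Since $y$ is a fixed point, $y_p=0$ gives $0=N_p(y)=\bigvee_{j\in\S(p)}\bar y_j$, so $y_j=1$ for every $j\in\S(p)$; in particular $Z$ is an independent set of $\G$, and every neighbour of a vertex of $Z$ has $y$-value $1$. In Phase~1 I treat the components one at a time. For each $I_\nu$ I pick a root $i_\nu\in I_\nu$ with $y_{i_\nu}=0$ and set $J_\nu=\{i\in I_\nu\ |\ y_i=1\}=I_\nu\setminus Z$; since $i_\nu\in Z$ we have $J_\nu\subseteq I_\nu\setminus\{i_\nu\}$, so \cref{lemma:tree-to-1} supplies a path that flips exactly the vertices of $J_\nu$ from $0$ to $1$ while leaving every vertex of $I_\nu\cap Z$ at $0$. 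Distinct components are pairwise non-adjacent, since an edge joining two of them would merge them into one; hence these component-wise paths concatenate, each one flipping only vertices inside its own component, its validity involving only values inside the component and on its boundary (which remain equal to $1$ throughout). After Phase~1 we reach the state $w$ with $w_i=y_i$ for $i\in\bigcup_\nu I_\nu=\{i\ |\ x_i=0\}$ and $w_i=1$ for all remaining $i$.

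In Phase~2 I flip, in any order, the indices of $B=\{i\in\cs\ |\ x_i=1,\ y_i=0\}=Z\setminus\{i\ |\ x_i=0\}$ from $1$ to $0$. These are precisely the coordinates on which $w$ and $y$ still disagree, so once they are all switched off we have reached $y$. Each such flip is legal: for $p\in B\subseteq Z$ every neighbour of $p$ lies outside $Z$ and therefore equals $1$ in $w$; moreover the only flips performed in Phase~2 occur at vertices of $B\subseteq Z$, which are pairwise non-adjacent and non-adjacent to any neighbour of $p$, so those neighbour values persist at $1$ until $p$ is toggled. Hence $N_p=0$ exactly when $p$ is switched, and the transition $p\colon 1\to 0$ is available. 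Concatenating the two phases yields the required path from $x$ to $y$.

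I expect the crux to be the verification of Phase~2, and specifically the claim that the $1\to 0$ transitions remain legal throughout: this is where the fixed-point hypothesis on $y$ is indispensable, for it is the independence of $Z$ that guarantees the neighbours of each coordinate we switch off are, and stay, equal to $1$. The concatenation in Phase~1 rests on the small but necessary observation that the components $I_\nu$ are mutually non-adjacent, so that the successive applications of \cref{lemma:tree-to-1} do not interfere with one another or disturb the boundary ones on which each application depends.
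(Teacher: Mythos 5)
Your proof is correct and follows essentially the same route as the paper's: necessity via \cref{prop:not-reachable} applied to each maximal component, and sufficiency by first raising, component by component, the coordinates where $y$ equals $1$ using \cref{lemma:tree-to-1} (your Phase~1 state $w$ is exactly the paper's intermediate state $z$), then lowering the remaining disagreement set $B$, whose flips are legal because $y$ is a fixed point. One harmless slip: the claim that vertices of $B$ are ``non-adjacent to any neighbour of $p$'' is false in general (a vertex of $B$ can certainly be adjacent to a neighbour of $p$, e.g.\ in $\P_5$ with $Z=\{1,3,5\}$ and $p=1$), but it is not needed---what matters, and what you also state, is that the neighbours of $p$ lie outside $Z\supseteq B$ and hence are never themselves flipped in Phase~2.
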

\begin{proof}
  Suppose that, for some $I\in (I_\nu)_\nu$, $y_i=1$ for all $i\in I$.
  Observe that $x_h=1$ for all $h\in\S(I)\cap I^\compl$.
  Then the conclusion follows from~\cref{prop:not-reachable}.

  For the other direction, suppose that $y\in\B^L$ is a fixed point such that for each set $I_\nu$
  there exists $i\in I_\nu$ with $y_i=0$. Define $I^1_\nu=\{j\in I_\nu|y_j=1\}$.
  Observe that the sets $I^1_\nu$ are disjoint.
  By~\cref{lemma:tree-to-1}, for each $\nu$, there exists a path from $x$ to $\bar{x}^{I^1_\nu}$.
  Since the components in $I_\nu$ do not depend on components in $I_\mu$ for $\mu\neq \nu$,
  there exists a path from $x$ to a state $z$ with $z_j=1$ for each $j\in\cs$ such that $y_j=1$.

  Now take the set $I^0=\{i\in\cs\ | \ z_i=1, y_i=0\}$.
  Since $y$ is fixed, $y_j=1$, and hence $z_j=1$, for all $j\in\S(I^0)$.
  Hence there is a path from $z$ to $\bar{z}^{I^0}=y$, which concludes.
\end{proof}

We can use the result to characterise the strong basin of attraction of a fixed point.
This is given by the trap spaces containing the fixed point, such that
the cells corresponding to non-fixed variables are isolated.

\begin{proposition}\label{prop:strong-basins-N}
  For each fixed point $x\in\B^{L}$, the strong basin of attraction is given by the union
  of the trap spaces $x[I]$ with $I\neq\cs$ and $\S(I)\cap I=\emptyset$.
\end{proposition}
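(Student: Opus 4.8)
The plan is to translate the whole statement into the language of maximal independent sets and then match the two descriptions of the strong basin. Throughout I assume $L\ge 2$ (for $L=1$ the statement is degenerate, and for $L\ge 2$ connected $\S(\cs)\cap\cs\ne\emptyset$, so the condition $I\ne\cs$ is automatic once $\S(I)\cap I=\emptyset$).

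First I would set up a dictionary. Since $x$ is a fixed point, $x_i=0$ forces every neighbour of $i$ to carry value $1$, so $Z_x:=\{i\mid x_i=0\}$ is a maximal independent set of $\G$; by \cref{thm:ss_vc} fixed points correspond bijectively to maximal independent sets via $y\mapsto\{i\mid y_i=0\}$. Combining \cref{thm:reach-N} with the fact that all attractors are fixed points (\cref{thm:fpts-only-N}), I would reformulate the strong basin as follows: $z$ lies in the strong basin of $x$ if and only if $Z_x$ is the \emph{unique} maximal independent set meeting every connected component of $Z:=\{i\mid z_i=0\}$ (reachability of $y$ meaning precisely that $\{y=0\}$ meets every such component).

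The easy inclusion is that the union is contained in the strong basin. For $x[I]$ with $\S(I)\cap I=\emptyset$ the graph $\G_I$ has no edges, so each free cell is isolated; by \cref{rmk:trapspace2fixedpoints} the number of fixed points it contains is the product over these one-vertex systems, each of which has the unique fixed point $0$. Hence $x$ is the \emph{only} fixed point of $x[I]$, and in particular $x_i=0$ for all $i\in I$. Any $z\in x[I]$ is then trapped and, by \cref{thm:fpts-only-N}, reaches a fixed point, necessarily $x$; so $z$ is in the strong basin of $x$. Note this also shows that every such state satisfies $z\ge x$, since it agrees with $x$ off $I$ and $x\equiv 0$ on $I$.

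For the reverse inclusion I would take $z$ in the strong basin and produce a qualifying trap space through it. The first, and I expect hardest, step is the monotonicity claim $z\ge x$: if some $i^{*}$ had $z_{i^{*}}=0$ but $x_{i^{*}}=1$, let $P=\S(i^{*})\cap Z_x$ (nonempty because $x$ is fixed), delete $P$ from $Z_x$, adjoin $i^{*}$, and extend to a maximal independent set $U$. One checks $U\ne Z_x$ still meets every component of $Z$: the vertex $i^{*}$ handles its own component, and any witness $w$ that another component of $Z$ meets $Z_x$ cannot be adjacent to $i^{*}$ (else $w$ and $i^{*}$ would lie in the same component), so $w\notin P$ and $w\in U$. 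This second maximal independent set contradicts uniqueness, proving $z\ge x$. Writing $D=\{i\mid z_i=1,\ x_i=0\}\subseteq Z_x$, we then have $z=\bar{x}^{D}$ and $Z=Z_x\setminus D$ independent, so its components are single vertices and the uniqueness condition becomes: $Z_x$ is the only maximal independent set containing $Z$.

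Finally I would invoke the key lemma that $Z_x$ is the unique maximal independent set containing $Z$ if and only if every vertex of the cover $\{i\mid x_i=1\}$ has a neighbour in $Z$ (the forward direction being exactly the extension argument just used, the converse the independence of any competitor $U$). Granting this, since $D\subseteq Z_x$ gives $\S(D)\cap D=\emptyset$ and every $i\in\S(D)$ is a cover vertex, it has a neighbour in $Z=Z_x\setminus D$; by \cref{thm:trap-spaces-N} the subspace $x[D]$ is therefore a trap space with $\S(D)\cap D=\emptyset$ and $D\ne\cs$, and $z\in x[D]$ exhibits $z$ in the union. The main obstacles are precisely the monotonicity claim $z\ge x$ and the bookkeeping verifying that the trap-space condition of \cref{thm:trap-spaces-N} for $x[D]$ coincides exactly with uniqueness of $Z_x$ among maximal independent sets containing $Z$.
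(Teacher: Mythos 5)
Your proof is correct, and for the hard inclusion it takes a genuinely different route from the paper's. The paper argues the contrapositive by a case analysis on the connected components of $\{i\mid z_i=0\}$ for a state $z$ in the weak basin but outside the union (homogeneous $z$; all components singletons; some component of size at least two), in each case combining \cref{rmk:trapspace2fixedpoints}, \cref{prop:robustnessNany}, \cref{thm:homogeneous-N} and \cref{thm:reach-N} to exhibit a second fixed point reachable from $z$. You instead translate everything into maximal independent sets and prove two self-contained lemmas: a monotonicity statement (any $z$ in the strong basin of $x$ satisfies $z\geq x$, obtained by adjoining $i^*$ to $Z_x$, deleting $P=\S(i^*)\cap Z_x$, and extending to a maximal independent set, which yields a competing reachable fixed point), and the criterion that $Z_x$ is the unique maximal independent set containing an independent set $Z$ if and only if every cover vertex has a neighbour in $Z$. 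This lets you write the qualifying trap space explicitly as $x[D]$ with $D=\{i\mid z_i=1,\ x_i=0\}$ and check the condition of \cref{thm:trap-spaces-N} directly. Both arguments use \cref{thm:reach-N} in the same spirit (a second reachable fixed point contradicts strong-basin membership), but yours buys an explicit normal form for strong-basin states ($z=\bar{x}^D$ where every cell in $\S(D)$ keeps a low-Notch neighbour outside $D$) and avoids both the case analysis and any appeal to \cref{prop:robustnessNany}, while the paper's version stays within the trap-space machinery it has already developed and treats the weak basin head-on. Like the paper, you dispose of $L=1$ in one line; your side remark that $I\neq\cs$ is automatic from $\S(I)\cap I=\emptyset$ when $\G$ is connected and $L\geq 2$ is a small simplification the paper does not make.
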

\begin{proof}
  For $L=1$, the result is trivial. For $L\geq 2$, first observe that, by~\cref{thm:trap-spaces-N},
  if $x[I]$ is a trap space with $I\neq\cs$ and $\S(I)\cap I=\emptyset$, then
  for all $i\in I$ and $j\in\S(i)$ we have $j\in I^{\mathsf{c}}$, $x_j=1$ and $x_i=0$,
  and $x[I]$ contains only the fixed point $x$.
  Hence $x[I]$ is contained in the strong basin of attraction of $x$.
  It remains to show that any other state in the basin of attraction of $x$ is also in the basin of
  attraction of some other fixed point.

  Consider a state $z$ in the basin of attraction of $x$ that does not belong to a trap
  space of the form $x[I]$ with $I\neq\cs$ and $\S(I)\cap I=\emptyset$.
  Consider the partition of $\{i\in\cs|z_i=0\}$ into
  maximal disjoint sets $(I_\nu)_\nu$ such that $\G_{I_\nu}$ is connected, as in~\cref{thm:reach-N}.

  If $z_i=1$ for all $i\in\cs$, or $z_i=0$ for all $i\in\cs$, we conclude using~\cref{rmk:atleast2fpts} and~\cref{thm:homogeneous-N}.

  If $|I_{\nu}|=1$ for all $\nu$, by~\cref{thm:trap-spaces-N} the subspace $x[I]$
  with $I^{\mathsf{c}}=\cup_{\nu}I_{\nu}\cup\S(\cup_{\nu}I_{\nu})$ is a trap space containing $x$ and $z$,
  and $I\neq\cs$.
  Hence, by hypothesis, $\S(I)\cap I$ is non-empty, and by~\cref{rmk:trapspace2fixedpoints},
  $x[I]$ contains another fixed point $y$. In addition, by~\cref{thm:reach-N} $x_i=0$ for all $i\in\cup_{\nu} I_{\nu}$,
  and since $y$ coincides with $x$ outside $I$, $z$ and $y$ also verify the
  hypotheses of~\cref{thm:reach-N} and $y$ is reachable from $z$.

  Now suppose that, for some $\mu$, $I_\mu$ contains more than one index.
  By~\cref{thm:reach-N}, there exists $i\in I_\mu$ such that $x_i=0$. Take $j\in I_\mu$ with $j\in\S(i)$.
  Write $x[I]$ for the minimal trap space containing $x[\{i,j\}]$.
  By~\cref{prop:robustnessNany}, $I$ might contain cells at distance $1$ or $2$ from $\{i,j\}$,
  and cells $h$ at distance $2$ satisfy $x_h=1$.
  For any $\nu\neq\mu$, since $I_{\nu}\cap I_{\mu}=\emptyset$ and each $I_{\nu}$ is connected,
  we have that every index $h$ in $I_{\nu}\cap I$ is at distance $2$ from $\{i,j\}$,
  and hence satisfies $x_h=1$. Since $x$ is reachable from $z$,
  by~\cref{thm:reach-N} there must exists $h\in I_\nu$, $h\notin I$ such that $x_h=0$.
  By~\cref{rmk:trapspace2fixedpoints} there exists another fixed point $y\neq x$, $y\in x[I]$,
  that satisfies $y_j=0$.
  Since $y$ coincides with $x$ outside $I$, for any $\nu$ there exists $h\in I_\nu$ such that $y_h=0$,
  and by~\cref{thm:reach-N} the state $z$ is in the basin of attraction of both $x$ and $y$.
\end{proof}

We now move on to the two-variable models.
For the asynchronous dynamics associated to the network $F$, we show that all the attractors
found in the minimal trap space containing the state are reachable.
While in the reduced model any change in Notch immediately translates into a different behaviour of
the cell in terms of effects on the neighbouring cells,
in the full model the additional intermediate variables play a memory role
which allows for a delay in the effect, resulting in more possible asynchronous paths.
This different behaviour might be relevant in a biological context, where processes that take
place at different times scales are involved, for example including signalling and gene regulation mechanisms.
The effects generated by interacting processes with significantly different time scales
might be more faithfully captured by the extended models.

The idea of the proof of the lemma below is as follows.
If a given state $x$ does not belong to any non-trivial trap space,
a path can be exhibited from $x$ to a state with homogeneous, low levels of Delta.
The path can be obtained through the following steps: first all low levels of Delta that can increase are increased,
but only if they are not completely surrounded by cells with high Notch and low Delta.
Then, Notch levels are increased in all cells where it is possible.
Since $x$ does not belong to any non-trivial trap space, it is then sufficient to bring
all the levels of Delta down.

\begin{lemma}\label{lem:xtohomog}
  Consider $x\in\B^{2L}$ such that $\k(x)=\B^{2L}$.
  Then there exists a path in $AD_F$ from $x$ to $(\1,\0)$.
\end{lemma}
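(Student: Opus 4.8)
The plan is to reduce the statement to reaching an all-high-Notch configuration and then to flush Delta. Indeed, if the trajectory reaches a state $y=(n,d)$ with $n=\1$, then every Delta variable satisfies $F_{i+L}(y)=\bar n_i=0$, so I can lower the high Delta variables one at a time; none of these moves touches Notch and Notch stays high throughout, so the trajectory arrives at $(\1,\0)$. Hence it suffices to produce a path in $AD_F$ from $x$ to some state with $n=\1$. A useful companion fact (\cref{rmk:subspaces_transitions}) is that from $(\0,\1)$ every Notch can be raised, since each cell then has a high-Delta neighbour, yielding $(\1,\1)$; so reaching $(\0,\1)$ would suffice as well.

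To drive $x$ toward all-high Notch I would proceed in waves while maintaining the invariant that the current state $y$ still satisfies $\k(y)=\B^{2L}$, i.e. $y$ lies in no proper trap space. This invariant must persist right up to $(\1,\0)$: trap sets are forward-closed, and (one checks) for $L\ge 2$ the state $(\1,\0)$ lies in no proper trap space, so entering any proper trap space would make $(\1,\0)$ unreachable. First, I would lower every cell whose Notch is high but \emph{unsupported}, i.e. every $i$ with $n_i=1$ and $d_j=0$ for all $j\in\S(i)$; such a move is always enabled, and it cannot be skipped, since an unsupported high-Notch cell left in place is exactly what can seal off a neighbour. Second, I would raise the internal Delta of every low-Notch cell that is \emph{not} entirely surrounded by high-Notch, low-Delta cells; the exclusion is essential, because raising the internal Delta of such a cell $i$ (producing $(n_i,d_i)=(0,1)$ with all neighbours at $(1,0)$) lands in the proper trap space obtained, via \cref{thm:trapspaces}, by fixing these values and freeing all other cells, violating the invariant. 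Third, I would raise Notch in every cell that now has a high-Delta neighbour.

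The key claim is that after these waves every Notch is high, and I would prove it by contradiction. A cell $i$ stuck at low Notch must have all neighbours at low Delta (else its Notch would have been raised); each such neighbour $j$ cannot have low Notch (otherwise the second wave would have raised $d_j$ and freed $i$), so every neighbour sits at $(1,0)$; the exclusion in the second wave then forces $d_i=0$ (and if instead $d_i=1$, the $(0,1)$-centred configuration is itself a proper trap space, again a contradiction). Moreover each high-Notch neighbour $j$, having survived the first wave, must be supported by some high-Delta cell. Reading off these fixed values and invoking \cref{thm:trap-spaces-N,thm:trapspaces}, this local pattern should assemble into a proper trap space containing the current state, contradicting $\k(y)=\B^{2L}$; once $n=\1$ is reached, the flush above finishes. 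The main obstacle, and the part demanding the most care, is precisely the bookkeeping of the invariant through the first two waves: showing the lowering and raising moves can always be ordered so that no intermediate state drops into a proper trap space, and turning the informal ``stuck configuration'' into an honest proper trap space via the characterisation. The delicate case is a low-Notch cell whose entire neighbourhood is high Notch, where the naive move of raising its internal Delta — completing $(n_i,d_i)=(0,1)$ against neighbours $(1,0)$ — would irretrievably trap the trajectory; proving that $\k(x)=\B^{2L}$ always permits a trap-free ordering around such cells is the crux.
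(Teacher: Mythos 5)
Your overall plan (reach a state with $n=\1$, then flush Delta) is sound, but your construction fails at its very first step, and not merely for lack of bookkeeping: the first wave is the wrong move. Take $\G=\P_2$ and $x=1110$ (cell $1$ at $(1,1)$, cell $2$ at $(1,0)$). By \cref{ex:L2} the only proper trap spaces are the two fixed points, so $\k(x)=\B^4$; but cell $1$ has high Notch and no high-Delta neighbour ($d_2=0$), so your first wave lowers $n_1$ and lands exactly on the fixed point $0110$, from which $(\1,\0)=1100$ is unreachable. The only correct move from $x$ is the opposite one, lowering $d_1$, which reaches $(\1,\0)$ directly. Since, as you observe yourself, a successful path may never enter a proper trap space, and your mandatory first wave does enter one, your claim that this wave ``cannot be skipped'' is exactly backwards: prematurely lowering unsupported high-Notch cells is precisely what seals off neighbours, and no reordering within your scheme repairs this. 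The same failure occurs away from the boundary for $\G=\P_5$ at $x=(11011,10001)$, where $\k(x)=\B^{10}$ (lower $d_1$ and $d_5$, then all Notch, then conclude with \cref{rmk:subspaces_transitions,thm:homogeneous}); this state also refutes your final step taken as a standalone claim: cell $3$ sits at $(0,0)$, both neighbours sit at $(1,0)$ and each is supported by a high-Delta cell, i.e.\ it has exactly the shape of your ``stuck configuration'', yet it lies in no proper trap space. Indeed, condition $(ii)$ of \cref{thm:trapspaces} requires the supporting high-Delta cells to have \emph{fixed} Delta, hence to be low-Notch cells of the underlying fixed point; your local pattern only provides supporters whose variables are free, and free support can be withdrawn.

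The paper's proof avoids both problems by aiming at the opposite intermediate target, all Delta low rather than all Notch high, and by never lowering any Notch along the way. It uses the hypothesis $\k(x)=\B^{2L}$ exactly once, at the initial state, to conclude that every sealed cell of $x$ (low Notch, all neighbours at $(1,0)$) already has low internal Delta; it then raises Delta in every other low-Notch, low-Delta cell, raises Notch in every cell that thereby gains a high-Delta neighbour, and lowers Delta in all high-Notch, high-Delta cells. Sealed cells are simply left at $(0,0)$, which is harmless for the target ``all Delta low'' (though fatal for yours), the endpoint is verified directly with no invariant to maintain along the path, and \cref{rmk:subspaces_transitions} finishes. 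If you wish to keep your reduction, note that in this argument a state with $n=\1$ is in any case reached only after first passing through a state with all Delta low.
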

\begin{proof}
  It is sufficient to show that there exists a path in $AD_F$ from $x$ to a state $z$ with $z_{i+L}=0$ for all $i\in\cs$ (see~\cref{rmk:subspaces_transitions}).

  Define the set $J=\{i\in\cs\ | \ x_i=0 \text{ and } x_j=1, x_{j+L}=0 \text{ for all } j \in \S(i)\}$.
  If $x_{i+L}=1$ for some $i\in J$, then the subspace $y[I\cup(I+L)]$ with $I=\cs\setminus(\{i\}\cup\S(i))$
  satisfies the conditions of~\cref{thm:trapspaces} and is a trap space containing $x$.
  Since $x$ does not belong to any non-trivial subspace, we have $x_{i+L}=0$ for all $i\in J$.

  Consider the set of indices $J_1=\{i\in\cs\ | \ x_i=x_{i+L}=0\}$.
  Then $J\subseteq J_1$, and there is a path in $AD_F$ from $x$ to $v=\bar{x}^{(J_1+L)\setminus (J+L)}$.

  Now define $J_2=\{i\in\cs\ |\ v_i=0 \text{ and } v_{j+L}=1 \text{ for some } j\in\S(i)\}$.
  Again, there is a path in $AD_F$ from $v$ to $w=\bar{v}^{J_2}$.
  Note in addition that $w\geq v\geq x$, so that $x_i=1$ implies $w_i=1$.
  If $x_i=0$ for some $i\in\cs$, we have:
  \begin{itemize}
    \item If $i\in J$, $w_{i+L}=v_{i+L}=x_{i+L}=0$.
    \item If $i\notin J$ and $x_{j+L}=0$ for all $j\in\S(i)$, then there exists $k\in\S(i)$
      such that $x_k=0$ and $v_{k+L}=1$, so that $w_i=1$.
    \item If $i\notin J$ and there exists $k\in\S(i)$ such that $x_{k+L}=1$, then $v_{k+L}=1$ and $w_i=1$.
  \end{itemize}
  In summary, $w$ verifies $w_i=1$ for all $i\in\cs\setminus J$ and $w_{i+L}=0$ for $i\in J$.
  As a consequence, taking $J_3=\{i\in\cs\setminus J\ | \ w_i=w_{i+L}=1\}$, we have that the state $z=\bar{w}^{J_3+L}$
  is reachable from $w$ and verifies $z_{i+L}=0$ for all $i\in\cs$, and we conclude.
\end{proof}

The previous lemma shows that, from states that do not belong to any non-trivial subspace, any
homogeneous state can be reached. This result, combined with~\cref{thm:homogeneous}, gives that any fixed point
can be reached from such initial conditions.
When the initial state $y$ belongs to some non-trivial subspace, the fixed points that can be reached
are limited by the minimal subspace $\k(y)$ containing $y$.
To prove that all fixed points contained in $\k(y)$ can be reached from $y$, we consider the projection
of the dynamics on the subspace $\k(y)$, and study it as the combination of smaller Boolean Delta-Notch subnetworks.
It can be shown that, in general, in such a scenario, the full dynamics in the trap spaces
can be derived from the dynamics of the isolated active subnetworks (\cite{siebert2009deriving}).
Here we give a self-contained proof.

\begin{proposition}\label{prop:path01tox_pathyto01_cycle}
  Consider a fixed point $x$ and a trap space $x[I]$ for $F$ with $I_D\neq\cs$, and call $z$ the state in $x[I]$
  with $z_i=1$ for $i\in I$, $i\leq L$ and $z_i=0$ for $i\in I$, $i\geq L+1$. Then:
  \begin{enumerate}[label=(\roman*)]
    \item There exists a path in $AD_F$ from $z$ to $x$.
    \item There exists a path in $AD_F$ from any state $y\in x[I]$ with $\k(y)=x[I]$ to $z$.
    \item If $\S(I)\cap I_D=\emptyset$, then $x[I]$ contains exactly one fixed point.
    \item If $\S(I)\cap I_D\neq\emptyset$, then $x[I]$ contains at least two fixed points,
    and $AD_F$ admits a cycle with vertices in $x[I]$.
  \end{enumerate}
\end{proposition}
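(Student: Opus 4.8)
Throughout I would use that $x=(n,\bar n)$ with $n$ a fixed point of $N$, and that by \cref{thm:trapspaces} we have $I_N\subseteq I_D$ with each cell of $I_D\setminus I_N$ satisfying $n_i=0$ and having no neighbour in $I_D$. Two structural facts drive the argument. First, since $n$ is fixed, every $i\in I_N$ with $n_i=0$ has $n_j=1$ for all $j\in\S(i)$. Second, a free-Notch cell can have no neighbour in $I_D\setminus I_N$ (such a neighbour $j$ would put $i\in\S(j)\cap I_D$, contradicting condition $(i)$ of \cref{thm:trapspaces}) and no fixed neighbour of low Notch (that would pin $F_i\equiv1$, contradicting $i\in I_N$); consequently $F_i=\bigvee_{k\in\S(i)\cap I_N}d_k$ for $i\in I_N$, and one obtains the identity $\S(I)\cap I_D=\S(I_N)\cap I_N$, which I will reuse in $(iii)$ and $(iv)$.

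For $(i)$ I would exhibit an explicit path in $x[I]$. Set $W=I_D\setminus I_N$ and $Z=\{i\in I_N\mid n_i=0\}$; then $z$ and $x$ differ exactly on the Notch of $Z$ and the Delta of $Z\cup W$. I first raise $d_i$ for each $i\in W$ (legal since $F_{i+L}=\bar n_i=1$ and these updates do not interfere), then lower $n_i$ for each $i\in Z$ (legal because, by the first structural fact, all neighbours of such $i$ still carry Delta $0$, so $F_i=0$), and finally raise $d_i$ for each $i\in Z$ (now legal since $n_i=0$). The endpoint is $(n,\bar n)=x$, and all intermediate states stay in $x[I]$.

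For $(ii)$ the point is that, by the second structural fact, the restriction of $AD_F$ to $x[I]$ decouples: the free coordinates evolve as the product of the Boolean Delta-Notch systems carried by the connected components of $\G_{I_N}$ (with $F_i=\bigvee_{k\in\S(i)\cap I_N}d_k$ and $F_{i+L}=\bar n_i$) together with the decoupled Delta cells of $I_D\setminus I_N$, each of which satisfies $F_{i+L}\equiv1$. The hypothesis $\k(y)=x[I]$ then forces $y_{i+L}=0$ for every $i\in I_D\setminus I_N$ (otherwise this coordinate is pinned to $1$ and cannot be free in the minimal trap space) and, via a product decomposition of minimal trap spaces, that the restriction of $y$ to each component $K$ of $\G_{I_N}$ has full minimal trap space in the corresponding subsystem. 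Applying \cref{lem:xtohomog} to each $K$ yields a path to its $(\1,\0)$-state; lifting these to $AD_F$ while freezing the boundary and keeping the $I_D\setminus I_N$ Delta at $0$ produces a path from $y$ to $z$. I expect the bookkeeping of this last translation, from $\k(y)=x[I]$ to the per-component hypotheses of \cref{lem:xtohomog}, to be the main obstacle.

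For $(iii)$ and $(iv)$ I would first observe that the fixed points of $F$ in $x[I]$ correspond bijectively to the fixed points of $N$ in the trap space $n[I_N]$, since such a fixed point is $(m,\bar m)$ with $m\in n[I_N]$. Under $(iii)$ the identity $\S(I)\cap I_D=\S(I_N)\cap I_N$ gives $\S(I_N)\cap I_N=\emptyset$, so each $i\in I_N$ is isolated in $\G_{I_N}$; then $N_i(m)=0$ for all $m\in n[I_N]$, whence $m_i=0=n_i$, so $m=n$ and the fixed point is unique. Under $(iv)$ we have $\S(I_N)\cap I_N\neq\emptyset$, so \cref{rmk:trapspace2fixedpoints} supplies a second fixed point. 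For the cycle I would pick adjacent $i^\ast,j^\ast\in I_N$ and run the two-cell loop of \cref{ex:L2} on them with all other coordinates frozen at their $z$-values: the successive updates lowering $n_{i^\ast}$, lowering $n_{j^\ast}$, raising $d_{i^\ast}$, raising $d_{j^\ast}$, raising $n_{i^\ast}$, raising $n_{j^\ast}$, lowering $d_{i^\ast}$, lowering $d_{j^\ast}$ return to the starting state. Each update is legal by direct verification (the frozen neighbours contribute Delta $0$, so the Notch updates are governed solely by $d_{i^\ast}$ and $d_{j^\ast}$), and all eight states lie in $x[I]$.
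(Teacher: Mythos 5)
Your overall architecture coincides with the paper's: decompose the free cells into connected components (the paper uses the components of $\G_{I_D}$; you use the components of $\G_{I_N}$ together with the isolated Delta cells of $I_D\setminus I_N$, which is the same decomposition, since by \cref{thm:trapspaces}~$(i)$ the cells of $I_D\setminus I_N$ are isolated in $\G_{I_D}$), observe that the dynamics restricted to $x[I]$ is a product of Boolean Delta-Notch systems, then use \cref{lem:xtohomog} for $(ii)$ and \cref{rmk:trapspace2fixedpoints} for the first half of $(iv)$. Where you differ you are more elementary: in $(i)$ you build the path from $z$ to $x$ explicitly in three sweeps instead of invoking \cref{thm:homogeneous} component-wise, and in $(iv)$ you write out the eight-step two-cell loop instead of citing \cref{rmk:subspaces_transitions}; both constructions check out, and your reduction of $(iii)$ to fixed points of $N$ in the trap space $n[I_N]$ is also sound.

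The one genuine flaw is the justification of your ``second structural fact''. You claim a free-Notch cell $i\in I_N$ cannot have a fixed neighbour $j$ of low Notch because that ``would pin $F_i\equiv1$, contradicting $i\in I_N$''. That is not a contradiction: a coordinate that is free in a trap space may have an update function that is constant on that trap space. The paper's own examples show this --- in \cref{ex:L1} the subspace $0\star$ is a trap space whose free Delta coordinate has update function constantly equal to $1$, and the minimal non-fixed-point trap spaces $x[\{i+L\}]$ discussed after \cref{thm:trapspaces} have the same feature. (Note the contrast with your argument in $(ii)$ that $y_{i+L}=0$ for $i\in I_D\setminus I_N$: there the pinning argument is valid, because it contradicts \emph{minimality} of $\k(y)$, not the trap-space property of $x[I]$.) The fact you need is true, but the contradiction must be extracted from the \emph{fixed} coordinate $j$, not from $i$: since $j\notin I_N$ and, by the part of your fact you did prove correctly, $j\notin I_D$, the Notch coordinate $j$ is fixed in $x[I]$, so the trap-space property forces $F_j(y)=x_j$ for every $y\in x[I]$; but $i\in I_N\subseteq I_D$ means the Delta coordinate $i+L$ is free, so there exists $y\in x[I]$ with $y_{i+L}=1$, whence $x_j=F_j(y)=1$ and $x_{j+L}=0$. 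This repair matters because the fact is load-bearing throughout your proof: it underlies the decoupling $F_i=\bigvee_{k\in\S(i)\cap I_N}d_k$ used in $(ii)$, the identity $N_i\equiv 0$ on $n[I_N]$ used in $(iii)$, and the legality of the frozen-boundary cycle in $(iv)$.
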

\begin{proof}
  Consider the subgraph $\G'$ of $\G$ obtained by removing all vertices outside $I_D$ and all the incident edges.
  Then $\G'$ can be decomposed into connected graphs $\G_1,\dots,\G_k$ with vertex sets $C_1,\dots,C_k$ respectively.
  We will now consider the projection of the dynamics on the components identified by $C_1,\dots,C_k$.
  For each $h\in\{1,\dots,k\}$, writing $C_h=\{j_1,\dots,j_{|C_h|}\}$, and denoting by
  $\pi_i\colon\B^{2L}\to\B$ the projection on the $i^{th}$ component,
  consider the maps $\pi^h\colon\B^{2L}\to\B^{2|C_h|}$ defined by $\pi^h=(\pi_{j_1},\pi_{j_2},\dots,\pi_{j_{|C_h|}},\pi_{j_1+L},\pi_{j_2+L},\dots,\pi_{j_{|C_h|+L}})$,
  and $\iota^h\colon\B^{2|C_h|}\to\B^{2L}$, $\iota^h_i(y)=y_i$ for $i\in C_h\cup (C_h+L)$, $\iota^h_i(y)=x_i$ for $i\notin C_h\cup (C_h+L)$.
  Define, for each $h\in\{1,\dots,k\}$, the Boolean network $F^h\colon\B^{2|C_h|}\to\B^{2|C_h|}$, $F^h=\pi^h\circ F\circ\iota^h$.
  Then, $(y,\bar{y}^i)$ is a transition in $AD_F$ for some $y\in x[I]$ and $i\in C_h$ if and only if
  $(\pi^h(y), \overline{\pi^h(y)}^i)$ is a transition in $AD_{F^h}$.
  In addition, $\pi^h(x)$ is a fixed point for $F^h$.

  Since, by~\cref{thm:trapspaces}~$(ii)$, $x_{j+L}=0$ for all $j\in\S(I)\cap I_D^{\compl}$,
  we have that, for each $h\in\NN{k}$, $i\in C_h$ and $y\in x[I]$, $F_i(y)=\bigvee_{j\in\S(i)}y_{j+L}=\bigvee_{j\in\S(i)\cap C_h}y_{j+L}$,
  that is, the dynamics on each connected component $C_h$ is not influenced by variables outside $C_h$,
  and $F^h$ is a Boolean Delta-Notch system on $\G_h$.
  Then $(i)$ follows from the application of~\cref{thm:homogeneous} to each Boolean network $F^h$.

  If $y\in x[I]$ satisfies $\k(y)=x[I]$, first observe that, if $i\in I_D$ and $i\notin I_N$,
  then by~\cref{thm:trapspaces}~$(i)$ $x_{i+L}=1$, $x_i=y_i=z_i=0$, and $y_{i+L}=z_{i+L}=0$.
  In addition, for each $h=1,\dots,k$, $\pi^h(y)$ does not belong to any non-trivial trap space defined by $F^h$.
  $(ii)$ is therefore a consequence of~\cref{lem:xtohomog}.

  To prove $(iii)$, consider $w$ fixed point in $x[I]$ and $i\in I$.
  Since by~\cref{thm:trapspaces}~$(ii)$ $x_{j+L}=w_{j+L}=0$ for all $j\in\S(i)$, we have $x_i=w_i=0$ and $x_{i+L}=w_{i+L}=1$, and hence $w=x$.

  The first part of $(iv)$ was shown in~\cref{rmk:trapspace2fixedpoints},
  and the second follows from~\cref{rmk:subspaces_transitions}.
\end{proof}

\begin{theorem}\label{thm:all-fixed}
  For every $y\in\B^{2L}$ and for every fixed point $x\in\k(y)$ there exists a path
  from $y$ to $x$ in $AD_F$.
\end{theorem}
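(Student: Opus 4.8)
The plan is to reduce the statement to the reachability facts already packaged for the distinguished ``corner'' state of a trap space and for homogeneous states, organising the argument around the structure of the minimal trap space $\k(y)$.

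First I would record that $\k(y)$, being a trap space for $F$, has the form $x[I]$ with $x$ a fixed point: this combines \cref{thm:fpts_only} (all attractors are fixed points, so every trap space contains one) with \cref{thm:trapspaces}. Since any fixed point in $x[I]$ agrees with $x$ outside $I$, the subspace $x[I]$ is the same for every fixed point it contains; in particular, for the given $x\in\k(y)$ we may write $\k(y)=x[I]$. The case $L=1$ I would dispose of immediately via \cref{ex:L1} (the unique fixed point $01$ lies in, and is reachable from, every state), and assume $L\ge 2$ thereafter.

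The main body is a split on whether $\k(y)$ is the whole space. If $\k(y)=\B^{2L}$, then \cref{lem:xtohomog} gives a path in $AD_F$ from $y$ to $(\1,\0)$; as $(\1,\0)\in\star\0$, \cref{thm:homogeneous} supplies a path from $(\1,\0)$ to the arbitrary fixed point $x$, and concatenation closes this case. If instead $\k(y)=x[I]\neq\B^{2L}$, I would first verify $I_D\neq\cs$: for $L\ge 2$ with $\G$ connected, condition $(ii)$ of \cref{thm:trapspaces} with $I_D^{\compl}=\emptyset$ forces $\S(I_D\setminus I_N)=\emptyset$, and since every vertex has a neighbour this forces $I_D\setminus I_N=\emptyset$, whence $I_N=I_D=\cs$ and $\k(y)=\B^{2L}$, contrary to assumption. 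With $I_D\neq\cs$ secured, \cref{prop:path01tox_pathyto01_cycle} applies to $x[I]$: part $(ii)$ yields a path from $y$ (which satisfies $\k(y)=x[I]$) to the distinguished state $z$, and part $(i)$ yields a path from $z$ to $x$, giving $y\to z\to x$.

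The only genuine subtlety — and the step I would flag as the crux — is that the target fixed point $x$ ranges over all fixed points in $\k(y)$, whereas the auxiliary state $z$ of \cref{prop:path01tox_pathyto01_cycle} is defined relative to a trap space $x[I]$. What makes the two invocations compatible is that $z$ depends only on $I$ and on the coordinates of $x$ outside $I$, and every fixed point lying in $\k(y)$ shares exactly those coordinates; hence $z$ is the same regardless of which fixed point we use to anchor the trap space. This is precisely what lets part $(ii)$ (reaching $z$ from $y$) and part $(i)$ (reaching the specific $x$ from $z$) be chained uniformly for every fixed point $x\in\k(y)$. Beyond this bookkeeping I expect no computational difficulty, as the hard dynamical content is already absorbed into \cref{lem:xtohomog} and \cref{prop:path01tox_pathyto01_cycle}.
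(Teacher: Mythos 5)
Your proof is correct and takes essentially the same route as the paper's: the paper's entire proof is to write $\k(y)=x[I]$ via \cref{thm:trapspaces} and then chain \cref{prop:path01tox_pathyto01_cycle}~$(ii)$ and~$(i)$, exactly as in your main case. The one place you go beyond the paper is the separate treatment of $\k(y)=\B^{2L}$: since \cref{prop:path01tox_pathyto01_cycle} is stated only for trap spaces with $I_D\neq\cs$, the paper's one-line citation technically does not cover that case, and your patch via \cref{lem:xtohomog} followed by \cref{thm:homogeneous} (which is precisely how the proposition's own proof specialises to the full state space) closes this small gap; your verification that every proper trap space has $I_D\neq\cs$ is also correct.
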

\begin{proof}
  Take $y\in\B^{2L}$ and any $x$ fixed point in $\k(y)$.
  By Theorem~\ref{thm:trapspaces}, we can write $\k(y)=x[I]$ for some $I\subseteq\NN{2L}$.
  We conclude using~\cref{prop:path01tox_pathyto01_cycle}, $(ii)$ and~$(i)$.
\end{proof}
The theorem states that, for any Boolean Delta-Notch model and any state $y$,
all attractors that are contained in the minimal trap space containing $y$ are reachable from $y$.
As a corollary of the theorem, the basin of attraction of a fixed point $x$ is found
by taking all the trap spaces defined starting from $x$ as in~\cref{thm:trapspaces},
and removing all states found in trap spaces that do not contain the fixed point $x$.
We can reformulate the observation as follows.

\begin{proposition}\label{prop:weak_basins}
  For $L\geq 2$, for each fixed point $x\in\B^{2L}$, the basin of attraction is given by
  \[\B^{2L}\setminus \bigcup_{t \in M, x\notin t} t,\]
  where $M$ is the set of maximal, non-trivial trap spaces.
\end{proposition}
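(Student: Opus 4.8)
The plan is to prove the two inclusions separately, using \cref{thm:all-fixed} for one direction and the trap space characterisation from \cref{thm:trapspaces} for the other. Let me set up the key observations first.

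Let me denote by $B(x)$ the basin of attraction of the fixed point $x$, and let $U = \bigcup_{t \in M, x\notin t} t$ be the union of all maximal non-trivial trap spaces not containing $x$.

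First I would show $B(x) \subseteq \B^{2L}\setminus U$, i.e. no state in a maximal trap space $t$ with $x\notin t$ can reach $x$. Since $t$ is a trap space, any path starting in $t$ stays in $t$; since $x\notin t$, the fixed point $x$ cannot be reached. Hence every state of $U$ lies outside $B(x)$.

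For the reverse inclusion $\B^{2L}\setminus U \subseteq B(x)$, I would take $y\notin U$ and show $y\in B(x)$. By \cref{thm:all-fixed}, it suffices to show that $x\in\k(y)$, i.e. that the fixed point $x$ lies in the minimal trap space containing $y$. The key point is that every trap space is an intersection of maximal ones (together with the full space), so $\k(y)$ is the intersection of all maximal non-trivial trap spaces containing $y$. I expect the main obstacle to be the following claim, which I would try to establish using \cref{thm:trapspaces} and \cref{rmk:maximalts}: if $x\notin\k(y)$, then $\k(y)$ is contained in some maximal non-trivial trap space $t$ with $x\notin t$, forcing $y\in t\subseteq U$, a contradiction. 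To make this precise, since $y\notin U$, every maximal trap space containing $y$ also contains $x$; and since $\k(y)$ is the intersection of the maximal trap spaces containing $y$ (the full space being the only trap space if none are proper, in which case $L\geq 2$ and \cref{rmk:atleast2fpts} with \cref{thm:all-fixed} handle it directly), the fixed point $x$ belongs to this intersection, so $x\in\k(y)$.

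The delicate step is justifying that $\k(y)$ is exactly the intersection of the maximal non-trivial trap spaces containing $y$. This follows because the trap spaces containing any fixed point of the system are closed under intersection (an intersection of subspaces is a subspace, and an intersection of trap sets is a trap set), so the minimal one is obtained by intersecting all of them, and every proper trap space containing $y$ is contained in some maximal one. One must also confirm that $\k(y)$ is non-trivial precisely when $y$ avoids at least one maximal trap space pattern, and handle the edge case $\k(y)=\B^{2L}$, where by \cref{lem:xtohomog} and \cref{thm:homogeneous} every fixed point — in particular $x$ — is reachable, so $y\in B(x)$ automatically, consistent with the claim since $y\notin U$ would fail only if $U=\B^{2L}$, which cannot happen for $L\geq 2$ as $x\notin U$. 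Assembling these two inclusions yields the stated equality.
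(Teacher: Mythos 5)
Your overall strategy starts out like the paper's: the easy inclusion (a trap space not containing $x$ can never reach $x$), and the reduction, via \cref{thm:all-fixed}, of the reverse inclusion to showing $x\in\k(y)$ for every $y\notin U:=\bigcup_{t\in M,\,x\notin t}t$. But the step you yourself flag as delicate is false as stated. It is not true that $\k(y)$ equals the intersection of the maximal non-trivial trap spaces containing $y$, nor that every trap space is an intersection of maximal ones. Take $\G=\P_3$ and $y=010001$ (i.e.\ $n=010$, $d=001$). Then $\k(y)=010\star01=\{010001,010101\}$, which is a non-maximal trap space, while the only maximal non-trivial trap space containing $y$ is $t=\star10\star01$ (see \cref{fig:trapspaces}); $t$ strictly contains $\k(y)$, e.g.\ $110001\in t\setminus\k(y)$. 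So the intersection of the maximal trap spaces containing $y$ is strictly larger than $\k(y)$. The general facts you invoke (trap spaces are closed under nonempty intersection; every non-trivial trap space lies inside a maximal one) only give the inclusion $\k(y)\subseteq\bigcap\{t\in M\ |\ y\in t\}$, which is the direction you do not need; they cannot give equality, because equality simply fails for these systems.

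What saves the proposition --- and what the paper actually proves --- is the weaker, fixed-point-specific statement: if $x$ is a \emph{fixed point} with $x\notin\k(y)$, then some maximal non-trivial trap space contains $y$ but not $x$. (This is consistent with the example: the unique fixed point of $\star10\star01$, namely $010101$, does lie in $\k(y)$.) This statement is not an order-theoretic consequence of anything general; the paper derives it from the explicit characterisations. Writing $\k(y)=z[I]$ with $z$ a fixed point (\cref{thm:trapspaces}), one finds a cell $i$ with $z_i=0$ and $x_i=1$ such that neither $i$ nor any of its neighbours has a free Notch or Delta coordinate in $z[I]$; then $z[J\cup(J+L)]$ with $J=\cs\setminus(\{i\}\cup\S(i))$ is, by \cref{rmk:maximalts}, a maximal non-trivial trap space that contains $y$ and not $x$. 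Note that the choice of $i$ matters: in the example above, with $z=010101$ and $x=101010$, the cell $i=3$ works while $i=1$ does not (its Delta coordinate is free in $\k(y)$). So your proof has a genuine gap at its central step, and closing it requires this structure-specific construction rather than the lattice argument you propose.
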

\begin{proof}
  Write $T$ for the set of all non-trivial trap spaces.
  Consider a fixed point $x$, and denote by $B$ its basin of attraction.
  Given $y\in B^{\mathsf{c}}$, by~\cref{thm:all-fixed} we have that $x\notin\k(y)$,
  hence the equality $B^{\mathsf{c}}=\bigcup_{t \in T, x\notin t} t$.
  It remains to show that any state $y$ contained in a trap space that does not contain $x$
  is also contained in a maximal trap space that does not contain $x$.
  Suppose that $y\in z[I]$ with $z$ fixed point and $x\notin z[I]$.
  Then there exist an $i\notin I$, $i\in\cs$ such that $z_i=0$ and $x_i=1$.
  The characterisation of trap spaces in~\cref{thm:trapspaces} implies that $\{i\}\cup\S(i)\subseteq I^{\mathsf{c}}$,
  and by~\cref{rmk:maximalts} the subspace $z[J\cup(J+L)]$ with $J=\cs\setminus(\{i\}\cup\S(i))$
  is a maximal non-trivial trap space that contains $y$ and does not contain $x$.
\end{proof}

We can also characterise the strong basins of attraction.

\begin{proposition}\label{prop:strong_basins}
  For each fixed point $x\in\B^{2L}$, the strong basin of attraction is given by the union
  of the trap spaces $x[I]$ with $I_D\neq\cs$ and $\S(I)\cap I=\emptyset$.
\end{proposition}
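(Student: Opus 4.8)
The plan is to reduce the statement, via \cref{thm:all-fixed}, to a purely combinatorial condition on minimal trap spaces, and then read that condition off from \cref{prop:path01tox_pathyto01_cycle}.

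First I would record the consequence of \cref{thm:all-fixed} that makes the strong basin computable. Since $\k(y)$ is a trap set containing $y$, every fixed point reachable from $y$ lies in $\k(y)$; conversely \cref{thm:all-fixed} says every fixed point contained in $\k(y)$ is reachable from $y$. Hence the fixed points reachable from $y$ are exactly those contained in $\k(y)$, and $y$ belongs to the strong basin of $x$ if and only if $x$ is the \emph{unique} fixed point of $F$ contained in $\k(y)$. Writing $\k(y)=x[I]$ (possible because $x$, being reachable, lies in the trap set $\k(y)$, and trap spaces of $F$ are based at fixed points by \cref{thm:trapspaces}), the task becomes to characterise the trap spaces $x[I]$ whose only fixed point is $x$.

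The core step is the equivalence, for a trap space $x[I]$ of $F$, of the conditions $\S(I)\cap I=\emptyset$ and $\S(I)\cap I_D=\emptyset$. Because $\S(I)\subseteq\cs$ we have $\S(I)\cap I=\S(I)\cap I_N$, and since $I_N\subseteq I_D$ (part of the characterisation in \cref{thm:trapspaces}) the implication $\S(I)\cap I_D=\emptyset\Rightarrow\S(I)\cap I=\emptyset$ is immediate. The reverse implication is where the work sits: assuming $\S(I)\cap I_N=\emptyset$ and that $I_D$ contains an edge, a short case analysis on whether the two endpoints lie in $I_N$ or in $I_D\setminus I_N$ produces, in every case, either an element of $\S(I)\cap I_N$ or a violation of $\S(I_D\setminus I_N)\cap I_D=\emptyset$ from \cref{thm:trapspaces}(i); hence $I_D$ must be independent. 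With this equivalence in hand, \cref{prop:path01tox_pathyto01_cycle}(iii)--(iv) (applicable once $I_D\neq\cs$) show that a trap space $x[I]$ with $I_D\neq\cs$ has $x$ as its only fixed point precisely when $\S(I)\cap I_D=\emptyset$, that is, precisely when $\S(I)\cap I=\emptyset$.

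Finally I would assemble the two inclusions. For $\supseteq$: if $y\in x[I]$ with $x[I]$ a trap space, $I_D\neq\cs$ and $\S(I)\cap I=\emptyset$, then $x[I]$ has $x$ as its unique fixed point, so the only fixed point in $\k(y)\subseteq x[I]$ is $x$; as $y$ reaches some fixed point, that fixed point is $x$, and $y$ lies in the strong basin. For $\subseteq$: if $y$ is in the strong basin, then $\k(y)=x[I]$ has $x$ as its unique fixed point, so $\S(I)\cap I_D=\emptyset$ by \cref{prop:path01tox_pathyto01_cycle}(iv), hence $\S(I)\cap I=\emptyset$; moreover $I_D\neq\cs$ because for $L\geq2$ the only trap space with $I_D=\cs$ is the full space, which carries at least two fixed points by \cref{rmk:atleast2fpts} (the case $L=1$ being trivial). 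Thus $y\in x[I]$ with $x[I]$ in the claimed union. The main obstacle is the reverse implication of the core equivalence: one must be careful that $\S(I)\cap I=\emptyset$ constrains only $I_N$ directly, and it is only through the trap-space conditions $I_N\subseteq I_D$ and \cref{thm:trapspaces}(i) that independence of the full $I_D$ follows.
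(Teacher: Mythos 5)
Your proposal is correct and takes essentially the same route as the paper: both reduce membership in the strong basin to the question of whether $x$ is the unique fixed point in the minimal trap space $\k(y)$ (via \cref{thm:all-fixed} and \cref{thm:fpts_only}), and then settle that question with \cref{prop:path01tox_pathyto01_cycle}(iii)--(iv) (the paper invokes \cref{rmk:trapspace2fixedpoints}, which is what part (iv) rests on). Where you go beyond the paper is in explicitly proving that $\S(I)\cap I=\emptyset$ and $\S(I)\cap I_D=\emptyset$ are equivalent for trap spaces, using $I_N\subseteq I_D$ and \cref{thm:trapspaces}(i); the paper applies part (iii), whose hypothesis is $\S(I)\cap I_D=\emptyset$, directly to trap spaces satisfying $\S(I)\cap I=\emptyset$ without comment, so your version closes a small gap in the published argument.
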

\begin{proof}
  For $L=1$, the result is trivial. For $L\geq 2$, first observe that, by~\cref{prop:path01tox_pathyto01_cycle}~$(iii)$,
  the trap spaces $x[I]$ with $I\neq\cs$ and $\S(i)\cap I=\emptyset$ for all $i\in I$ are contained in the strong basin of attraction of $x$.
  It remains to show that any other state in the basin of attraction of $x$ is also in the basin of
  attraction of some other fixed point.

  Consider a state $z$ in the basin of attraction of $x$, and suppose that the trap space $\k(z)$ can be written as $x[I]$
  with $I$ such that there exist $i,j\in I$ with $j\in\S(i)$.
  By~\cref{rmk:trapspace2fixedpoints} there exists another fixed point $y\neq x$, $y\in x[I]$.
  Then by~\cref{thm:all-fixed} the state $z$
  is in the basin of attraction of $x$ and in the basin of attraction of $y$.
\end{proof}

The size of the strong basins of attraction grows therefore with the number of low Notch
whose neighbouring high-Notch cells have other neighbours with low Notch.
For example, for the linear graphs $\P_L$ the size of the strong basin of attraction
is the largest for ``regular'' patterns, i.e., patterns that do not admit two adjacent cells with high Notch.

\begin{example}
  If $\G=\P_3$, the strong basin of attraction of $p_1=101010$ is given by the fixed point itself,
  whereas the strong basin of attraction of $p_2=010101$ is $J=\star10\star01\cup 01\star10\star$.
  The basin of attraction of $p_1$ is the set $\B^{6} \setminus J$,
  whereas the basin of attraction of $p_2$ is the set $\B^{6} \setminus \{p_1\}$ (see~\cref{fig:trapspaces} right).
\end{example}

\subsection{Summary and considerations on robustness of patterns}\label{sec:robustness}

We can use the characterisation of strong and weak basins of attraction to study
the robustness of stable patterns in response to small perturbations.
We want to answer the following questions:
\begin{enumerate}
  \item Which patterns can be obtained after perturbing a given pattern?
  \item Which perturbations do not affect the pattern?
  \item Can the system enter a cyclic path?
\end{enumerate}
The results of the previous section provide answers to these questions.
Consider a fixed point $x$, and call $y$ the state obtained by ``perturbing'' the pattern $x$.
Then, for the Boolean Delta-Notch model $F$, we have:
\begin{enumerate}
  \item the patterns that can be reached from $y$ are all the fixed points found in the minimal
        trap space $\k(y)$ containing $y$ (\cref{thm:all-fixed}),
  \item the system reaches exclusively the pattern $x$ if and only if $\k(y)$ can be written as $x[I]$
        with $\S(i)\cap I=\emptyset$ for all $i\in I$ (\cref{prop:strong_basins}), and
  \item in any other case, there are cyclic paths reachable from $y$ (\cref{prop:path01tox_pathyto01_cycle}~$(iv)$).
\end{enumerate}
On the other hand, for the reduced models $N$, while the result on the strong basins still holds (\cref{prop:strong-basins-N}),
not all fixed points contained in the minimal trap space are reachable (\cref{thm:reach-N}),
and cyclic paths are excluded (see~\cref{sec:asymptotic}).

\cref{prop:robustnessNany,prop:robustnessFany} show that, for both the one and two-variable model,
perturbations to a pattern do not propagate beyond cells at distance $2$.
The following result is a corollary:

\begin{proposition}\label{prop:robustness}
  Consider $x\in\B^{2L}$ fixed point for a Boolean Delta-Notch system, and take $i\in\cs$.
  \begin{enumerate}[label=(\roman*)]
    \item If $x_i=0$, then there exists a trap space $x[I\cup(I+L)]$ such that
      $\{i\}\subseteq I\subseteq\{i\}\cup\S(i)$.
    \item If $x_i=1$, then there exists a trap space $x[I\cup(I+L)]$ such that
      $\{i\}\subseteq I\subseteq\{i\}\cup\S(i)\cup\S(\S(i))$.
  \end{enumerate}
\end{proposition}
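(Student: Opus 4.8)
The plan is to obtain the statement as an immediate specialisation of \cref{prop:robustnessFany} to the single-cell perturbation $H=\{i\}$, and then simply read off where the resulting index set $I$ lives. Writing $x=(n,\bar n)$ with $n$ a fixed point for $N$, \cref{prop:robustnessFany} gives that $x[I\cup(I+L)]$ is the minimal trap space for $F$ containing $x[\{i\}\cup\{i+L\}]$, where $I=H\cup K\cup J$ is computed from the formulas of \cref{prop:robustnessNany}. Since $H=\{i\}\subseteq I$ holds automatically, only the upper containment requires attention, and the whole argument reduces to bookkeeping on the definitions of $K$ and $J$ once $H=\{i\}$ is substituted.

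First I would treat case $(i)$, where $x_i=0$. Then $H_0=\{i\}$ and $H_1=\emptyset$, so that $K=\emptyset$ (as $\S(H_1)=\emptyset$). Consequently $K\cup H_0=\{i\}$, and since $\G$ has no loops we get $\S(K\cup H_0)\cap H^\compl=\S(i)$, whence $J\subseteq\S(i)$. Therefore $I=\{i\}\cup J\subseteq\{i\}\cup\S(i)$, which is exactly the asserted bound.

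For case $(ii)$, where $x_i=1$, the front is allowed to expand one step further. Here $H_0=\emptyset$ and $H_1=\{i\}$, so $K=\{j\in\S(i)\ |\ x_j=0\}\subseteq\S(i)$; note $K$ is nonempty, since the fixed-point relation $x_i=N_i(x)=\bigvee_{j\in\S(i)}\bar{x}_j=1$ forces some neighbour of $i$ to carry low Notch, though this is not needed for the containment. Now $K\cup H_0=K$, hence $J\subseteq\S(K)\subseteq\S(\S(i))$. Combining, $I=\{i\}\cup K\cup J\subseteq\{i\}\cup\S(i)\cup\S(\S(i))$, as required.

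I do not expect a genuine obstacle: the content lies entirely in the already-proved minimal-trap-space computation of \cref{prop:robustnessNany}, and the only point worth a word is tracking how $K$ and $J$ propagate outward by at most one edge each. In case $(i)$ the empty $H_1$ collapses $K$ and keeps $J$ at distance one; in case $(ii)$ the set $K$ sits at distance one and $J\subseteq\S(K)$ reaches at most distance two. Both inclusions are immediate from the substitution $H=\{i\}$, so the proposition follows at once.
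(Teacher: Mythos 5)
Your proposal is correct and matches the paper's intended argument: the paper states \cref{prop:robustness} as a direct corollary of \cref{prop:robustnessNany,prop:robustnessFany}, and your specialisation to $H=\{i\}$ with the resulting bookkeeping on $K$ and $J$ (yielding $K=\emptyset$, $J\subseteq\S(i)$ when $x_i=0$, and $K\subseteq\S(i)$, $J\subseteq\S(K)\subseteq\S(\S(i))$ when $x_i=1$) is exactly the computation that corollary relies on. No gaps; the loop-free assumption on $\G$ is invoked precisely where it is needed.
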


The analogous statement holds for $N$.
For changes of only one variable level in one cell, we have that:
\begin{itemize}
  \item \emph{Isolated changes of low Notch to high Notch, or high Delta to low Delta}
    can only affect direct neighbour cells.
  \item \emph{Isolated changes from high Notch to low Notch, or low Delta to high Delta}
    can only affect cells at maximum distance of $2$ from cell $i$.
\end{itemize}
The examples in~\cref{fig:perturbations} show that the bounds on the distance of affected cells are the smallest possible.
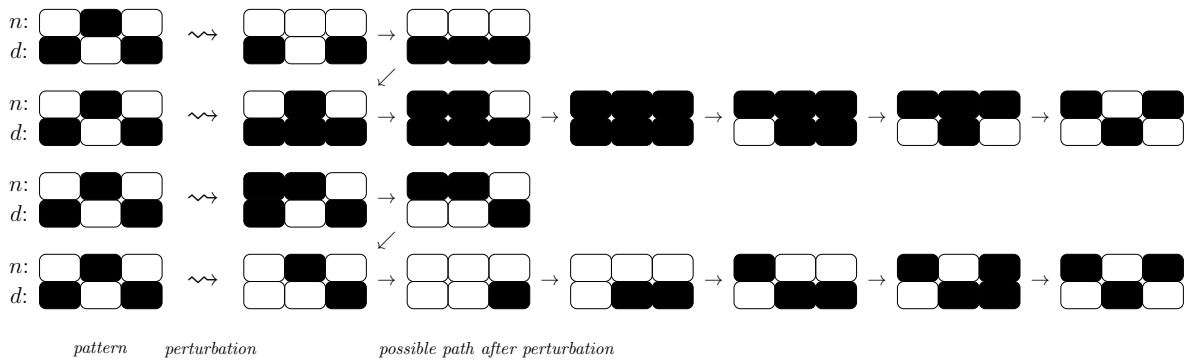
\begin{figure}
  \centering
  \resizebox{16.0cm}{!}{
  \newcommand*{\bs}{0.9}
  \newcommand*{\hg}{0.6}
  \newcommand*{\rd}{2.0}
  \begin{tikzpicture}[rounded corners]
    \node at (-\bs*0.5,\hg*1.5) {\Large $n$:};
    \node at (-\bs*0.5,\hg*0.5) {\Large $d$:};
    \node at (-\bs*0.5,-\hg*1.5) {\Large $n$:};
    \node at (-\bs*0.5,-\hg*2.5) {\Large $d$:};
    % pattern
    \draw[rounded corners] (0,\hg) rectangle (\bs,2*\hg);
    \draw[rounded corners,fill=black] (\bs,\hg) rectangle (2*\bs,2*\hg);
    \draw[rounded corners] (2*\bs,\hg) rectangle (3*\bs,2*\hg);
    \draw[rounded corners,fill=black] (0,0) rectangle (\bs,\hg);
    \draw[rounded corners] (\bs,0) rectangle (2*\bs,\hg);
    \draw[rounded corners,fill=black] (2*\bs,0) rectangle (3*\bs,\hg);
    % perturbation
    \node at (\bs*4.0,\hg) {\huge $\leadsto$};
    % after perturbation
    \draw[rounded corners] (5*\bs,\hg) rectangle (6*\bs,2*\hg);
    \draw[rounded corners] (6*\bs,\hg) rectangle (7*\bs,2*\hg);
    \draw[rounded corners] (7*\bs,\hg) rectangle (8*\bs,2*\hg);
    \draw[rounded corners,fill=black] (5*\bs,0) rectangle (6*\bs,\hg);
    \draw[rounded corners] (6*\bs,0) rectangle (7*\bs,\hg);
    \draw[rounded corners,fill=black] (7*\bs,0) rectangle (8*\bs,\hg);
    \node at (\bs*8.5,\hg) {\large $\rightarrow$};
    \draw[rounded corners] (9*\bs,\hg) rectangle (10*\bs,2*\hg);
    \draw[rounded corners] (10*\bs,\hg) rectangle (11*\bs,2*\hg);
    \draw[rounded corners] (11*\bs,\hg) rectangle (12*\bs,2*\hg);
    \draw[rounded corners,fill=black] (9*\bs,0) rectangle (10*\bs,\hg);
    \draw[rounded corners,fill=black] (10*\bs,0) rectangle (11*\bs,\hg);
    \draw[rounded corners,fill=black] (11*\bs,0) rectangle (12*\bs,\hg);
    \node at (\bs*8.5,-0.5*\hg) {\large $\swarrow$};

    % pattern
    \draw[rounded corners] (0,-2*\hg) rectangle (\bs,-1*\hg);
    \draw[rounded corners,fill=black] (\bs,-2*\hg) rectangle (2*\bs,-1*\hg);
    \draw[rounded corners] (2*\bs,-2*\hg) rectangle (3*\bs,-1*\hg);
    \draw[rounded corners,fill=black] (0,-3*\hg) rectangle (\bs,-2*\hg);
    \draw[rounded corners] (\bs,-3*\hg) rectangle (2*\bs,-2*\hg);
    \draw[rounded corners,fill=black] (2*\bs,-3*\hg) rectangle (3*\bs,-2*\hg);
    % perturbation
    \node at (\bs*4.0,-2*\hg) {\huge $\leadsto$};
    % after perturbation
    \draw[rounded corners] (5*\bs,-2*\hg) rectangle (6*\bs,-1*\hg);
    \draw[rounded corners,fill=black] (6*\bs,-2*\hg) rectangle (7*\bs,-1*\hg);
    \draw[rounded corners] (7*\bs,-2*\hg) rectangle (8*\bs,-1*\hg);
    \draw[rounded corners,fill=black] (5*\bs,-3*\hg) rectangle (6*\bs,-2*\hg);
    \draw[rounded corners,fill=black] (6*\bs,-3*\hg) rectangle (7*\bs,-2*\hg);
    \draw[rounded corners,fill=black] (7*\bs,-3*\hg) rectangle (8*\bs,-2*\hg);
    \node at (\bs*8.5,-2*\hg) {\large $\rightarrow$};
    \draw[rounded corners,fill=black] (9*\bs,-2*\hg) rectangle (10*\bs,-1*\hg);
    \draw[rounded corners,fill=black] (10*\bs,-2*\hg) rectangle (11*\bs,-1*\hg);
    \draw[rounded corners] (11*\bs,-2*\hg) rectangle (12*\bs,-1*\hg);
    \draw[rounded corners,fill=black] (9*\bs,-3*\hg) rectangle (10*\bs,-2*\hg);
    \draw[rounded corners,fill=black] (10*\bs,-3*\hg) rectangle (11*\bs,-2*\hg);
    \draw[rounded corners,fill=black] (11*\bs,-3*\hg) rectangle (12*\bs,-2*\hg);
    \node at (\bs*12.5,-2*\hg) {\large $\rightarrow$};
    \draw[rounded corners,fill=black] (13*\bs,-2*\hg) rectangle (14*\bs,-\hg);
    \draw[rounded corners,fill=black] (14*\bs,-2*\hg) rectangle (15*\bs,-\hg);
    \draw[rounded corners,fill=black] (15*\bs,-2*\hg) rectangle (16*\bs,-\hg);
    \draw[rounded corners,fill=black] (13*\bs,-3*\hg) rectangle (14*\bs,-2*\hg);
    \draw[rounded corners,fill=black] (14*\bs,-3*\hg) rectangle (15*\bs,-2*\hg);
    \draw[rounded corners,fill=black] (15*\bs,-3*\hg) rectangle (16*\bs,-2*\hg);
    \node at (\bs*16.5,-2*\hg) {\large $\rightarrow$};
    \draw[rounded corners,fill=black] (17*\bs,-2*\hg) rectangle (18*\bs,-\hg);
    \draw[rounded corners,fill=black] (18*\bs,-2*\hg) rectangle (19*\bs,-\hg);
    \draw[rounded corners,fill=black] (19*\bs,-2*\hg) rectangle (20*\bs,-\hg);
    \draw[rounded corners] (17*\bs,-3*\hg) rectangle (18*\bs,-2*\hg);
    \draw[rounded corners,fill=black] (18*\bs,-3*\hg) rectangle (19*\bs,-2*\hg);
    \draw[rounded corners,fill=black] (19*\bs,-3*\hg) rectangle (20*\bs,-2*\hg);
    \node at (\bs*20.5,-2*\hg) {\large $\rightarrow$};
    \draw[rounded corners,fill=black] (21*\bs,-2*\hg) rectangle (22*\bs,-\hg);
    \draw[rounded corners,fill=black] (22*\bs,-2*\hg) rectangle (23*\bs,-\hg);
    \draw[rounded corners,fill=black] (23*\bs,-2*\hg) rectangle (24*\bs,-\hg);
    \draw[rounded corners] (21*\bs,-3*\hg) rectangle (22*\bs,-2*\hg);
    \draw[rounded corners,fill=black] (22*\bs,-3*\hg) rectangle (23*\bs,-2*\hg);
    \draw[rounded corners] (23*\bs,-3*\hg) rectangle (24*\bs,-2*\hg);
    \node at (\bs*24.5,-2*\hg) {\large $\rightarrow$};
    \draw[rounded corners,fill=black] (25*\bs,-2*\hg) rectangle (26*\bs,-\hg);
    \draw[rounded corners] (26*\bs,-2*\hg) rectangle (27*\bs,-\hg);
    \draw[rounded corners,fill=black] (27*\bs,-2*\hg) rectangle (28*\bs,-\hg);
    \draw[rounded corners] (25*\bs,-3*\hg) rectangle (26*\bs,-2*\hg);
    \draw[rounded corners,fill=black] (26*\bs,-3*\hg) rectangle (27*\bs,-2*\hg);
    \draw[rounded corners] (27*\bs,-3*\hg) rectangle (28*\bs,-2*\hg);

    \node at (-\bs*0.5,-\hg*4.5) {\Large $n$:};
    \node at (-\bs*0.5,-\hg*5.5) {\Large $d$:};
    \node at (-\bs*0.5,-\hg*7.5) {\Large $n$:};
    \node at (-\bs*0.5,-\hg*8.5) {\Large $d$:};
    % pattern
    \draw[rounded corners,fill=black] (0,-6*\hg) rectangle (\bs,-5*\hg);
    \draw[rounded corners] (\bs,-6*\hg) rectangle (2*\bs,-5*\hg);
    \draw[rounded corners,fill=black] (2*\bs,-6*\hg) rectangle (3*\bs,-5*\hg);
    \draw[rounded corners] (0,-5*\hg) rectangle (\bs,-4*\hg);
    \draw[rounded corners,fill=black] (\bs,-5*\hg) rectangle (2*\bs,-4*\hg);
    \draw[rounded corners] (2*\bs,-5*\hg) rectangle (3*\bs,-4*\hg);
    % perturbation
    \node at (\bs*4.0,-5*\hg) {\huge $\leadsto$};
    % after perturbation
    \draw[rounded corners,fill=black] (5*\bs,-6*\hg) rectangle (6*\bs,-5*\hg);
    \draw[rounded corners] (6*\bs,-6*\hg) rectangle (7*\bs,-5*\hg);
    \draw[rounded corners,fill=black] (7*\bs,-6*\hg) rectangle (8*\bs,-5*\hg);
    \draw[rounded corners,fill=black] (5*\bs,-5*\hg) rectangle (6*\bs,-4*\hg);
    \draw[rounded corners,fill=black] (6*\bs,-5*\hg) rectangle (7*\bs,-4*\hg);
    \draw[rounded corners] (7*\bs,-5*\hg) rectangle (8*\bs,-4*\hg);
    \node at (\bs*8.5,-5*\hg) {\large $\rightarrow$};
    \draw[rounded corners] (9*\bs,-6*\hg) rectangle (10*\bs,-5*\hg);
    \draw[rounded corners] (10*\bs,-6*\hg) rectangle (11*\bs,-5*\hg);
    \draw[rounded corners,fill=black] (11*\bs,-6*\hg) rectangle (12*\bs,-5*\hg);
    \draw[rounded corners,fill=black] (9*\bs,-5*\hg) rectangle (10*\bs,-4*\hg);
    \draw[rounded corners,fill=black] (10*\bs,-5*\hg) rectangle (11*\bs,-4*\hg);
    \draw[rounded corners] (11*\bs,-5*\hg) rectangle (12*\bs,-4*\hg);
    \node at (\bs*8.5,-6.5*\hg) {\large $\swarrow$};

    % pattern
    \draw[rounded corners,fill=black] (0,-9*\hg) rectangle (\bs,-8*\hg);
    \draw[rounded corners] (\bs,-9*\hg) rectangle (2*\bs,-8*\hg);
    \draw[rounded corners,fill=black] (2*\bs,-9*\hg) rectangle (3*\bs,-8*\hg);
    \draw[rounded corners] (0,-8*\hg) rectangle (\bs,-7*\hg);
    \draw[rounded corners,fill=black] (\bs,-8*\hg) rectangle (2*\bs,-7*\hg);
    \draw[rounded corners] (2*\bs,-8*\hg) rectangle (3*\bs,-7*\hg);
    % perturbation
    \node at (\bs*4.0,-8*\hg) {\huge $\leadsto$};
    % after perturbation
    \draw[rounded corners] (5*\bs,-9*\hg) rectangle (6*\bs,-8*\hg);
    \draw[rounded corners] (6*\bs,-9*\hg) rectangle (7*\bs,-8*\hg);
    \draw[rounded corners,fill=black] (7*\bs,-9*\hg) rectangle (8*\bs,-8*\hg);
    \draw[rounded corners] (5*\bs,-8*\hg) rectangle (6*\bs,-7*\hg);
    \draw[rounded corners,fill=black] (6*\bs,-8*\hg) rectangle (7*\bs,-7*\hg);
    \draw[rounded corners] (7*\bs,-8*\hg) rectangle (8*\bs,-7*\hg);
    \node at (\bs*8.5,-8*\hg) {\large $\rightarrow$};
    \draw[rounded corners] (9*\bs,-9*\hg) rectangle (10*\bs,-8*\hg);
    \draw[rounded corners] (10*\bs,-9*\hg) rectangle (11*\bs,-8*\hg);
    \draw[rounded corners,fill=black] (11*\bs,-9*\hg) rectangle (12*\bs,-8*\hg);
    \draw[rounded corners] (9*\bs,-8*\hg) rectangle (10*\bs,-7*\hg);
    \draw[rounded corners] (10*\bs,-8*\hg) rectangle (11*\bs,-7*\hg);
    \draw[rounded corners] (11*\bs,-8*\hg) rectangle (12*\bs,-7*\hg);
    \node at (\bs*12.5,-8*\hg) {\large $\rightarrow$};
    \draw[rounded corners] (13*\bs,-9*\hg) rectangle (14*\bs,-8*\hg);
    \draw[rounded corners,fill=black] (14*\bs,-9*\hg) rectangle (15*\bs,-8*\hg);
    \draw[rounded corners,fill=black] (15*\bs,-9*\hg) rectangle (16*\bs,-8*\hg);
    \draw[rounded corners] (13*\bs,-8*\hg) rectangle (14*\bs,-7*\hg);
    \draw[rounded corners] (14*\bs,-8*\hg) rectangle (15*\bs,-7*\hg);
    \draw[rounded corners] (15*\bs,-8*\hg) rectangle (16*\bs,-7*\hg);
    \node at (\bs*16.5,-8*\hg) {\large $\rightarrow$};
    \draw[rounded corners] (17*\bs,-9*\hg) rectangle (18*\bs,-8*\hg);
    \draw[rounded corners,fill=black] (18*\bs,-9*\hg) rectangle (19*\bs,-8*\hg);
    \draw[rounded corners,fill=black] (19*\bs,-9*\hg) rectangle (20*\bs,-8*\hg);
    \draw[rounded corners,fill=black] (17*\bs,-8*\hg) rectangle (18*\bs,-7*\hg);
    \draw[rounded corners] (18*\bs,-8*\hg) rectangle (19*\bs,-7*\hg);
    \draw[rounded corners] (19*\bs,-8*\hg) rectangle (20*\bs,-7*\hg);
    \node at (\bs*20.5,-8*\hg) {\large $\rightarrow$};
    \draw[rounded corners] (21*\bs,-9*\hg) rectangle (22*\bs,-8*\hg);
    \draw[rounded corners,fill=black] (22*\bs,-9*\hg) rectangle (23*\bs,-8*\hg);
    \draw[rounded corners,fill=black] (23*\bs,-9*\hg) rectangle (24*\bs,-8*\hg);
    \draw[rounded corners,fill=black] (21*\bs,-8*\hg) rectangle (22*\bs,-7*\hg);
    \draw[rounded corners] (22*\bs,-8*\hg) rectangle (23*\bs,-7*\hg);
    \draw[rounded corners,fill=black] (23*\bs,-8*\hg) rectangle (24*\bs,-7*\hg);
    \node at (\bs*24.5,-8*\hg) {\large $\rightarrow$};
    \draw[rounded corners] (25*\bs,-9*\hg) rectangle (26*\bs,-8*\hg);
    \draw[rounded corners,fill=black] (26*\bs,-9*\hg) rectangle (27*\bs,-8*\hg);
    \draw[rounded corners] (27*\bs,-9*\hg) rectangle (28*\bs,-8*\hg);
    \draw[rounded corners,fill=black] (25*\bs,-8*\hg) rectangle (26*\bs,-7*\hg);
    \draw[rounded corners] (26*\bs,-8*\hg) rectangle (27*\bs,-7*\hg);
    \draw[rounded corners,fill=black] (27*\bs,-8*\hg) rectangle (28*\bs,-7*\hg);

    \node at (\bs*1.5,-\hg*0.5-10*\hg) {\emph{pattern}};
    \node at (\bs*4.2,-\hg*0.5-10*\hg) {\emph{perturbation}};
    \node at (\bs*11.2,-\hg*0.5-10*\hg) {\emph{possible path after perturbation}};
  \end{tikzpicture}}
  \caption{Changes in levels of Notch or Delta in one cell can induce the system to attain a different pattern.
    Changes to low levels of Notch or high levels of Delta can propagate to neighbour cells,
    and changes to high levels of Notch or low levels of Delta can affect cells at distance two
    (see~\cref{prop:robustness}). White represents high activity.}\label{fig:perturbations}
\end{figure}

\section{A generalisation}\label{sec:generalisation}

In this section we give a brief look at a class of networks that generalise
the models previously considered in this paper.
We fix again an undirected graph $\G$ without loops with vertex set $\cs$.
Given $k\in\mathbb{N}$, $k\geq 1$, consider the Boolean function $F^k\colon\B^{2L}\to\B^{2L}$
defined by
\begin{equation*}
  \begin{aligned}
    F^k_i(n,d)&=1 \text{ if and only if } \sum_{j\in\S(i)}d_j\geq k,\\
    F^k_{i+L}(n,d)&=\bar{n}^i,
  \end{aligned}
\end{equation*}
for all $i\in\cs$.
That is, at least $k$ high level of neighbouring Delta are required to activate Notch.
For $k=1$ we obtain the Delta-Notch model defined in~\cref{def:dn}.

We denote by $N^k\colon\B^L\to\B^L$ the reduced model
\begin{equation}\label{eq:dn-reduced-at-least-k}
  N^k_i(n) = 1 \text{ if and only if } \sum_{j\in\S(i)}\bar{n}_j\geq k\ \text{ for } i\in\cs.
\end{equation}

As seen in~\cref{sec:asymptotic} for $N$,
the network $N^k$ is a strict threshold network, with $A\in\{0,-1\}^{L\times L}$ and $b\in\mathbb{R}^L$ defined as follows:
\begin{equation*}
  A_{ij}=\begin{cases}-1 & \text{ if }j\in\S(i),\\0&\text{ otherwise,}\end{cases}
  \hspace{20pt}
  b_i=-|\S(i)|+k-\frac{1}{2}.
\end{equation*}

Since $A$ is symmetric and $A_{ii}\geq 0$ for all $i\in\cs$,
all the attractors for $AD_{N^k}$ are fixed points (\cite{goles1985decreasing}), and $AD_{N^k}$ has no cyclic paths.
By~\cref{thm:reduction}~$(i)$ the fixed points of $N$ and $F$ are
in one-to-one correspondence.
It was shown in~\cite{veliz2012computation} that the fixed points of $N$
are in one-to-one correspondence with the minimal vertex covers of the graph $\G$.
We show how this result can be partially extended to $N^k$.

In the following, we write $\P(A)$ for the subsets of a set $A$ and $\P_k(A)$ for the subsets of $A$ of size $k$.
Define the undirected hypergraph $\H(k)$ with vertex set $C$ and edge set
\begin{equation*}
  \begin{aligned}
  \{\{i\}\cup H\ | \ & i\in C, H\in\P_k(\S(i))\}.
  \end{aligned}
\end{equation*}

The edges of $\H(k)$ are given by subsets of the vertices $\cs$
of cardinality $k+1$, each consisting of a vertex and $k$ of its neighbours.

Recall that a \emph{transversal} or \emph{hitting set} of a hypergraph is a subset of the vertices
that has non-empty intersection with every edge.
We introduce the following terminology: we say that a transversal $Q$ of $\H(k)$ is \emph{$k$-minimal} if, for each $i\in Q$, $|\S(i)\cap Q|\leq |\S(i)|-k$.
Note that a $k$-minimal transversal does not contain any vertex with fewer than $k$ neighbours in $\G$.

\begin{theorem}\label{thm:ss-vc-hyperg-2}
  The fixed points for $N^k$ and $F^k$
  are in one-to-one correspondence with the $k$-minimal transversals of the hypergraph $\H(k)$.
\end{theorem}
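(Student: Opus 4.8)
The plan is to prove the statement first for the reduced network $N^k$, where it is a purely combinatorial identity, and then transfer it to $F^k$ via the reduction of~\cref{sec:dn_reduced}. To a state $n\in\B^L$ I associate the set $Q=Q(n)=\{i\in\cs\ |\ n_i=1\}$ of cells with high Notch. The map $n\mapsto Q(n)$ is a bijection between $\B^L$ and $\P(\cs)$, so it is enough to show that $n$ is a fixed point of $N^k$ if and only if $Q(n)$ is a $k$-minimal transversal of $\H(k)$.

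The key reformulation is to express the fixed-point equations in terms of the quantity $|\S(i)\setminus Q|=|\S(i)|-|\S(i)\cap Q|$, the number of neighbours of $i$ with low Notch. By the threshold definition~\eqref{eq:dn-reduced-at-least-k} of $N^k$, the equation $N^k_i(n)=n_i$ reads as follows. For $i\in Q$ (that is, $n_i=1$) it becomes $\sum_{j\in\S(i)}\bar{n}_j\geq k$, i.e. $|\S(i)\setminus Q|\geq k$, equivalently $|\S(i)\cap Q|\leq|\S(i)|-k$, which is exactly the $k$-minimality inequality. For $i\notin Q$ (that is, $n_i=0$) it becomes $\sum_{j\in\S(i)}\bar{n}_j\leq k-1$, i.e. $|\S(i)\setminus Q|\leq k-1$.

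It then remains to identify the family of conditions $|\S(i)\setminus Q|\leq k-1$, for $i\notin Q$, with the transversal property. An edge $\{i\}\cup H$ of $\H(k)$, with $H\in\P_k(\S(i))$, is disjoint from $Q$ precisely when $i\notin Q$ and $H\subseteq\S(i)\setminus Q$; ranging over all edges, a $Q$-avoiding edge exists if and only if $|\S(i)\setminus Q|\geq k$ for some $i\notin Q$. Thus $Q$ is a transversal of $\H(k)$ if and only if $|\S(i)\setminus Q|\leq k-1$ for every $i\notin Q$, i.e. exactly when the fixed-point conditions for the indices $i\notin Q$ hold; and among transversals the $k$-minimality inequalities are precisely the fixed-point conditions for the indices $i\in Q$. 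Combining the two, $n$ is a fixed point of $N^k$ if and only if $Q(n)$ is a $k$-minimal transversal of $\H(k)$, which establishes the bijection for $N^k$.

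Finally, I transfer the correspondence to $F^k$. Since $F^k_{i+L}(n,d)=\bar{n}_i$ depends only on Notch, the interaction graph $G_{F^k}$ has no edge from any Delta vertex to itself, and substituting $d_j=\bar{n}_j$ recovers $N^k_i(n)=F^k_i(n,\bar{n})$; hence $N^k$ is obtained from $F^k$ by eliminating the Delta variables exactly as in~\cref{sec:dn_reduced}. Applying~\cref{thm:reduction}~$(i)$, iterated over the $L$ Delta variables, the fixed points of $F^k$ are in one-to-one correspondence with those of $N^k$ via $n^*\mapsto(n^*,\overline{n^*})$, and the theorem follows. I expect the only delicate point to be the counting argument of the third paragraph linking the transversal property to the inequality $|\S(i)\setminus Q|\leq k-1$; the two other steps are a direct translation of the threshold definition and an application of the reduction theorem already used for the case $k=1$.
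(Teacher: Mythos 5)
Your proof is correct and takes essentially the same route as the paper's: the same bijection $n\mapsto\{i\in\cs\ |\ n_i=1\}$, the same translation of the threshold fixed-point equations into counting conditions (the conditions at $i\notin Q$ yielding transversality of $\H(k)$, those at $i\in Q$ yielding $k$-minimality), and the same appeal to \cref{thm:reduction}~$(i)$ to transfer the correspondence to $F^k$. The only difference is presentational: you organise the argument as a single chain of equivalences via the cardinalities $|\S(i)\setminus Q|$, whereas the paper proves the two implications separately using the disjunctive form of $N^k_i$.
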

\begin{proof}
  Consider the bijective map $h\colon\B^L\to\mathcal{P}(\cs)$ defined by $x\mapsto\{i\in\cs\ | \ x_i=1\}$,
  and let $n\in\B^L$ be a fixed point of $N^k$. Observe that $n_j=0$ for all $j$ such that $|\S(j)|<k$.
  Take $I$ edge in $\H(k)$, and suppose that $i\in I$ and $H\in\P_k(\S(i))$ are such that $I=\{i\}\cup H$.
  Since $n_{i}=\bigvee_{J\in\P_k(\S(i))}\bigwedge_{j\in J}\bar{n}_j$, either $n_{i}=1$ or $n_j=1$ for some $j\in H$.
  Hence $h(n)$ is a transversal.

  To see that $h(n)$ is $k$-minimal, take $i\in h(n)$.
  Since $n_i=1$, there exists a subset $H\in\P_k(\S(i))$ such that $n_j=0$ and $j\notin h(n)$ for all $j\in H$.
  Hence $|\S(i)\cap h(n)|\leq |\S(i)|-|H|=|\S(i)|-k$.

  Vice versa, consider a $k$-minimal transversal $Q$ of $\H(k)$, and define $n=h^{-1}(Q)$.
  Given $i\in\cs$, if $\sum_{j\in\S(i)} \bar{n}_j\geq k$, then there exists $H\in\P_k(\S(i))$
  such that $n_j=0$ and $j\notin Q$ for all $j\in H$.
  Hence $\{i\}\cup H$ is an edge in $\H(k)$ and since $Q$ is a transversal we must have $i\in Q$ and $n_i=1$.
  If instead $\sum_{j\in\S(i)} \bar{n}_j<k$, then $|\S(i)|-|\S(i)\cap Q|<k$, and
  since $Q$ is $k$-minimal, we find $i\notin Q$ and $n_i=0$.
\end{proof}

As in~\cref{thm:homogeneous-N}, it is possible to show that all fixed points are reachable
from homogeneous initial conditions.
We now give a description of the trap spaces for $N^k$ and $F^k$.

\begin{proposition}\label{prop:trap-spaces-Nk}
The trap spaces for $N^k$ are of the form $x[I]$, with $x$ fixed point, and
for all $i\in\S(I)\cap I^{\mathsf{c}}$:
\begin{itemize}
\item[(i)] if $x_i=1$, the set $\{j\in\S(i)\cap I^{\mathsf{c}}\ | \ x_i=0\}$
  has cardinality greater or equal to $k$;
\item[(ii)] if $x_i=0$, the set $\{j\in\S(i)\cap I^{\mathsf{c}}\ | \ x_i=0\}\cup(\S(i)\cap I)$
  has cardinality smaller than $k$.
\end{itemize}
\end{proposition}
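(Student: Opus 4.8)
The plan is to reduce the trap-space condition for $N^k$ to a counting inequality on neighbours carrying a low level of Notch, mirroring the proof of \cref{thm:trap-spaces-N} but with the threshold $k$ in place of $1$. Throughout, the key observation is that $x[I]$ is a trap space if and only if $N^k_i(y)=y_i$ for every $y\in x[I]$ and every $i\in I^{\mathsf{c}}$, and that for such $i$ the value of $N^k_i(y)$ depends only on how many neighbours of $i$ carry a low level of Notch in $y$.

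First I would fix $i\in I^{\mathsf{c}}$ and split its neighbourhood into its fixed and its free part, writing
\[
\sum_{j\in\S(i)}\bar{y}_j \;=\; a_i \;+\; \sum_{j\in\S(i)\cap I}\bar{y}_j,
\qquad a_i:=\bigl|\{j\in\S(i)\cap I^{\mathsf{c}}\mid x_j=0\}\bigr|.
\]
Since the free coordinates indexed by $\S(i)\cap I$ may be set to $0$ or $1$ independently, as $y$ ranges over $x[I]$ the second sum takes every integer value between $0$ and $|\S(i)\cap I|$, so $\sum_{j\in\S(i)}\bar{y}_j$ ranges over the full interval $[\,a_i,\ a_i+|\S(i)\cap I|\,]$. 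If $\S(i)\cap I=\emptyset$ the neighbourhood of $i$ is entirely fixed and $N^k_i(y)=N^k_i(x)$ is automatically constant; this is why only indices $i\in\S(I)\cap I^{\mathsf{c}}$ impose a constraint.

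For the sufficiency direction I would take $x$ a fixed point satisfying (i) and (ii) and verify the trap-space condition at each $i\in\S(I)\cap I^{\mathsf{c}}$. When $x_i=1$, condition (i) gives $a_i\geq k$, so the minimum value $a_i$ of the sum already meets the threshold and $N^k_i(y)=1=x_i$ throughout $x[I]$; when $x_i=0$, condition (ii) gives $a_i+|\S(i)\cap I|<k$, so the maximum value of the sum stays below the threshold and $N^k_i(y)=0=x_i$ throughout. For the necessity direction I would first use that $N^k$ is a strict threshold network with symmetric interaction matrix and non-negative diagonal, so all its attractors are fixed points and every trap space contains one; replacing the representative, I may thus assume $x$ is a fixed point. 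Then, for $i\in\S(I)\cap I^{\mathsf{c}}$, requiring $N^k_i$ to be constant on $x[I]$ forces, in the case $x_i=1$, that even the smallest attainable value $a_i$ reaches $k$ (yielding (i)), and in the case $x_i=0$, that even the largest attainable value $a_i+|\S(i)\cap I|$ falls short of $k$ (yielding (ii)).

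The argument is essentially routine once the interval-of-attainable-values observation is in place; I do not anticipate a genuine obstacle beyond keeping the two cases $x_i=1$ and $x_i=0$ and the fixed/free split of the neighbourhood straight. The one point deserving explicit care is the bookkeeping behind the union in (ii): one must note that $\{j\in\S(i)\cap I^{\mathsf{c}}\mid x_j=0\}$ and $\S(i)\cap I$ are disjoint, so the cardinality of their union equals $a_i+|\S(i)\cap I|$, which is exactly the extreme value computed above and makes the stated condition coincide with the inequality $a_i+|\S(i)\cap I|<k$.
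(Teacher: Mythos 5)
Your proposal is correct and follows essentially the same route as the paper's proof: verify sufficiency by bounding $\sum_{j\in\S(i)}\bar y_j$ via the fixed part of the neighbourhood, and prove necessity by invoking the threshold-network result to assume $x$ is a fixed point and then evaluating $N^k_i$ at the extremal states of $x[I]$ (all free neighbours high when $x_i=1$, all low when $x_i=0$), which are exactly the endpoints of your interval of attainable values. The explicit disjointness remark for condition (ii) is a small but welcome clarification that the paper leaves implicit.
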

\begin{proof}
  Consider a subspace $x[I]$ as in the statement, and take $y\in x[I]$.
  We need to show that all successors of $y$ in the asynchronous state transition graph are in $x[I]$,
  or, in other words, $N^k_i(y)=y_i$ for all $i\notin I$.
  If $\S(i)\cap I=\emptyset$, then the conclusion follows from the fact that $x$ is a fixed point.
  If $i\in\S(I)$, and $y_i=1$, then $N^k_i(y)=1$ follows from $(i)$,
  and if $y_i=0$, $N^k_i(y)=0$ follows from $(ii)$.

  Vice versa, consider a trap space $x[I]$.
  Since we must have $N^k_i(x)=x_i$ for all $i\notin I$, and all attractors of $N^k$ are fixed points,
  we can assume that $x$ is a fixed point.
  Take $i\in\S(I)\cap I^{\mathsf{c}}$ with $x_i=1$, and $y\in x[I]$ such that $y_j=1$ for all $j\in\S(i)\cap I$.
  Then $1=x_i=N^k_i(y)$ shows point $(i)$.
  If $i\in\S(I)\cap I^{\mathsf{c}}$ is such that $x_i=0$, taking $y\in x[I]$ such that $y_j=0$ for all $j\in\S(i)\cap I$
  gives point $(ii)$.
\end{proof}

\begin{proposition}\label{prop:trap-spaces-Fk}
The trap spaces for $F^k$ are of the form $x[I]$, with $x$ fixed point,
$I_N\subseteq I_D$, and, for $i\in I^{\mathsf{c}}_N$:
\begin{itemize}
\item[(i)] if $x_i=1$, the set $\{j\in\S(i)\cap I^{\mathsf{c}}_D\ | \ x_{i+L}=1\}$
  has cardinality greater or equal to $k$;
\item[(ii)] if $x_i=0$, the set $\{j\in\S(i)\cap I^{\mathsf{c}}_D\ | \ x_{i+L}=1\}\cup(\S(i)\cap I_D)$
  has cardinality smaller than $k$.
\end{itemize}
\end{proposition}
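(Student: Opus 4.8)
The plan is to prove both inclusions directly, splitting the $2L$ update rules into a \emph{Delta layer} and a \emph{Notch layer}, exactly as in the single-threshold arguments of \cref{thm:trapspaces} and \cref{prop:trap-spaces-Nk}. Throughout I write $x=(n,d)$ and $I=I_N\cup(I_D+L)$, and I first reduce to the case where $x$ is a fixed point. Arguing as for $N^k$ above (the matrix $A$ is symmetric with nonnegative diagonal, so $AD_{N^k}$ has no cyclic paths) together with \cref{thm:reduction}~$(ii)$ — from any $(n,d)$ one reaches $(n,\bar n)$ by updating Delta, then descends to a fixed point of $N^k$ and lifts the descending path to $AD_{F^k}$, exactly as in \cref{thm:fpts_only} — every state of $F^k$ reaches a fixed point. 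Hence every trap space contains a fixed point; since $x[I]$ depends only on the coordinates $x_m$ with $m\in I^{\mathsf{c}}$, I may replace $x$ by such a fixed point without changing the subspace, and then $x_{i+L}=\bar x_i$ for all $i$.

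Next I handle the Delta layer, which forces $I_N\subseteq I_D$. For $i\notin I_D$ the coordinate $i+L$ is fixed in $x[I]$, so trap invariance requires $F^k_{i+L}(y)=\bar y_i$ to equal $x_{i+L}$ for every $y\in x[I]$; were $i\in I_N$, the value $\bar y_i$ would take both values, a contradiction, so $i\notin I_N$. Conversely, if $I_N\subseteq I_D$, then for $i\notin I_D$ we have $i\notin I_N$, whence $F^k_{i+L}(y)=\bar x_i=x_{i+L}$ because $x$ is a fixed point. This disposes of every Delta coordinate outside $I$.

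The Notch layer supplies conditions $(i)$ and $(ii)$. For $i\in I_N^{\mathsf{c}}$ the value $y_i=x_i$ is fixed, and $F^k_i(y)=1$ if and only if $\sum_{j\in\S(i)}y_{j+L}\ge k$, a quantity monotone in the Delta values and depending only on them. For the \emph{if} direction: when $x_i=1$, the sum is minimized over $y\in x[I]$ by setting every free Delta neighbour $j\in\S(i)\cap I_D$ to $0$, leaving exactly $\lvert\{j\in\S(i)\cap I_D^{\mathsf{c}}: x_{j+L}=1\}\rvert$, so $(i)$ keeps the sum $\ge k$; when $x_i=0$, the sum is maximized by setting every free Delta neighbour to $1$, giving $\lvert\{j\in\S(i)\cap I_D^{\mathsf{c}}: x_{j+L}=1\}\rvert+\lvert\S(i)\cap I_D\rvert$, so $(ii)$ keeps it $<k$. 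For the \emph{only if} direction I evaluate $F^k_i$ at precisely these two extremal states, which lie in $x[I]$ (the free Notch coordinates do not affect $F^k_i$), and read off the two cardinality bounds. When $\S(i)\cap I_D=\emptyset$ both inequalities are automatic from $x$ being a fixed point, so stating the conditions for all $i\in I_N^{\mathsf{c}}$ costs nothing.

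I expect the only genuinely new ingredient, compared with the purely Notch-side counting already carried out in \cref{prop:trap-spaces-Nk}, to be the coupling between the two layers: establishing $I_N\subseteq I_D$ from the Delta equations and exploiting the fixed-point identity $x_{i+L}=\bar x_i$. The step needing a little care is verifying that the two extremal witness states actually lie in $x[I]$ and that their Notch evaluations realize the claimed minimum and maximum of the threshold sum; once that is set up, both directions close by the same monotonicity computation.
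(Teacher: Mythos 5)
Your proof is correct and follows essentially the same route as the paper's: reduce to the case where $x$ is a fixed point via the ``all attractors are fixed points'' argument, extract $I_N\subseteq I_D$ from the Delta equations, and obtain $(i)$ and $(ii)$ by evaluating $F^k_i$ at the two extremal states of $x[I]$ (free Delta neighbours all $0$, respectively all $1$). The only difference is one of detail: you spell out the monotonicity of the threshold sum for the forward direction and note that the conditions are automatic when $\S(i)\cap I_D=\emptyset$, points the paper's proof leaves implicit.
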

\begin{proof}
  Consider a subspace $x[I]$ as in the statement, and take $y\in x[I]$.
  Then for $i\in I^{\mathsf{c}}_N$ we have $y_i=x_i$, and in both cases we have $F^k_i(y)=F^k_i(x)=x_i$.
  For $i\in I^{\mathsf{c}}_D$, $y_{i+L}=x_{i+L}$ and since $x$ is fixed, $F^k_{i+L}(y)=F^k_{i+L}(x)=y_{i+L}$.

  Vice versa, consider a trap space $x[I]$.
  The containment $I_N\subseteq I_D$ follows from the definition of $F$.
  Since we must have $F^k_i(x)=x_i$ for all $i\notin I$, and all attractors of $F^k$ are fixed points,
  we can assume that $x$ is a fixed point.
  Take $i\in\S(I)\cap I^{\mathsf{c}}_N$ with $x_i=1$, and $y\in x[I]$ such that $y_{j+L}=0$ for all $j\in\S(i)\cap I_D$.
  Then $1=x_i=F^k_i(y)$ shows point $(i)$.
  If $i\in\S(I)\cap I^{\mathsf{c}}_N$ is such that $x_i=0$, taking $y\in x[I]$ such that $y_{j+L}=1$ for all $j\in\S(i)\cap I_D$
  gives point $(ii)$.
\end{proof}

Recall that for the case $k=1$ we were able to describe the minimal trap space containing
a fixed point and some of its adjacent states (\cref{prop:robustnessNany,prop:robustnessFany}),
and to show that changes in a pattern can not propagate to cells at distance greater than $2$.
The following example shows that a similar result does not hold for $k>1$.
The characterisations of the basins of attraction for $N$ and $F$ also do not immediately
generalise to $N^k$ and $F^k$, and are left as open problems.

\begin{example}\label{ex:robNk}
  For $N^k$ (and $F^k$) with $k=2$, one can construct a network such that
  a change in one cell can cause repercussions at arbitrary distance.
  Consider the example in~\cref{fig:robNk} left.
  By changing the low level (in black) to high level (in white) in the cell with a dashed border,
  the pattern on the right can be reached.
  The network can be made as large as wanted.
  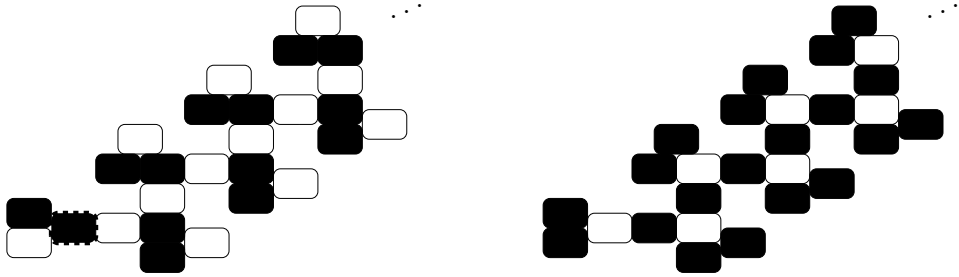
\begin{figure}
    \centering
    \begin{minipage}{7.0cm}
      \resizebox{6.0cm}{!}{
      \newcommand*{\bs}{0.9}
      \newcommand*{\hg}{0.6}
      \newcommand*{\rd}{2.0}
      \begin{tikzpicture}[rounded corners]
        \draw[rounded corners,fill=black] (-\bs,0.5*\hg) rectangle (0,1.5*\hg);
        \draw[rounded corners] (-\bs,-0.5*\hg) rectangle (0,0.5*\hg);
        \draw[rounded corners,fill=black,dashed,line width=1mm] (0,0) rectangle (\bs,\hg);

        \draw[rounded corners] (\bs,0) rectangle (2*\bs,\hg);
        \draw[rounded corners,fill=black] (2*\bs,0) rectangle (3*\bs,\hg);
        \draw[rounded corners] (3*\bs,-0.5*\hg) rectangle (4*\bs,0.5*\hg);
        \draw[rounded corners,fill=black] (2*\bs,0) rectangle (3*\bs,-\hg);
        \draw[rounded corners] (2*\bs,\hg) rectangle (3*\bs,2*\hg);
        \draw[rounded corners,fill=black] (2*\bs,2*\hg) rectangle (3*\bs,3*\hg);
        \draw[rounded corners] (1.5*\bs,3*\hg) rectangle (2.5*\bs,4*\hg);
        \draw[rounded corners,fill=black] (\bs,2*\hg) rectangle (2*\bs,3*\hg);

        \draw[rounded corners] (3*\bs,2*\hg) rectangle (4*\bs,3*\hg);
        \draw[rounded corners,fill=black] (4*\bs,2*\hg) rectangle (5*\bs,3*\hg);
        \draw[rounded corners] (5*\bs,1.5*\hg) rectangle (6*\bs,2.5*\hg);
        \draw[rounded corners,fill=black] (4*\bs,2*\hg) rectangle (5*\bs,1*\hg);
        \draw[rounded corners] (4*\bs,3*\hg) rectangle (5*\bs,4*\hg);
        \draw[rounded corners,fill=black] (4*\bs,4*\hg) rectangle (5*\bs,5*\hg);
        \draw[rounded corners] (3.5*\bs,5*\hg) rectangle (4.5*\bs,6*\hg);
        \draw[rounded corners,fill=black] (3*\bs,4*\hg) rectangle (4*\bs,5*\hg);

        \draw[rounded corners] (5*\bs,4*\hg) rectangle (6*\bs,5*\hg);
        \draw[rounded corners,fill=black] (6*\bs,4*\hg) rectangle (7*\bs,5*\hg);
        \draw[rounded corners] (7*\bs,3.5*\hg) rectangle (8*\bs,4.5*\hg);
        \draw[rounded corners,fill=black] (6*\bs,4*\hg) rectangle (7*\bs,3*\hg);
        \draw[rounded corners] (6*\bs,5*\hg) rectangle (7*\bs,6*\hg);
        \draw[rounded corners,fill=black] (6*\bs,6*\hg) rectangle (7*\bs,7*\hg);
        \draw[rounded corners] (5.5*\bs,7*\hg) rectangle (6.5*\bs,8*\hg);
        \draw[rounded corners,fill=black] (5*\bs,6*\hg) rectangle (6*\bs,7*\hg);

        \node at (8.0*\bs,8.0*\hg) {\huge \reflectbox{$\ddots$}};
      \end{tikzpicture}}
    \end{minipage}
    \begin{minipage}{6.0cm}
      \resizebox{6.0cm}{!}{
      \newcommand*{\bs}{0.9}
      \newcommand*{\hg}{0.6}
      \newcommand*{\rd}{2.0}
      \begin{tikzpicture}[rounded corners]
        \draw[rounded corners,fill=black] (-\bs,0.5*\hg) rectangle (0,1.5*\hg);
        \draw[rounded corners,fill=black] (-\bs,-0.5*\hg) rectangle (0,0.5*\hg);
        \draw[rounded corners] (0,0) rectangle (\bs,\hg);

        \draw[rounded corners,fill=black] (\bs,0) rectangle (2*\bs,\hg);
        \draw[rounded corners] (2*\bs,0) rectangle (3*\bs,\hg);
        \draw[rounded corners,fill=black] (3*\bs,-0.5*\hg) rectangle (4*\bs,0.5*\hg);
        \draw[rounded corners,fill=black] (2*\bs,0) rectangle (3*\bs,-\hg);
        \draw[rounded corners,fill=black] (2*\bs,\hg) rectangle (3*\bs,2*\hg);
        \draw[rounded corners] (2*\bs,2*\hg) rectangle (3*\bs,3*\hg);
        \draw[rounded corners,fill=black] (1.5*\bs,3*\hg) rectangle (2.5*\bs,4*\hg);
        \draw[rounded corners,fill=black] (\bs,2*\hg) rectangle (2*\bs,3*\hg);

        \draw[rounded corners,fill=black] (3*\bs,2*\hg) rectangle (4*\bs,3*\hg);
        \draw[rounded corners] (4*\bs,2*\hg) rectangle (5*\bs,3*\hg);
        \draw[rounded corners,fill=black] (5*\bs,1.5*\hg) rectangle (6*\bs,2.5*\hg);
        \draw[rounded corners,fill=black] (4*\bs,2*\hg) rectangle (5*\bs,1*\hg);
        \draw[rounded corners,fill=black] (4*\bs,3*\hg) rectangle (5*\bs,4*\hg);
        \draw[rounded corners] (4*\bs,4*\hg) rectangle (5*\bs,5*\hg);
        \draw[rounded corners,fill=black] (3.5*\bs,5*\hg) rectangle (4.5*\bs,6*\hg);
        \draw[rounded corners,fill=black] (3*\bs,4*\hg) rectangle (4*\bs,5*\hg);

        \draw[rounded corners,fill=black] (5*\bs,4*\hg) rectangle (6*\bs,5*\hg);
        \draw[rounded corners] (6*\bs,4*\hg) rectangle (7*\bs,5*\hg);
        \draw[rounded corners,fill=black] (7*\bs,3.5*\hg) rectangle (8*\bs,4.5*\hg);
        \draw[rounded corners,fill=black] (6*\bs,4*\hg) rectangle (7*\bs,3*\hg);
        \draw[rounded corners,fill=black] (6*\bs,5*\hg) rectangle (7*\bs,6*\hg);
        \draw[rounded corners] (6*\bs,6*\hg) rectangle (7*\bs,7*\hg);
        \draw[rounded corners,fill=black] (5.5*\bs,7*\hg) rectangle (6.5*\bs,8*\hg);
        \draw[rounded corners,fill=black] (5*\bs,6*\hg) rectangle (6*\bs,7*\hg);

        \node at (8.0*\bs,8.0*\hg) {\huge \reflectbox{$\ddots$}};
      \end{tikzpicture}}
    \end{minipage}
    \caption{Example showing the propagation of a pattern perturbation in $AD_{N^k}$ for $k=2$.
      White cells have high levels of Notch.
      The pattern on the right can be reached from the state obtained from the
      pattern on the left when changing the level of Notch in the cell
      with a dashed border.}\label{fig:robNk}
  \end{figure}
\end{example}

\section{Conclusion and prospects}\label{sec:conclusion}

In this work we gave some characterisations of the dynamics of simple Boolean models
of the Delta-Notch system, complementing existing computationally-costly
algorithmic analyses (e.g.~\cite{mendes2013composition,varela2018stable}).
We considered models with two variables per cell, and reduced models with only one
variable per cell.
Results on Boolean threshold networks~\cite{goles1985decreasing} imply that all attractors are fixed points,
and that the asynchronous dynamics of reduced models do not contain any cyclic path.
In addition, the identification of the fixed points can be traced back to determining the minimal vertex covers
(or the maximal independent vertex sets) of the graph representing the neighbour relation between cells \cite{veliz2012computation}.
The emerging patterns are consistent with those obtained in the spatially-discrete continuous model of~\cite{collier1996pattern}.
We gave a characterisation of the trap spaces~(\cref{thm:trapspaces,thm:trap-spaces-N}) and of the patterns that can
be reached from a given state~(\cref{thm:all-fixed,thm:reach-N}) for both the one- and two-variable models.
In particular, we saw that all patterns can be obtained from homogeneous starting points~(\cref{thm:homogeneous,thm:homogeneous-N}).
For the two-variable models, all the fixed points in the minimal trap space
containing the initial state are reachable, a property that does not hold for the one-variable models.
The effects of cell perturbations on patterns were discussed in~\cref{sec:robustness}:
changes in patterns can only propagate to cells at maximum distance $2$.
Finally, we considered a generalisation of the models (\cref{sec:generalisation}), where Notch is assumed to be
activated when a certain minimum amount of neighbour cells with high levels of Delta is reached, as in~\cite{varela2018epilog}.
Although results on the asymptotic behaviour extend to these models, we showed with an example~(\ref{ex:robNk})
that the characterisation of the minimal trap spaces does not in general extend.
We leave as open question the problem of determining if some results on the reachability
and trap spaces can be extended to these models under some assumptions on the underlying graph.

Our results concern the structure of the dynamics and do not allow for quantitative results
regarding, for instance, the distribution of Notch obtained with trajectories
starting from a given initial condition, as considered, for example, in~\cite{varela2018epilog}.
The study of the asynchronous dynamics as a Markov chain is used to quantify simulation results of Boolean models (\cite{stoll2017maboss})
and could help with the interpretation of simulation results.
The model presented here provides a basis for the exploration of networks with more elaborate cell modules,
and for the investigation of the role of the simple mechanism we considered in the generation of
spatial inhomogeneity in more complex Boolean systems.

\section*{Acknowledgements}
The authors are grateful to C. Chaouiya and E. Remy for helpful discussions,
and to the reviewers for their useful comments.

\section*{Funding}
Funded by the Volkswagen Stiftung (Volkswagen Foundation) under the
funding initiative Life? - A fresh scientific approach to the basic
principles of life (project ID: 93063).

\bibliographystyle{plainnat}
\bibliography{biblio}

\begin{thebibliography}{20}
\providecommand{\natexlab}[1]{#1}
\providecommand{\url}[1]{\texttt{#1}}
\expandafter\ifx\csname urlstyle\endcsname\relax
  \providecommand{\doi}[1]{doi: #1}\else
  \providecommand{\doi}{doi: \begingroup \urlstyle{rm}\Url}\fi

\bibitem[Albert and Othmer(2003)]{albert2003topology}
R{\'e}ka Albert and Hans~G Othmer.
\newblock The topology of the regulatory interactions predicts the expression
  pattern of the segment polarity genes in drosophila melanogaster.
\newblock \emph{Journal of theoretical biology}, 223\penalty0 (1):\penalty0
  1--18, 2003.

\bibitem[Aracena et~al.(2004)Aracena, Demongeot, and Goles]{aracena2004fixed}
Julio Aracena, Jacques Demongeot, and Eric Goles.
\newblock Fixed points and maximal independent sets in and--or networks.
\newblock \emph{Discrete Applied Mathematics}, 138\penalty0 (3):\penalty0
  277--288, 2004.

\bibitem[Aracena et~al.(2014)Aracena, Richard, and Salinas]{aracena2014maximum}
Julio Aracena, Adrien Richard, and Lilian Salinas.
\newblock Maximum number of fixed points in and--or--not networks.
\newblock \emph{Journal of Computer and System Sciences}, 80\penalty0
  (7):\penalty0 1175--1190, 2014.

\bibitem[Aracena et~al.(2017)Aracena, Richard, and Salinas]{aracena2017fixed}
Julio Aracena, Adrien Richard, and Lilian Salinas.
\newblock Fixed points in conjunctive networks and maximal independent sets in
  graph contractions.
\newblock \emph{Journal of Computer and System Sciences}, 88:\penalty0
  145--163, 2017.

\bibitem[Collier et~al.(1996)Collier, Monk, Maini, and
  Lewis]{collier1996pattern}
Joanne~R Collier, Nicholas~AM Monk, Philip~K Maini, and Julian~H Lewis.
\newblock Pattern formation by lateral inhibition with feedback: a mathematical
  model of delta-notch intercellular signalling.
\newblock \emph{Journal of theoretical Biology}, 183\penalty0 (4):\penalty0
  429--446, 1996.

\bibitem[Goles-Chacc et~al.(1985)Goles-Chacc, Fogelman-Souli{\'e}, and
  Pellegrin]{goles1985decreasing}
Eric Goles-Chacc, Fran{\c{c}}oise Fogelman-Souli{\'e}, and Didier Pellegrin.
\newblock Decreasing energy functions as a tool for studying threshold
  networks.
\newblock \emph{Discrete Applied Mathematics}, 12\penalty0 (3):\penalty0
  261--277, 1985.

\bibitem[G\"{o}ssler(2011)]{goessler2011component}
Gregor G\"{o}ssler.
\newblock Component-based modeling and reachability analysis of genetic
  networks.
\newblock \emph{IEEE/ACM Transactions on Computational Biology and
  Bioinformatics (TCBB)}, 8\penalty0 (3):\penalty0 672--682, 2011.

\bibitem[Mendes et~al.(2013)Mendes, Lang, Le~Cornec, Mateescu, Batt, and
  Chaouiya]{mendes2013composition}
Nuno~D Mendes, Fr{\'e}d{\'e}ric Lang, Yves-Stan Le~Cornec, Radu Mateescu,
  Gregory Batt, and Claudine Chaouiya.
\newblock Composition and abstraction of logical regulatory modules:
  application to multicellular systems.
\newblock \emph{Bioinformatics}, 29\penalty0 (6):\penalty0 749--757, 2013.

\bibitem[Naldi et~al.(2009)Naldi, Remy, Thieffry, and
  Chaouiya]{naldi2009reduction}
Aur{\'e}lien Naldi, Elisabeth Remy, Denis Thieffry, and Claudine Chaouiya.
\newblock A reduction of logical regulatory graphs preserving essential
  dynamical properties.
\newblock In \emph{International Conference on Computational Methods in Systems
  Biology}, pages 266--280. Springer, 2009.

\bibitem[Paulev{\'e} and Richard(2012)]{pauleve2012static}
Lo{\"\i}c Paulev{\'e} and Adrien Richard.
\newblock Static analysis of {B}oolean networks based on interaction graphs: a
  survey.
\newblock \emph{Electronic Notes in Theoretical Computer Science},
  284:\penalty0 93--104, 2012.

\bibitem[Siebert(2009)]{siebert2009deriving}
Heike Siebert.
\newblock Deriving behavior of {B}oolean bioregulatory networks from subnetwork
  dynamics.
\newblock \emph{Mathematics in Computer Science}, 2\penalty0 (3):\penalty0
  421--442, 2009.

\bibitem[Sternberg(1993)]{sternberg1993falling}
Paul~W Sternberg.
\newblock Falling off the knife edge.
\newblock \emph{Current Biology}, 3\penalty0 (11):\penalty0 763--765, 1993.

\bibitem[Stoll et~al.(2017)Stoll, Caron, Viara, Dugourd, Zinovyev, Naldi,
  Kroemer, Barillot, and Calzone]{stoll2017maboss}
Gautier Stoll, Barth{\'e}l{\'e}my Caron, Eric Viara, Aur{\'e}lien Dugourd,
  Andrei Zinovyev, Aur{\'e}lien Naldi, Guido Kroemer, Emmanuel Barillot, and
  Laurence Calzone.
\newblock Maboss 2.0: an environment for stochastic {B}oolean modeling.
\newblock \emph{Bioinformatics}, 33\penalty0 (14):\penalty0 2226--2228, 2017.

\bibitem[Thomas and d'Ari(1990)]{thomas1990biological}
Ren{\'e} Thomas and Richard d'Ari.
\newblock \emph{Biological feedback}.
\newblock CRC press, 1990.

\bibitem[Thomas and Kaufman(2001)]{thomas2001multistationarity}
Ren{\'e} Thomas and Marcelle Kaufman.
\newblock Multistationarity, the basis of cell differentiation and memory. ii.
  logical analysis of regulatory networks in terms of feedback circuits.
\newblock \emph{Chaos: An Interdisciplinary Journal of Nonlinear Science},
  11\penalty0 (1):\penalty0 180--195, 2001.

\bibitem[Varela et~al.(2018{\natexlab{a}})Varela, Lynce, Manquinho, Chaouiya,
  and Monteiro]{varela2018stable}
Pedro~L Varela, In{\^e}s Lynce, Vasco Manquinho, Claudine Chaouiya, and Pedro~T
  Monteiro.
\newblock Stable states of {B}oolean regulatory networks composed over
  hexagonal grids.
\newblock \emph{Electronic Notes in Theoretical Computer Science},
  335:\penalty0 113--130, 2018{\natexlab{a}}.

\bibitem[Varela et~al.(2018{\natexlab{b}})Varela, Ramos, Monteiro, and
  Chaouiya]{varela2018epilog}
Pedro~L Varela, Camila~V Ramos, Pedro~T Monteiro, and Claudine Chaouiya.
\newblock Epilog: A software for the logical modelling of epithelial dynamics.
\newblock \emph{F1000Research}, 7, 2018{\natexlab{b}}.

\bibitem[Veliz-Cuba and Laubenbacher(2012)]{veliz2012computation}
Alan Veliz-Cuba and Reinhard Laubenbacher.
\newblock On the computation of fixed points in boolean networks.
\newblock \emph{Journal of Applied Mathematics and Computing}, 39\penalty0
  (1-2):\penalty0 145--153, 2012.

\bibitem[Webb and Owen(2004)]{webb2004oscillations}
Steven~D Webb and Markus~R Owen.
\newblock Oscillations and patterns in spatially discrete models for
  developmental intercellular signalling.
\newblock \emph{Journal of mathematical biology}, 48\penalty0 (4):\penalty0
  444--476, 2004.

\bibitem[West(2001)]{west2001introduction}
Douglas~Brent West.
\newblock \emph{Introduction to graph theory}, volume~2.
\newblock Prentice hall Upper Saddle River, 2001.

\end{thebibliography}

\end{document}